\renewcommand{\Im}{\mathrm{Im}}
\renewcommand{\Re}{\mathrm{Re}}
          \newtheorem{thm}{Theorem}[section]
          \newtheorem{proposition}[thm]{Proposition}
          \newtheorem{lemma}[thm]{Lemma}
          \newtheorem{corollary}[thm]{Corollary}
          \theoremstyle{definition}
          \newtheorem{remark}[thm]{Remark}
\newcommand{\hfree}{H_{\mathrm{free}}}
\numberwithin{equation}{section}
\newcommand{\bdm}{\begin{displaymath}}
\newcommand{\edm}{\end{displaymath}}
\newcommand{\bdn}{\begin{eqnarray}}
\newcommand{\edn}{\end{eqnarray}}
\newcommand{\bay}{\begin{array}{c}}
\newcommand{\eay}{\end{array}}
\newcommand{\ben}{\begin{enumerate}}
\newcommand{\een}{\end{enumerate}}
\newcommand{\beq}{\begin{equation}}
\newcommand{\eeq}{\end{equation}}
\newcommand{\beqn}{\begin{eqnarray}}
\newcommand{\eeqn}{\end{eqnarray}}
\newcommand{\bml}[1]{\begin{multline} #1 \end{multline}}
\newcommand{\bmln}[1]{\begin{multline*} #1 \end{multline*}}
\renewcommand{\leq}{\leqslant}
\renewcommand{\geq}{\geqslant}
\newcommand{\disp}{\displaystyle}
\newcommand{\tx}{\textstyle}
\newcommand{\supp}{\mathrm{supp}}
\newcommand{\one}{\mathds{1}}
\newcommand{\lf}{\left}
\newcommand{\ri}{\right}
\newcommand{\bra}[1]{\lf\langle #1\ri|}
\newcommand{\ket}[1]{\lf|#1 \ri\rangle}
\newcommand{\braket}[2]{\lf\langle #1|#2 \ri\rangle}
\newcommand{\braketr}[2]{\lf\langle #1\lf|#2\ri. \ri\rangle}
\newcommand{\braketl}[2]{\lf.\lf\langle #1\ri|#2 \ri\rangle}
\newcommand{\mean}[3]{\bra{#1}#2\ket{#3}}
\newcommand{\meanlr}[3]{\lf\langle #1\lf|#2\ri|#3\ri\rangle}
\newcommand{\xv}{\mathbf{x}}
\newcommand{\xvp}{\mathbf{x}^{\prime}}
\newcommand{\yv}{\mathbf{y}}
\newcommand{\kv}{\mathbf{k}}
\newcommand{\kvp}{\mathbf{k}^{\prime}}
\newcommand{\gv}{\mathbf{g}}
\newcommand{\nv}{\mathbf{n}}
\newcommand{\mv}{\mathbf{m}}
\newcommand{\diff}{\mathrm{d}}
\newcommand{\eps}{\varepsilon}
\newcommand{\ceps}{c_{\eps}}
\newcommand{\dist}{\mathrm{dist}}
\newcommand{\hamf}{H_{\mathrm{free}}}
\newcommand{\Z}{\mathbb{Z}}
\newcommand{\R}{\mathbb{R}}
\newcommand{\M}{\mathscr{M}}
\newcommand{\Mfin}{\mathscr{M}_{\mathrm{fin}}}
\newcommand{\N}{\mathbb{N}}
\newcommand{\D}{\mathcal{D}}
\newcommand{\dom}{\mathscr{D}}
\newcommand{\T}{\mathcal{T}}
\newcommand{\HH}{\mathcal{H}}
\newcommand{\HHe}{\mathcal{H}_{\mathrm{eff}}}
\newcommand{\KKe}{\mathcal{K}_{\mathrm{eff}}}
\newcommand{\Q}{\mathcal{Q}}
\newcommand{\QQ}{\mathcal{Q}}
\newcommand{\hh}{\mathfrak{H}}
\newcommand{\hilb}{\mathscr{H}}
\newcommand{\WW}{\mathcal{W}}
\newcommand{\fock}{\Gamma_{\mathrm{sym}}}
\begin{document}
\bibliographystyle{amsalpha}

\title[Effective Potentials in the Quasi-Classical Limit]{Effective Potentials Generated by Field Interaction in the Quasi-Classical Limit}

\author{Michele Correggi}\address{Dipartimento di Matematica ``G. Castelnuovo'', Universit\`{a} degli Studi di Roma ``La Sapienza''; P.le Aldo Moro 5, 00185, Roma, Italy.}\email{michele.correggi@gmail.com}\urladdr{http://www1.mat.uniroma1.it/people/correggi/}
%%%%%%%%%%%%%%%%%%%%%%%%%%%%%%%%%%%%%%%%%%%%%%%%%%%%%%%%%%%%%%%%%%%%%%%%%%%%%%%%%%%%%%%%%%%
\author{Marco Falconi}\address{Dipartimento di Matematica e Fisica, Università degli Studi Roma Tre; Largo San Leonardo
  Murialdo 1, Palazzo C, 00146, Roma, Italy.}\email{mfalconi@mat.uniroma3.it}\urladdr{http://ricerca.mat.uniroma3.it/users/mfalconi/}
%%%%%%%%%%%%%%%%%%%%%%%%%%%%%%%%%%%%%%%%%%%%%%%%%%%%%%%%%%%%%%%%%%%%%%%%%%%%%%%%%%%%%%%%%%%
%%%%%%%%%%%%%%%%%%%%%%%%%%%%%%%%%%%%%%%%%%%%%%%%%%%%%%%%%%%%%%%%%%%%%%%%%%%%%%%%%%%%%%%%%%%

\begin{abstract}
We study the {\it quasi-classical limit} of a quantum system composed of finitely many non-relativistic particles coupled to a quantized field in Nelson-type models. We prove that, as the field becomes classical and the corresponding degrees of freedom are traced out, the effective Hamiltonian of the particles converges in resolvent sense to a self-adjoint Schr\"{o}dinger operator with an additional potential, depending on the state of the field. Moreover, we explicitly derive the expression of such a potential for a large class of field states and show that, for certain special sequences of states, the effective potential is trapping. In addition, we prove convergence of the ground state energy of the full system to a suitable effective variational problem involving the classical state of the field.
\end{abstract}

\keywords{}\subjclass[2010]{}

\date{\today}
% \begin{abstract}
 %
% \end{abstract}
\maketitle
%\onehalfspacing

\tableofcontents

\section{Introduction}
\label{sec:introduction}

The interaction between particles and radiation, either generated by an electromagnetic field or a phonon
field in a crystal, plays a key role in several phenomena in condensed matter physics \cite{Coh1998}. In
several experiments, however, the presence of a quantum field is even more fundamental, being the core of the
experimental apparatus, {\it e.g.}, acting as a trap to keep the particle confined to a certain region. This
is the typical case of magneto-optical traps, whose relevance goes well beyond low temperature physics
\cite{Ash1997}: such type of confinements of atomic beams \cite{DalCoh2001} has been developed mostly in the
investigation of low temperature behavior of neutral atomic clouds and was involved in one of the first
realizations of Bose-Einstein condensation \cite{Ket1995}. Similar techniques have been used to generate
optical lattices \cite{BDZ2008}, where particles are pinned to lattice sites and can only hop from one site to
another, thus generating a sort of discrete model on the lattice. Concretely this is achieved by superposing
laser beams on a lattice with suitable resonating frequencies.  More recently the same set up has been even
used to generate artificial gauge fields for the atoms \cite{Dal2016}.

The theoretical models conventionally used to describe the atomic systems discussed above (see, {\it e.g.},
\cite{PetSmi2008} and references therein) do not involve, however, the direct interaction between the atoms
or the particles and the quantized radiation field, but rather take the simplified point of view of
approximating such an interaction with effective potentials, {\it i.e.}, of considering directly
Schr\"{o}dinger operators of the form
\beq
	\label{eq:effective so}
	\sum_{j =1}^N \lf( - \Delta_j + V_{\mathrm{eff}}(\xv_j) \ri) + U(\xv_1, \ldots, \xv_N),
\eeq
where $ N $ is the number of quantum particles and $ U $ their interaction potential, {\it e.g.}, Coulomb interaction. The explicit form of the effective potential $ V_{\mathrm{eff}} $ is then tuned appropriately for the specific system under investigation and can range from confining potentials of the form $ |\xv|^s $, $ s \geq 2 $, in the case of magneto-optical traps, to periodic oscillating potentials in the case of optical lattices. For the sake of simplicity we are going to assume that the potential $ U $ satisfies the following assumptions:
\beq
	\label{eq:V}\tag{{\bf A}1}
	U \in L^2_{\mathrm{loc}}\big(\R^{dN};\R^+\big) + K_{\ll}\big(\R^{dN}\big),
\eeq
where $  K_{\ll} $ denotes the set of potentials which are Kato-infinitesimally small w.r.t. the free Laplacian.

The connection between the fundamental Hamiltonian describing quantum particles interacting with a radiation field and the effective model \eqref{eq:effective so} has not attracted much attention, at least in the physics literature, and the justification of \eqref{eq:effective so} is mostly phenomenological. There is however a regime in which such a connection can be put on rigorous grounds and the approximation behind \eqref{eq:effective so} made explicit. This is the semiclassical regime of large number of field excitations (see below), when the quantum nature of field (bosonic) carriers can be neglected and the corresponding degrees of freedom approximated by their classical counterparts. Notice that, in the physical picture we are describing, the quantum nature of the particle system is preserved and only the field is assumed to behave almost classically. We are going to refer to this limit as {\it quasi-classical} limit in order to distinguish it from the usual semiclassical limit. 

The semiclassical approximation of quantum mechanics or Schr\"{o}dinger equation is indeed a widely studied
topic in mathematical physics and we refer to the monographs \cite{Hel1988,Zwo2012} and references therein for
an extensive list of results. On the other hand, the specific case of the quasi-classical limit described
above was studied, to the best of our knowledge, only in \cite{MR2205462}, which focuses on the partially
classical limit of the dynamics in the Nelson model (see below for further comments about this result). From
the technical point of view, the key difference with the conventional results about semiclassics is that the
authors of \cite{MR2205462} have to deal with a classical limit $ \hbar \to 0 $ in an infinite dimensional
Hilbert space (the Fock space of quantized radiation). At that time, only limited mathematical tools were
available to study such a question, whereas more recently semiclassical analysis in infinite dimensions has
been developed for bosonic systems, see, \emph{e.g.}, \cite{ammari:nier:2008,
  MR2513969,Ammari:2014aa,Falconi:2016ab} (also \cite{AJN2015,AN2015}, for a different approach to Weyl
quantization in Wiener spaces). These are in fact the very same techniques we are going to use in this work.

The problem of deriving effective models for the quantum dynamics in a suitable semiclassical limit is clearly
not new in the mathematical physics literature, and we list here some works which have some similarities with
our approach. For instance, in \cite{MR2399613, MR3057191} (see also references therein) the ``opposite''
partial limit of classical particles coupled to a quantized field has been studied. More generally, a regime
in which there emerges a behavior similar to the quasi-classical limit is the adiabatic decoupling generated
by a separation between fast and slow degrees of freedom \cite{PST2003,Teu2003,PST2007}, and also the
non-relativistic limit of electrons coupled to a quantum field \cite{MR1075749,MR1235953,MR1638093}. In spite
of a completely different physical meaning, there are also strong mathematical analogies with the strong
coupling limit for the Fr\"{o}hlich polaron \cite{Frank:2014aa,Frank:2015aa,Griesemer:2016aa} (see also
below). Finally, we want to mention the works \cite{BCFS2007,BCFFS2013} about the effective mass and dynamics
of a quantum particle interacting with the electromagnetic field in QED.

%  In all cases the semi-classical or quasi-classical approximation is expected to be justified.

Let us now be more precise about the models we plan to study: we want to focus on the behavior of a quantum system composed by $ N $ non-relativistic particles interacting with a quantized bosonic field, which will be often referred to as radiation. The interaction is modelled by a linear coupling as in the Nelson model \cite{nelson:1190} but we take into account two different cases: either the usual Nelson interaction with ultraviolet cut-off, or the Fr\"{o}hlich polaron model \cite{Frohlich230}.  More precisely, the Hamiltonian of the full system is given by an expression of the following form
\beq
	\label{eq:full ham}
	\framebox{$ H = \hfree + \disp\sum_{j = 1}^N A(\xv_j),	\qquad		\hfree = \HH_0 + \diff \Gamma(\omega), $}
\eeq
where
\beq
	\label{eq:h0}
	\HH_0 =  - \Delta_{\xv_1,\ldots,\xv_N} + U(\xv_1, \ldots, \xv_N),
\eeq
is a self-adjoint operator on $ L^2(\R^{dN}) $, $ d = 1, 2, 3 $. The dispersion relation of the field is $ \omega(\kv) $ and throughout the paper we are going to assume that
\beq
	\label{eq:omega}\tag{{\bf A}2}
	\omega(\kv) \geq 0.		
\eeq
The interaction $ A(\xv) $ is linear in the field creation and annihilation operators\footnote{We denote by $ {\: \cdot \:}^* $ the complex conjugate of a complex number.}, \emph{e.g.},
\beq
	\label{eq:A}
	\framebox{$ A(\xv) = \disp\int_{\mathbb{R}^d} \diff \kv  \lf( a^{\dagger}(\kv) \lambda(\kv) e^{-i \kv \cdot \xv} + a(\kv)  \lambda^{*}(\kv) e^{i \kv \cdot \xv} \ri), $}
\eeq
$ \lambda $ being the Fourier transform of the particle coupling factor, which is assumed to be the same for each particle. Above, $ \diff \Gamma $ stands for the second quantization map and therefore $ \diff \Gamma(\omega) $ is the field energy, \emph{i.e.},
\beq
	\label{eq:field energy}
	\diff \Gamma(\omega) = \int_{\R^d} \diff \kv \: \omega(\kv) \: a^{\dagger}(\kv) a(\kv).
\eeq
 
The physical regime we investigate here is the one which is sometimes referred to as semi-classical limit in
the physics literature \cite{PhysRevB.87.094301,CBO9780511778261A024,KalAubTsi98}, also known as {\it
  quasi-classical} or partially classical limit (see, \emph{e.g.}, \cite{MR2205462,MR1718657}), to distinguish
it from the vast mathematical literature about semiclassics: in the experiments the fields are typically
considered as classical and therefore their quantum nature is discarded. More precisely, we think of a regime
where the number of field excitations, {\it e.g.}, photons or phonons, is large. Hence the non-commutativity
of the quantum variables, which is of order 1 (in units of Planck's constant $ \hbar $), can be neglected when
compared to the large number of excitations. This is the approximation we study here, by proposing a model in
which the classical behavior of the field emerges from the semiclassical limit of a purely quantum system. The
regime is therefore named quasi-classical limit because only the field becomes classical, while the quantum
nature of the particles is preserved.

Such an approximation of large number of field excitations has already been considered in the physics
literature \cite{PhysRevB.87.094301}. This is also the typical case of the strong coupling regime as,
\emph{e.g.}, for the polaron \cite{Frank:2014aa,Frank:2015aa,Griesemer:2016aa}. Alternatively, one can think
of particles whose wave functions live on a scale much smaller than the typical length scale of the field
excitations \cite{KalAubTsi98}.

Concretely, the quasi-classical limit is realized by letting
\beq
	\eps \to 0,
\eeq
where $ \eps $ plays the role of Planck's constant, in the CCR relations satisfied by the annihilation and creation operators $ a(\kv) $ and $ a^{\dagger}(\kv) $, \emph{i.e.},
\beq
	\label{eq:CCR}
	\framebox{$ \lf[ a(\kv), a^{\dagger}(\kvp) \ri] = \eps \delta(\kv - \kvp). $}
\eeq
It is clear that when $\varepsilon\to 0$ the non-commutativity of the field becomes negligible and thus it becomes classical. Notice that such a limit should not be interpreted as a classical limit $ \hbar \to 0 $ but rather as a scale limit emerging from the physics of the coupling. 

Our main goal is thus to identify the effective Hamiltonian of the particles in the limit $ \eps \to 0 $, when
the degrees of freedom of the field are traced out. As we are going to see, we will prove that the system of
particles is still described by a sequence of operators $ \HH_{\eps} $, which converges as $ \eps \to 0 $ in
either the norm or the strong resolvent sense to a self-adjoint Schr\"{o}dinger operator $ \HHe $, given, for
each particle, by the unperturbed particle operator $ \HH_0 $ plus a suitable external potential. Moreover, we provide
the explicit expression of such a potential as a function of the state of the quantized field. 

Once the effective model is identified, it is then natural to ask whether the ground state properties of the
full system can be suitably approximated in terms of the effective operators obtained in the quasi-classical
limit. This is indeed the case for the ground state energy, as we prove for both the massive Nelson model and
the polaron: the effective energy is obtained by minimizing the bottom of the spectrum of the state-dependent
effective Hamiltonian with respect to the classical state of the field.

In this work we are interested in dealing only with the stationary features of the particles and we do not investigate
the full dynamics of the system. Dynamical questions have already been studied under restrictive assumptions
on the initial state: the partial classical limit of time-evolved squeezed coherent states was indeed
considered both for the renormalized Nelson model \cite{MR2205462} and the polaron model
\cite{Frank:2014aa,Frank:2015aa,Griesemer:2016aa}. In the former case, the resulting classical field evolves
freely and the quantum fluctuations are described by a free quantum field together with quantum particles
subjected to an external time-dependent potential given by the classical field. In the latter one, the
quasi-classical limit takes the form of a strong coupling limit and the field does not evolve at order zero
but yields an effective potential on the quantum particles. At first order, for suitable time scales, the
nonlinear Landau-Pekar system is recovered.

In this respect our analysis is more general than the one contained in the works mentioned above
\cite{MR2205462,Frank:2014aa,Frank:2015aa,Griesemer:2016aa}, although we do not address any dynamical
question: in all those papers indeed the initial state of the field must be of very special type, {\it i.e.},
a (squeezed) coherent state, which is already semiclassical from a certain point of view. On the opposite, we
make very weak restrictions on the possible field configurations, and show explicitly how such a freedom
influences the effective Schr\"{o}dinger operator for the particles. Let us also stress that the approximation
of the particle dynamics for generic initial states remains an open problem in both cases and we plan to
address such a question in a future work.

We consider three different forms of interaction, leading to similar results in the partially classic limit, but requiring suitable assumptions and slightly different approaches: 
\ben[1)]
	\item {\bf discrete modes of radiation} (Sect. \ref{sec:discrete}): this is the simplest setting since we assume that the field has only a discrete set of frequencies. It is however meaningful from the physical point of view, since it might be viewed as a model for particles in an optical lattice;
	\item {\bf Nelson model with ultraviolet cut-off} (Sect. \ref{sec:nelson}): the Nelson model \cite{nelson:1190} is simply the continuous version of the previous case, where the high frequencies are cut off by means of a suitable $ \eps$-independent form factor;
	\item {\bf Fr\"{o}hlich polaron} (Sect. \ref{sec:polaron}): this model introduced in \cite{Frohlich230} is typically used to describe the interaction of quantum particles with a phonon field. It might be thought of as a Nelson-type model with a special dispersion relation, where no ultraviolet cut-off is necessary.
\een

We will first focus on the dependence of the effective operator on the chosen state of the full system. A wide class of states, \emph{e.g.}, product states, leads indeed to bounded effective potentials and to norm resolvent convergence of the corresponding Schr\"{o}dinger operators (Sect. \ref{sec:discrete}--\ref{sec:polaron}). For a smaller class of models, {\it i.e.}, the massive Nelson and polaron ones, we will also prove the convergence of the ground state energies (Sect. \ref{sec:ground state}). A rather special choice of the sequence of states, \emph{i.e.}, suitable squeezed coherent states, can generate unbounded potentials, as it occurs for experimental traps (Sect. \ref{sec:traps}).

\bigskip

\noindent
{\bf Acknowledgements.} The authors acknowledge the support of MIUR through the FIR grant 2013 ``Condensed Matter in Mathematical Physics (Cond-Math)'' (code RBFR13WAET). \textsc{M.F.} also thanks \textsc{Z. Ammari} and \textsc{F. Nier} for helpful discussions and comments about semiclassical analysis in infinite dimensional systems.

\section{Main Results}
\label{sec:main}

Before stating our main results, we define more precisely the models we are going to consider. The technical assumptions we make on both the unperturbed part $ \hfree $ of the Hamiltonian of the full system and its interaction terms are also recalled later in Sect. \ref{sec:model}.
 
 As anticipated, we want to consider a coupled system of $N$ quantum $d$-dimensional particles coupled with a bosonic field. Therefore we assume that the Hilbert space is given by 
\beq
	\label{eq:hilbert}
	L^2 (\mathbb{R}^{dN} )\otimes \Gamma_{\mathrm{sym}}\lf(\mathfrak{H}\ri),
\eeq
where $ \Gamma_{\mathrm{sym}}\lf(\mathfrak{H}\ri) $ is the usual bosonic Fock space, \emph{i.e.},
$$
	\Gamma_{\mathrm{sym}}(\mathfrak{H}) = \bigoplus_{n = 0}^{\infty} S_n \lf( \mathfrak{H}\ri)^{\otimes n},
$$
with $ S_n $ the symmetrizing operator. The one-particle space $\mathfrak{H}$ for the field
depends on the model, but it is always a (complex) at most separable Hilbert space. In case of identical particles, $ L^2 (\mathbb{R}^{dN} ) $ can be substituted with either $L^2_{\mathrm{sym}} (\mathbb{R}^{dN} )$ or $L^2_{\mathrm{asym}} (\mathbb{R}^{dN} )$.  For simplicity we take the particle to be spinless, similar arguments may apply to particles with spin and suitable coupling with the field. We will use the following convention throughout the paper: standard capital letters, {\it e.g.}, $  H $, will denote operators on the full Hilbert space $ L^2 (\mathbb{R}^{dN} )\otimes \Gamma_{\mathrm{sym}}\lf(\mathfrak{H}\ri) $, while calligraphic capital letters, {\it e.g.}, $  \HH_0 $,  will stand for operators acting only on the particle Hilbert space $ L^2 (\mathbb{R}^{dN} ) $. Finally we will always use the momentum space representation for the field degrees of freedom and, consistently, all the variables depending on those degrees of freedom will be thought of as functions of $ \kv \in \R^d $, {\it e.g.},
\beq
	a^{\#}(f) : = \int_{\R^d} \diff \kv \: a^{\#}(\kv) f(\kv).
\eeq
We denote by $ \hat{f} $ the Fourier transform of $ f $, {\it i.e.},
\beq
	\hat{f}(\kv) : = \frac{1}{(2\pi)^{d/2}} \int_{\R^d} \diff \xv \: e^{-i \kv \cdot \xv} f(\xv),
\eeq
and by $ \:\check{}\: $ the inverse map.

With the hypotheses described in detail in Sect. \ref{sec:model}, the full Hamiltonian $ H $ given by \eqref{eq:full ham} is self-adjoint on a suitable domain. A form core for $ H $ is any form core domain for the unperturbed part $ \hamf $. We do not discuss  such technical issues further (see again Sect. \ref{sec:model}), in order to state as soon as possible our main results. 
	
Our main goal is to characterize the energy of the particle system once the field degrees of freedom are traced out. Therefore for any product state of the full system of the form 
\beq
	\psi(\xv_1, \ldots, \xv_N) \otimes \Psi_\varepsilon,
\eeq
with $ \Psi_{\eps}  \in \Gamma_{\mathrm{sym}}\lf(\mathfrak{H}\ri)$ normalized, we consider the operator $ \HHe $ acting on $ \psi \in L^2(\R^{dN}) $ defined as the partial trace of the expectation of $ H  $ in the product state above, \emph{i.e.},
\beq
	\label{eq:HHe}
	\framebox{$ \HH_{\eps} : = \mean{\Psi_{\eps}}{H}{\Psi_{\eps}}_{\Gamma_{\mathrm{sym}}\lf(\mathfrak{H}\ri)} - \ceps, $}
\eeq
and study its limit as $ \eps \to 0 $. The constant $ c_{\eps} $ is the mean energy of the field, \emph{i.e.}, explicitly
\begin{equation}
 	\label{eq:ceps}
  	c_{\varepsilon}= \mean{\Psi_{\eps}}{\mathrm{d}\Gamma(\omega)}{\Psi_\varepsilon}_{\Gamma_{\mathrm{sym}}\lf(\mathfrak{H}\ri)},
\end{equation}
and we have subtracted it for simplicity, since it just fixes the zero of the energy scale. Notice that one can take an equivalent point of view and investigate the limit $ \eps \to 0 $ of the quadratic form associated with $ \HHe $ which is defined as
\beq
	\Q_{\eps}[\psi] : = \mean{\psi \otimes \Psi_{\eps}}{H}{\psi \otimes \Psi_{\eps}}_{L^2(\R^{dN}) \otimes \Gamma_{\mathrm{sym}}\lf(\mathfrak{H}\ri)} - \ceps \lf\| \psi \ri\|_{L^2(\R^{dN})}^2.
\eeq

We conclude this preliminary discussion by recalling a result about the convergence of states in the Fock space $ \Gamma_{\mathrm{sym}}\lf(\mathfrak{H}\ri) $ originally proven in \cite{ammari:nier:2008,MR2513969,MR2802894,2011arXiv1111.5918A} (see also \cite{Ammari:2014aa,ammari:15} for further applications). The detailed version of the result is stated in Sect. \ref{sec:semiclassics}. We first identify the subspace of sequences of states always admitting at least one probability measure as a semiclassical accumulation point as the set of states such that the following conditions are satisfied
	\beq
		\label{eq:state}\tag{{\bf A}3}
		\mean{\Psi_\varepsilon}{\mathrm{d}\Gamma(1)}{\Psi_\varepsilon}_{\fock(\hh)} \leq C < + \infty; 	\qquad		\mean{\Psi_\varepsilon}{\mathrm{d}\Gamma(\omega)}{\Psi_\varepsilon}_{\fock(\hh)} \leq C' < +\infty,
	\eeq
{\it i.e.}, the expectation values of both the number operator and $ \diff \Gamma(\omega) $ on such states are uniformly bounded in $ \eps $. Under these assumptions there exists a subsequence $ \lf\{ \Psi_{\eps_k} \ri\}_{k\in \mathbb{N}}$, $\varepsilon_k\to 0$, and a measure $ \mu \in \M(\hh) $, with $ \M(\hh) $ the space of probability measures over $ \hh $, so that, if $ g \in \hh $,
\begin{equation}
  	\label{eq:lim measure}
  	\lim_{k\to \infty} \meanlr{\Psi_{\varepsilon_k}}{a(g) + a^{\dagger}(g)}{\Psi_{\varepsilon_k}}_{\Gamma_{\mathrm{sym}}\lf(\mathfrak{H}\ri)} = 2 \Re \int_{\hh}^{} \mathrm{d}\mu(z) \: \braket{z}{g}_{\hh}.
\end{equation}
In general the limit above is not unique, namely it depends on the chosen subsequence. However we can adopt the following convenient notation: when we write
\beq
	\label{eq:eps convergence}
	\Psi_{\eps} \xrightarrow[\eps \to 0]{} \mu\in \mathscr{M}(\hh),
\eeq
it means that either we are considering a subsequence
$ \lf\{ \Psi_{\varepsilon_k} \ri\}_{k\in \mathbb{N}}$ that converges in the sense of \eqref{eq:lim measure}, or that the function
$\varepsilon\mapsto \Psi_{\varepsilon}$ has a unique limit $\mu$ (no need to extract any subsequence). Since it is always possible to extract at least one convergent
subsequence from the family $ \Psi_{\varepsilon} $, the notation above is justified. 

\subsection{Discrete modes of radiation}
\label{sec:discrete}

The first model consists of a countable number of radiation modes linearly coupled with $ N $ particles (in
$d$ dimensions). In this case the one-particle Hilbert space $ \hh $ is simply $ \ell^2(\Z^d) $. Let
$ \kv_{\nv} \in \R^d $, $ \nv \in \Z^d $, be a collection of real frequencies (modes), characteristic of the
system, and denote $\omega_{\nv}=\lvert \kv_{\nv} \rvert_{}^{}$. Then the full Hamiltonian of system takes the form
\begin{equation}
 	\label{eq:H1}
	H=\disp\sum_{j=1}^N -\Delta_j +U(\xv_1,\ldots,\xv_N)+ \sum_{\nv \in \Z^d} \omega_{\nv} a^{\dagger}_{\nv} a_{\nv} + \sum_{j=1}^N A(\xv_j),
\end{equation}
with
\begin{equation}
 	\label{eq:A1}
 	A(\xv)= \disp\sum_{\nv \in \Z^d} \lf( a^{\dagger}_{\nv} \lambda_{\nv} e^{-i \kv_{\nv} \cdot \xv} + a_{\nv} \lambda_{\nv} e^{i \kv_{\nv} \cdot \xv} \ri),
\end{equation}
where we also assume that
\beq
	\label{eq:g1}\tag{{\bf A}4}
	\lf\{ \lambda_{\nv} \ri\}_{\nv \in \Z^d} \in \ell^2(\Z^d).
\eeq

	\begin{thm}[Effective Hamiltonian]
  		\label{teo:conv1}
  		\mbox{}	\\
		Let the assumptions \eqref{eq:V}, \eqref{eq:omega}, \eqref{eq:state}, and \eqref{eq:g1} be satisfied, and let $ \Psi_{\eps}\to  \mu \in \M(\ell^2(\Z^d)) $ in the sense of \eqref{eq:eps convergence}. Then for any $ \eps $ small, $ \HH_{\eps} $ is a self-adjoint operator on $ \dom(\HH_0) $ and\footnote{We use the shorthand notation $ \lf\| \: \cdot \: \ri\|-\mathrm{res} $ and $ \mathrm{s}-\mathrm{res} $ to indicate the convergence of an operator in norm and strong resolvent sense respectively.}
		\beq
			\label{eq:conv1}
			\framebox{$ \HH_{\eps} \xrightarrow[\eps \to 0]{\lf\| \: \cdot \: \ri\|-\mathrm{res}} \HHe = \HH_0 + \disp\sum_{j = 1}^N V_{\mu}(\xv_j)$,}
		\eeq
		where $ \HHe $ is self-adjoint on $ \dom(\HH_0) $ and
		 \begin{equation}
		 	\label{eq:V1}
 	  	 	\framebox{$ V_{\mu}(\xv) = 2\Re\disp\int_{\ell^2(\mathbb{Z}^d)}^{}\mathrm{d}\mu(z) \: \braketl{\lf\{ \lambda_{\nv} e^{-i \kv_{\nv} \cdot \xv} \ri\}_{\nv \in \Z^d}}{z}_{\ell^2(\Z^d)}. $}
                      \end{equation}
	\end{thm}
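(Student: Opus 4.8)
The plan is to compute the partial trace $\HH_\eps = \langle \Psi_\eps | H | \Psi_\eps\rangle_{\fock(\hh)} - \ceps$ explicitly, identify the $\eps$-dependent effective potential it produces on $L^2(\R^{dN})$, and then pass to the limit using the semiclassical convergence \eqref{eq:lim measure}. Concretely, since $\HH_0$ and the field part act on different factors, taking the expectation in $\Psi_\eps$ only affects the field energy $\diff\Gamma(\omega)$ — which is cancelled by $\ceps$ by definition \eqref{eq:ceps} — and the interaction term $\sum_{j=1}^N A(\xv_j)$. For the discrete model \eqref{eq:A1}, $A(\xv) = \sum_{\nv} (a^\dagger_\nv \lambda_\nv e^{-i\kv_\nv\cdot\xv} + a_\nv \lambda_\nv e^{i\kv_\nv\cdot\xv})$, so for fixed $\xv$ the operator $A(\xv)$ is of the form $a(g_{\xv}) + a^\dagger(g_{\xv})$ (a field operator linear in creation/annihilation), with $g_{\xv} = \{\lambda_\nv e^{-i\kv_\nv\cdot\xv}\}_{\nv\in\Z^d} \in \ell^2(\Z^d)$ by assumption \eqref{eq:g1}. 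Hence
\[
  \HH_\eps = \HH_0 + \sum_{j=1}^N V_\eps(\xv_j), \qquad V_\eps(\xv) := \langle \Psi_\eps | a(g_{\xv}) + a^\dagger(g_{\xv}) | \Psi_\eps\rangle_{\fock(\hh)}.
\]
By \eqref{eq:state} and the standard bound $\|(a(g)+a^\dagger(g))\Psi\| \leq 2\|g\|_\hh \|(\diff\Gamma(1)+1)^{1/2}\Psi\|$, the function $V_\eps$ is uniformly bounded, $\|V_\eps\|_{L^\infty} \leq 2\|\lambda\|_{\ell^2}\sqrt{C+1}$, so $\sum_j V_\eps(\xv_j)$ is a bounded symmetric perturbation; together with \eqref{eq:V}, which makes $\HH_0$ self-adjoint on $\dom(\HH_0)$, this gives self-adjointness of $\HH_\eps$ on $\dom(\HH_0)$ for every $\eps$ (indeed for all $\eps$, not just small), via Kato–Rellich.

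Next I would establish the pointwise limit $V_\eps(\xv) \to V_\mu(\xv)$ for each fixed $\xv$. This is exactly \eqref{eq:lim measure} applied with $g = g_{\xv}$: along the relevant subsequence (or outright, if the limit is unique),
\[
  V_\eps(\xv) \xrightarrow[\eps\to 0]{} 2\Re\int_{\ell^2(\Z^d)} \diff\mu(z)\, \langle g_{\xv} | z\rangle_{\ell^2(\Z^d)} = V_\mu(\xv),
\]
which is precisely \eqref{eq:V1}. To upgrade pointwise convergence to norm-resolvent convergence of $\HH_\eps$ to $\HHe = \HH_0 + \sum_j V_\mu(\xv_j)$, I would show $\|V_\eps - V_\mu\|_{L^\infty} \to 0$, or at least that multiplication by $V_\eps - V_\mu$ tends to $0$ in the operator norm on $L^2(\R^{dN})$. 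Since each $V_\eps$ is uniformly bounded and, as a map $\xv \mapsto g_{\xv}$, $\R^d \ni \xv \mapsto g_{\xv}\in\ell^2(\Z^d)$ is continuous (even uniformly continuous after a harmless truncation of the $\ell^2$ tail, since $\lambda\in\ell^2$), the family $\{V_\eps\}$ is equicontinuous on compacta; combined with pointwise convergence this yields locally uniform convergence, and the uniform $\ell^2$-tail control promotes this to uniform convergence on all of $\R^d$. With $\|V_\eps - V_\mu\|_{L^\infty}\to 0$, the second resolvent identity
\[
  (\HH_\eps - \zeta)^{-1} - (\HHe - \zeta)^{-1} = -(\HH_\eps-\zeta)^{-1}\Big(\sum_j (V_\eps - V_\mu)(\xv_j)\Big)(\HHe-\zeta)^{-1}
\]
gives norm-resolvent convergence for $\zeta$ in a common resolvent set (e.g. far in the lower half-plane), and hence self-adjointness of $\HHe$ on $\dom(\HH_0)$ is preserved as well, establishing \eqref{eq:conv1}.

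The main obstacle I anticipate is the equicontinuity/uniformity step needed to turn the pointwise limit from \eqref{eq:lim measure} into a uniform (or operator-norm) statement. The semiclassical convergence result is a priori only pointwise in the test vector $g$, and there is no uniformity in $g$ built into it; moreover the measure $\mu$ lives on an infinite-dimensional space, so dominated-convergence-type arguments for $\int \diff\mu(z)\langle g_{\xv}|z\rangle$ require care (one needs $\int \|z\|_\hh\, \diff\mu(z) < \infty$, which does follow from the uniform number bound in \eqref{eq:state} via Fatou applied to the semiclassical measure). The technical heart is therefore: (i) showing $\xv\mapsto g_{\xv}$ is a continuous, and suitably uniformly continuous, curve in $\ell^2(\Z^d)$ — immediate from $\lambda\in\ell^2$ and the equicontinuity of $\xv\mapsto e^{-i\kv_\nv\cdot\xv}$ mode-by-mode plus uniform smallness of the tail — and (ii) checking that the convergence in \eqref{eq:lim measure} is uniform over the (relatively compact) set $\{g_{\xv}:\xv\in\R^d\}\subset\ell^2$, which should follow from the pointwise convergence together with equicontinuity of the family $\xv\mapsto V_\eps(\xv)$ (uniform in $\eps$) via an Arzelà–Ascoli-type argument. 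Once this is in place, everything else is the routine Kato–Rellich plus second-resolvent bookkeeping sketched above.
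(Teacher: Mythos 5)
Your proposal follows the same outline as the paper's own proof: compute the partial trace so that $\HH_\eps = \HH_0 + \sum_j V_{\eps,\Psi_\eps}(\xv_j)$, establish uniform boundedness of $V_{\eps,\Psi_\eps}$ by the standard field-operator estimate, obtain pointwise convergence $V_{\eps,\Psi_\eps}(\xv)\to V_\mu(\xv)$ by applying \eqref{eq:lim measure} with $g=g_\xv=\{\lambda_\nv e^{-i\kv_\nv\cdot\xv}\}$, and conclude via the second resolvent identity. Where you are noticeably more careful than the paper is the passage from pointwise convergence to the \emph{norm}-resolvent statement. The paper's argument (following Lemma \ref{lemma:pot bounded} and Corollary \ref{cor:pointwise}) invokes dominated convergence applied to $\|(V_{\eps,\Psi_\eps}-V_\mu)(\HHe-\zeta)^{-1}\psi\|_{L^2}$, which as written only controls this quantity for each \emph{fixed} $\psi$, i.e.\ it gives strong, not norm, resolvent convergence directly. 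Your Arzel\`a--Ascoli step supplies the missing uniformity: since $\lambda\in\ell^2(\Z^d)$, the orbit $\{g_\xv:\xv\in\R^d\}$ is relatively compact in $\ell^2(\Z^d)$ (uniformly small $\ell^2$-tail plus boundedness of each finite block), and the estimate underlying Lemma \ref{lemma:pot bounded}, applied to $g_\xv-g_\yv$, gives the $\eps$-uniform Lipschitz bound $|V_{\eps,\Psi_\eps}(\xv)-V_{\eps,\Psi_\eps}(\yv)|\leq C\|g_\xv-g_\yv\|_{\ell^2}$; equicontinuity on a compact set plus pointwise convergence then yields $\|V_{\eps,\Psi_\eps}-V_\mu\|_{L^\infty}\to 0$, from which the norm-resolvent estimate is immediate. (This relative compactness is particular to the discrete model; the analogous orbit $\{\lambda\, e^{-i\kv\cdot\xv}\}$ in the Nelson case is bounded but not precompact in $L^2(\R^d)$, so there your argument would need modification.) The remaining differences --- your $(\diff\Gamma(1)+1)^{1/2}$ versus the paper's sharper $\diff\Gamma(1)^{1/4}$ in the number-operator bound, and the observation that self-adjointness of $\HH_\eps$ on $\dom(\HH_0)$ holds for all $\eps$ rather than only small $\eps$ --- are immaterial under assumption \eqref{eq:state}.
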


Hence the net effect of the field on the particle dynamics in the limit $ \eps \to 0 $ is, in this case, to generate a bounded potential $ V_{\mu} $. The potential depends only on the coupling $\lambda$ between particles and radiation, and on the state of radiation, that in the limiting regime is described by the probability $\mu$. 

Let us now discuss which types of potentials can be obtained in this fashion. The coupling $ \lf\{ \lambda_{\nv} \ri\}_{\nv \in \Z^d} \in \ell^2(\mathbb{Z}^d)$ yields some
\emph{a priori} information on the modes that affect the particles, and with which strength. For instance, if $ \lambda_{\nv}  \neq 0$ for any $ \nv \in \mathbb{Z}^d$, so that every mode contributes to the coupling with
the particles, then $\lambda_{\nv} $ has a multiplicative inverse given by the sequence $ \lambda^{-1} : = \lf\{ \lambda_{\nv}^{-1} \ri\}_{\nv \in \N}$. Let then $ b : = \lf\{ b_{\nv} \ri\}_{\nv \in \N} $ be any sequence in $ \ell^1(\Z^d) $ such that $ \lambda^{-1}_{\nv} b_{\nv} \in \ell^2(\Z^d) $; then we can construct the {\it squeezed coherent states} $ \Xi\lf( \frac{1}{2 \lambda^*} b^{*}\ri) $. For any $ f \in \hh = \ell^2(\Z^d) $, the vector $ \Xi(f)\in \Gamma_s(\hh)$ is given by the usual coherent state for the field, {\it i.e.},
\begin{equation}
	\label{eq:coherent}
  	\Xi(f) := \WW\lf( \tx\frac{1}{i\eps} f\ri)\Omega,
\end{equation}
where $ \Omega $ is the vacuum and $ \WW $ is the Weyl operator
\beq
	\label{eq:weyl}
	\WW(f) : = e^{i \lf( a^{\dagger}(f) + a(f) \ri)}.
\eeq
It is well known \cite[Theorem 4.2]{ammari:nier:2008} that 
\beq
	\label{eq:coherent convergence}
	\Xi(f) \xrightarrow[\eps \to 0]{} \delta\lf(z - f\ri),
\eeq
in the sense defined in \eqref{eq:eps convergence}. Hence, under the above assumptions,
\bdm
	\Xi\lf(\tx\frac{1}{2 \lambda^*} b^{*}\ri) \xrightarrow[\eps \to 0]{} \delta \lf(z - \tx\frac{1}{2 \lambda^{*}} b^*  \ri).
\edm
Therefore, applying Theorem~\ref{teo:conv1} to such a class of coherent states, we obtain the almost periodic
effective potentials
\begin{equation}
  \label{eq:23}
  V_{b}(x)=\sum_{\nv \in \mathbb{Z}^d}^{} \bigl( \Re (b_{\nv}) \: \cos(\kv_{\nv} \cdot \xv) + \Im (b_{\nv})  \sin(\kv_{\nv} \cdot \xv) \bigr).
\end{equation}
These potentials play a very important role in condensed matter experiments, where they take the name of optical lattices \cite{PhysRevA.50.5173,bloch:05}: by suitably tuning superimposed laser beams forming a lattice, one can create periodic wells, which are typically described in first approximation by potentials of the form above. In fact, when the intensity of lasers gets very large, the tunneling between different wells gets small and the particles can be considered pinned at lattice sites, so giving rise to a discrete model. So our result justifies the use of a first quantized periodic potential to approximate the effect of the field interaction in the semiclassical regime. Actually, Theorem \ref{teo:conv1} gives much more information: it is indeed possible to obtain a wider class of almost periodic potentials of the form
\begin{equation}
  	\label{eq:24}
  	2\Re \int_{\ell^2(\mathbb{Z}^d)}^{} \mathrm{d}\mu(z) \: \sum_{\nv \in \mathbb{Z}^d}^{} \lambda^{*}_{\nv} z_{\nv} e^{i \kv_{\nv} \cdot \xv},
\end{equation}
provided that there is a family of quantum states of the field such that $\Psi_\varepsilon \to \mu$. It turns out that all probability
measures $\mu\in \mathcal{\M} \lf( \ell^2(\mathbb{Z}^d) \ri)$ can be reached by suitable families of quantum
states \cite{Falconi:2016ab}.

\subsection{Nelson model: bounded potentials vanishing at infinity}
\label{sec:nelson}

If the radiation has a continuum of modes coupled with particles, it is possible to obtain bounded potentials
vanishing at infinity. 

The one-particle Hilbert space is in this case $ \hh = L^2(\R^d) $ and the Hamiltonian has the form \eqref{eq:full ham} with interaction \eqref{eq:A}, {\it i.e.},
\beq
	\label{eq:nelson ham}
	H=\disp\sum_{j=1}^N -\Delta_j +U(\xv_1,\ldots,\xv_N)+ \int_{\R^d} \diff \kv \: \omega(\kv) a^{\dagger}(\kv) a(\kv) + \sum_{j=1}^N A(\xv_j),
\eeq
\beq
	\label{eq:nelson interaction}
 	A(\xv) = \int_{\R^d}^{} \diff \kv \: \lf( a^{\dagger}(\kv) \lambda(\kv) e^{-i \kv\cdot \xv} + a(\kv) \lambda^*(\kv) e^{i \kv\cdot \xv}\ri);
\eeq
where the cut off is chosen so that
\beq	
	\label{eq:g2} \tag{{\bf A}4${}^{\prime}$}
	\lambda(\kv) \in L^2(\R^d).
\eeq

The analogue of Theorem \ref{teo:conv1} is the following.

	\begin{thm}[Effective Hamiltonian]
  		\label{teo:conv2}
  		\mbox{}	\\
  		Let the assumptions \eqref{eq:V}, \eqref{eq:omega}, \eqref{eq:state} and \eqref{eq:g2} be satisfied and let $ \Psi_{\eps}\to  \mu \in \M(L^2(\R^d)) $ in the sense of \eqref{eq:eps convergence}. Then for any $ \eps $ small, $ \HH_{\eps} $ is a self-adjoint operator on $ \dom(\HH_0) $ and 
		\beq
			\label{eq:conv2}
			\framebox{$ \HH_{\eps} \xrightarrow[\eps \to 0]{\lf\| \: \cdot \: \ri\|-\mathrm{res}} \HHe = \HH_0 + \disp\sum_{j = 1}^N V_{\mu}(\xv_j)$,}
		\eeq
		where $ \HHe $ is self-adjoint on  $ \dom(\HH_0) $ and
		 \begin{equation}
		 	\label{eq:V2}
 	  	 	\framebox{$ V_{\mu}(\xv) = 2 (2\pi)^{d/2} \Re\disp\int_{L^2(\mathbb{R}^d)}^{}\mathrm{d}\mu(z) \: \big( \widecheck{z \lambda^{*}} \big)(\xv). $}
            \end{equation}
  	\end{thm}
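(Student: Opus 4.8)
The plan is to compute the partial trace explicitly, turning $\HH_\eps$ into a Schr\"odinger operator with a bounded potential, then identify the limiting potential from the semiclassical convergence \eqref{eq:lim measure} of the field one-point function, and finally upgrade pointwise convergence of the potentials to norm resolvent convergence. First, the partial trace: since $\HH_0$ acts trivially on the Fock factor and $\mean{\Psi_\eps}{\diff\Gamma(\omega)}{\Psi_\eps}=\ceps$ is subtracted, writing $A(\xv)=a^{\dagger}(\lambda_\xv)+a(\lambda_\xv)$ with $\lambda_\xv(\kv):=\lambda(\kv)e^{-i\kv\cdot\xv}$ (so $\|\lambda_\xv\|_{\hh}=\|\lambda\|_{L^2}$ by \eqref{eq:g2}, and $\xv\mapsto\lambda_\xv$ continuous into $\hh$ by dominated convergence) one gets
\[
	\HH_\eps = \HH_0 + \sum_{j=1}^N W_\eps(\xv_j), \qquad W_\eps(\xv) := \mean{\Psi_\eps}{a^{\dagger}(\lambda_\xv)+a(\lambda_\xv)}{\Psi_\eps} = 2\Re\,\mean{\Psi_\eps}{a(\lambda_\xv)}{\Psi_\eps}.
\]
The operator bound $a^{\dagger}(f)a(f)\leq\|f\|_\hh^2\,\diff\Gamma(1)$ and the first inequality in \eqref{eq:state} make $W_\eps$ real, continuous, and bounded by $2\sqrt{C}\,\|\lambda\|_{L^2}$ uniformly in $\eps$, so $\HH_\eps$ is self-adjoint on $\dom(\HH_0)$ by Kato--Rellich; since the detailed semiclassical statement of Sect.~\ref{sec:semiclassics} yields $\int_\hh\|z\|_\hh^2\,\diff\mu(z)\leq C$, the same computation gives the same bound for $V_\mu$ of \eqref{eq:V2}, which moreover vanishes at infinity because $\lambda_\xv\rightharpoonup 0$ weakly as $|\xv|\to\infty$, so $\HHe=\HH_0+\sum_j V_\mu(\xv_j)$ is self-adjoint on $\dom(\HH_0)$ too.

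Next I would identify $V_\mu$ and get convergence of the potentials. For fixed $\xv$, \eqref{eq:lim measure} with $g=\lambda_\xv$ gives $W_\eps(\xv)\to 2\Re\int_\hh\braket{z}{\lambda_\xv}_\hh\,\diff\mu(z)$, and since $\overline{\braket{z}{\lambda_\xv}_\hh}=\int_{\R^d}z(\kv)\lambda^{*}(\kv)e^{i\kv\cdot\xv}\,\diff\kv=(2\pi)^{d/2}(\widecheck{z\lambda^{*}})(\xv)$ and $\mu$ is a positive measure, this limit is exactly the potential of \eqref{eq:V2}. Writing $W_\eps(\xv)=2\Re\braket{\lambda_\xv}{\phi_\eps}_\hh$, $V_\mu(\xv)=2\Re\braket{\lambda_\xv}{\phi_\mu}_\hh$ with $\phi_\eps,\phi_\mu$ the one-point functions ($\phi_\mu=\int z\,\diff\mu$), \eqref{eq:lim measure} applied to all $g\in\hh$ gives $\phi_\eps\rightharpoonup\phi_\mu$ weakly; with a modulus-of-continuity estimate for $\xv\mapsto\lambda_\xv$ (split the $\kv$-integral at large momenta) the family $\{W_\eps\}_\eps$ is equicontinuous and uniformly bounded, so $W_\eps\to V_\mu$ uniformly on compact subsets of $\R^{dN}$.

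For the resolvent convergence, the second resolvent identity gives, for $\zeta\in\C\setminus\R$,
\[
	(\HH_\eps-\zeta)^{-1}-(\HHe-\zeta)^{-1}=(\HH_\eps-\zeta)^{-1}\sum_{j=1}^N\big(V_\mu(\xv_j)-W_\eps(\xv_j)\big)(\HHe-\zeta)^{-1},
\]
and as $\|(\HH_\eps-\zeta)^{-1}\|\leq|\Im\zeta|^{-1}$ and $\sum_j V_\mu(\xv_j)$ is bounded, it suffices to show $\big\|\sum_j(V_\mu(\xv_j)-W_\eps(\xv_j))(\HH_0-\zeta)^{-1}\big\|\to0$. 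Here I would first truncate: given $\eta>0$ take $\lambda_\eta\in C_c^\infty(\R^d\setminus\{0\})$ with $\|\lambda-\lambda_\eta\|_{L^2}<\eta$, which changes the potentials by $\leq 2N\sqrt{C}\,\eta$ in sup-norm uniformly in $\eps$, reducing to the truncated coupling for which $W_\eps\to V_\mu$ uniformly on compacts; then split $\R^{dN}=\{|\xv|\leq R\}\cup\{|\xv|>R\}$, noting that the first region gives a term vanishing as $\eps\to0$ for each $R$, so that the remaining task is to control $\big\|\one_{\{|\xv|>R\}}(\HH_0-\zeta)^{-1}\big\|$ and/or the spatial tails of $W_\eps-V_\mu$ uniformly in $\eps$ as $R\to\infty$.

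This last point is where I expect the main difficulty. The convergence \eqref{eq:lim measure} is of weak type and is \emph{not} uniform over $\{\lambda_\xv:\xv\in\R^{dN}\}$, which is not norm-precompact in $L^2(\R^d)$ (as $|\xv|\to\infty$ one has $\|\lambda_\xv\|_\hh$ constant while $\lambda_\xv\rightharpoonup0$), so pointwise convergence of the potentials does not by itself force norm convergence of the resolvents; the argument must genuinely handle the spatial behaviour at infinity, either via compactness of $\langle\xv\rangle^{-\delta}(\HH_0-\zeta)^{-1}$ (available from \eqref{eq:V} when $U$ confines) or via an $\eps$-uniform decay of $W_\eps-V_\mu$ at large $|\xv|$ --- equivalently a compactness property of the one-point functions $\phi_\eps$ beyond weak convergence. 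Balancing the infinite-dimensional semiclassical tails (tamed by the ultraviolet cut-off $\lambda\in L^2$) against the spatial tails of the Schr\"odinger operator is, I expect, the crux; by contrast, in the discrete model of Theorem~\ref{teo:conv1} weak $\ell^2$-convergence is coordinatewise and $\ell^2$-summability of $\{\lambda_\nv\}$ makes $W_\eps\to V_\mu$ uniform on $\R^{dN}$, so there norm resolvent convergence is immediate.
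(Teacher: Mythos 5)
Your plan matches the paper's strategy exactly: compute the partial trace (Proposition \ref{pro:trace}), bound $W_\eps$ uniformly from \eqref{eq:state} and $\lambda\in L^2$ (the continuous analogue of Lemma \ref{lemma:pot bounded}), identify $V_\mu$ through Corollary \ref{cor:pointwise}, and pass to the limit via the second resolvent identity. The paper dispatches Theorem \ref{teo:conv2} in one line (``replace $\ell^2(\Z^d)$ with $L^2(\R^d)$ and follow step by step'' the proof of Theorem \ref{teo:conv1}); that earlier proof obtains a uniform operator-norm bound on the resolvent difference and then states that ``the result follows from pointwise convergence by a dominated convergence argument.''

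Your final paragraph is the crux, and your scepticism is well-founded. The dominating function $C\lvert((\HHe-\zeta)^{-1}\psi)(\xv)\rvert^2$ depends on $\psi$, so the dominated-convergence step as written yields $\|(W_\eps-V_\mu)(\HHe-\zeta)^{-1}\psi\|\to 0$ only for each \emph{fixed} $\psi$, i.e.\ strong resolvent convergence; norm resolvent convergence requires $\|W_\eps-V_\mu\|_{L^\infty}\to 0$, equivalently $\lambda^{*}(\phi_\eps-\phi_\mu)\to 0$ in $L^1(\R^d_\kv)$, where $\phi_\eps(\kv)=\mean{\Psi_\eps}{a(\kv)}{\Psi_\eps}$. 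As you note, the tail $\|\lambda\one_{\{|\kv|>M\}}\|_{L^2}\|\phi_\eps-\phi_\mu\|_{L^2}$ is uniformly small, but the piece on $\{|\kv|\leq M\}$ needs $\phi_\eps\to\phi_\mu$ \emph{strongly} in $L^2$, whereas \eqref{eq:lim measure} gives only weak convergence. This is not a removable technicality: with $U=0$, $\omega\equiv 1$, and $\Psi_\eps=Z_\eps^{-1/2}\bigl(\Xi(0)+\Xi(g_\eps)\bigr)$ where $g_\eps(\kv)=g(\kv)e^{i\kv\cdot\av_\eps}$, $g$ a fixed normalized bump, $|\av_\eps|\to\infty$, one has \eqref{eq:state}, and the Wigner measure is the Dirac mass at $0$, so $V_\mu\equiv 0$; yet $W_\eps(\xv)$ equals (up to $O(e^{-c/\eps})$) a fixed nonzero continuous profile translated by $\av_\eps$, so $\|W_\eps\|_{\infty}$ stays bounded away from $0$, and by translation invariance of $-\Delta$ the resolvents converge only strongly, not in norm. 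So the hypotheses as literally stated do not force norm resolvent convergence; one needs either strong $L^2$-precompactness of the one-point functions $\phi_\eps$ (a ``no loss of mass'' hypothesis such as $\mean{\Psi_\eps}{\diff\Gamma(1)}{\Psi_\eps}\to\int\|z\|^2\diff\mu$, which the paper implicitly uses when it invokes the full conclusion \eqref{eq:field energy limit} of Proposition \ref{pro:classical limit}, but which is not guaranteed by \eqref{eq:state} and \eqref{eq:lim measure} alone), or a confining $U$ giving compactness of $(\HH_0-\zeta)^{-1}$. Your refusal to paper over this step is exactly right, and you should press the point: the discrete case does close (the $\ell^2$-tail argument you sketch works), but the continuous version of the argument needs a strengthening of the hypotheses or a downgrade to strong resolvent convergence.
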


Note that, under the assumptions we made, $ z \lambda^{*} \in L^1(\R^d) $ and therefore its Fourier (anti-)transform is a well defined $ L^{\infty} $ function. The allowed effective potentials $ V_{\mu}$ for this model are thus averages of Fourier transforms of $L^1 (\R^d )$ functions and, as such, they are continuous and vanishing at infinity. More precisely, suppose that $\lambda\in L^2 (\R^d )$ has a
multiplicative inverse almost everywhere $\frac{1}{\lambda}$ ({\it e.g.}, $\lambda$ is
not compactly supported), then for any $ \hat{f} \in L^1 (\R^d )$
such that $\frac{1}{\lambda} \hat{f} \in L^2 (\R^d )$, the potentials
\begin{equation}
  	\label{eq:28}
 	V_f(\xv)= \Re  f(\xv)
\end{equation}
are recovered by taking squeezed coherent states of the form 
\bdm
	\Xi\lf(\tx\frac{1}{2(2\pi)^{d/2} \lambda^{*}} \hat{f} \ri).
\edm
Being the Fourier anti-transform of functions in $ L^1 $, such potentials are actually continuous functions vanishing at $ \infty $.

\subsection{Polaron model: form-bounded potentials}
\label{sec:polaron}

Finally, we focus our attention to the Fr\"{o}hlich polaron model \cite{Frohlich230}, which is meant to describe the coupling between electrons and vibration modes in a crystal. The polaron Hamiltonian is
``more singular'' than the other Nelson-type operators previously considered, but the corresponding unitary
dynamics can still be defined without a renormalization procedure. In this model the charge distribution is concentrated at a single point. In the Fock representation the Hilbert space of the theory is, as in
Section~\ref{sec:nelson},
$\mathscr{H}=L^2 (\R^{dN} )\otimes \fock\lf(L^2 (\R^d )\ri)$, with $d\geq 2$. The
Hamiltonian $H$ takes the form
\begin{equation}
  \label{eq:polaron ham}
  H=\disp\sum_{j=1}^N -\Delta_j +U(\xv_1,\ldots,\xv_N)+ \int_{\R^d} \diff \kv \: a^{\dagger}{(\kv)} a{(\kv)} + \sum_{j=1}^N A(\xv_j),
\eeq
\beq
	\label{eq:polaron interaction}
 	A(\xv) = \int_{\R^d}^{} \diff \kv \: \frac{1}{|\kv|^{\frac{d-1}{2}}} \lf( a^{\dagger}(\kv) e^{-i \kv\cdot \xv} + a(\kv) e^{i \kv\cdot \xv}\ri).
\eeq

As in Sections~\ref{sec:discrete} and~\ref{sec:nelson}, for suitably regular states it is possible
to prove the convergence of the effective potential when
$\varepsilon\to 0$.

	\begin{thm}[Effective Hamiltonian]
  		\label{teo:conv3}
  		\mbox{}	\\
  		Let the assumptions \eqref{eq:V} %, \eqref{eq:omega1}, \eqref{eq:g3} 
  		and \eqref{eq:state} be satisfied and let $ \Psi_{\eps}\to  \mu \in \M(L^2(\R^d)) $ in the sense of \eqref{eq:eps convergence}. Then for any $ \eps $ small, $ \HH_{\eps} $ is a self-adjoint operator on $ \dom(\HH_{\eps}) $ with form domain $\dom(\sqrt{\HH_0})$, and 
  		\beq
			\label{eq:conv3}
			\framebox{$ \HH_{\eps} \xrightarrow[\eps \to 0]{\lf\| \: \cdot \: \ri\|-\mathrm{res}} \HHe = \HH_0 + \disp\sum_{j = 1}^N V_{\mu}(\xv_j),$}
		\eeq
		where $ \HHe $ is self-adjoint on $ \dom(\HHe) $ with form domain $\dom(\sqrt{\HH_0})$, $ V_{\mu} $ is infinitesimally form-bounded w.r.t $ - \Delta $, and
		\begin{equation}
		 	\label{eq:V3}
 	  	 	\framebox{$ V_{\mu}(\xv) = 2 (2\pi)^{\frac{d}{2}} \Re\disp\int_{L^2(\mathbb{R}^d)}^{}  \mathrm{d}\mu(z) \:  \widecheck{ \lf(|\kv|^{\frac{1-d}{2}} z \ri)} (\xv). $}
  		\end{equation} 
	\end{thm}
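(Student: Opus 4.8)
The plan is to reduce the statement to Theorem~\ref{teo:conv2} by inserting an ultraviolet cutoff and controlling the remaining high--momentum tail uniformly in $\eps$. Observe first that for the polaron $\omega\equiv 1$, so the field energy $\diff\Gamma(1)$ entering $H$ in \eqref{eq:polaron ham} is exactly cancelled by the subtraction of $c_{\eps}=\mean{\Psi_{\eps}}{\diff\Gamma(1)}{\Psi_{\eps}}$ in \eqref{eq:HHe}, and therefore $\HH_{\eps}=\HH_0+\sum_{j=1}^N V_{\eps}(\xv_j)$ with $V_{\eps}(\xv):=\mean{\Psi_{\eps}}{A(\xv)}{\Psi_{\eps}}_{\fock}$, exactly as in the Nelson case. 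The only obstruction to running the argument of Section~\ref{sec:nelson} verbatim is that the Fröhlich form factor $\lambda(\kv)=|\kv|^{(1-d)/2}$ is not square integrable at infinity, so the semiclassical convergence result of Section~\ref{sec:semiclassics} cannot be applied to the test function $g_{\xv}(\kv)=\lambda(\kv)e^{-i\kv\cdot\xv}$. To fix this, for $\Lambda>0$ I would set $\lambda_{\Lambda}:=\lambda\,\one_{\{|\kv|\le\Lambda\}}\in L^1(\R^d)\cap L^2(\R^d)$, let $A_{\Lambda}(\xv)$ be the corresponding (regular) interaction, and introduce the cutoff operators $\HH_{\eps}^{\Lambda}=\HH_0+\sum_j\mean{\Psi_{\eps}}{A_{\Lambda}(\xv_j)}{\Psi_{\eps}}$ and $\HHe^{\Lambda}=\HH_0+\sum_j V_{\mu}^{\Lambda}(\xv_j)$, the latter given by \eqref{eq:V3} with $\lambda$ replaced by $\lambda_{\Lambda}$. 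For each fixed $\Lambda$ the data $(\lambda_{\Lambda},\omega\equiv 1)$ satisfy \eqref{eq:omega} and \eqref{eq:g2}, so Theorem~\ref{teo:conv2} applies and yields $\HH_{\eps}^{\Lambda}\to\HHe^{\Lambda}$ in norm resolvent sense as $\eps\to 0$.

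The key point is then a uniform-in-$\eps$ estimate on the tails $V_{\eps}-V_{\eps}^{\Lambda}$ and $V_{\mu}-V_{\mu}^{\Lambda}$, obtained by a commutator trick of Lieb--Yamazaki type: from the identity $k_\ell\,e^{i\kv\cdot\xv}=-i\,\partial_{x_\ell}\circ e^{i\kv\cdot\xv}+i\,e^{i\kv\cdot\xv}\circ\partial_{x_\ell}$ one trades a power of $|\kv|$ for a derivative $\nabla_{\xv}$ acting on the particle variable. Since
\begin{equation*}
\int_{|\kv|>\Lambda}\frac{|\lambda(\kv)|^2}{|\kv|^{2}}\,\diff\kv=\int_{|\kv|>\Lambda}|\kv|^{-1-d}\,\diff\kv=\mathcal{O}(\Lambda^{-1}),
\end{equation*}
a Cauchy--Schwarz argument in $\kv$, together with the bound $\mean{\Psi_{\eps}}{\diff\Gamma(1)}{\Psi_{\eps}}\le C$ from \eqref{eq:state}, gives, as quadratic forms on $\dom(\sqrt{\HH_0})$ and with $C$ independent of $\eps$,
\begin{equation*}
\pm\big(V_{\eps}(\xv)-V_{\eps}^{\Lambda}(\xv)\big)\le C\,\Lambda^{-1/2}\,\big(-\Delta_{\xv}+1\big).
\end{equation*}
Repeating the computation with $\mean{\Psi_{\eps}}{\diff\Gamma(1)}{\Psi_{\eps}}$ replaced by the second moment $\int_{L^2(\R^d)}\|z\|^2\,\diff\mu(z)<+\infty$ (finite by the semiclassical transfer of \eqref{eq:state}, see Section~\ref{sec:semiclassics}) yields the same bound for $V_{\mu}-V_{\mu}^{\Lambda}$. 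Letting $\Lambda\to\infty$ this already shows that $V_{\eps}$ (uniformly in $\eps$) and $V_{\mu}$ are infinitesimally form bounded with respect to $-\Delta$; together with assumption \eqref{eq:V} and the KLMN theorem this gives the self-adjointness statements for $\HH_{\eps}$ and $\HHe$ (alternatively one invokes Section~\ref{sec:model}), as well as uniform bounds $\|(\HH_0+c)^{1/2}(\HH_{\eps}\mp i)^{-1}\|\le C$.

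To conclude I would use the three-term estimate
\begin{equation*}
\|(\HH_{\eps}+i)^{-1}-(\HHe+i)^{-1}\|\le\|(\HH_{\eps}+i)^{-1}-(\HH_{\eps}^{\Lambda}+i)^{-1}\|+\|(\HH_{\eps}^{\Lambda}+i)^{-1}-(\HHe^{\Lambda}+i)^{-1}\|+\|(\HHe^{\Lambda}+i)^{-1}-(\HHe+i)^{-1}\|.
\end{equation*}
For fixed $\Lambda$ the middle term tends to $0$ as $\eps\to 0$ by Theorem~\ref{teo:conv2}. For the two outer terms I would combine the form tail bound above with the uniform bound $\|(\HH_0+c)^{1/2}(\HH_{\eps}\mp i)^{-1}\|\le C$ and with the elementary estimate $\|(\HH_0+c)^{1/2}(\HH_{\eps}^{\Lambda}+i)^{-1}\|=\mathcal{O}(\Lambda^{1/4})$ — which holds because $\|V_{\eps}^{\Lambda}\|_{\infty}\le 2\|\lambda_{\Lambda}\|_{L^2}\,\mean{\Psi_{\eps}}{\diff\Gamma(1)}{\Psi_{\eps}}^{1/2}=\mathcal{O}(\sqrt{\Lambda})$ uniformly in $\eps$ — and likewise for the $\mu$-side; sandwiching the form perturbation between these two resolvents one gets that both outer terms are $\mathcal{O}(\Lambda^{-1/2}\cdot\Lambda^{1/4})=\mathcal{O}(\Lambda^{-1/4})$, uniformly in $\eps$. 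Since these bounds do not depend on $\eps$, first sending $\eps\to 0$ and then $\Lambda\to\infty$ gives norm resolvent convergence to $\HHe=\HH_0+\sum_j V_{\mu}(\xv_j)$; the identification of the limiting potential with \eqref{eq:V3} follows from $V_{\mu}^{\Lambda}\to V_{\mu}$ (in the $-\Delta$-form sense) by the same tail estimate. The main obstacle is precisely this tail estimate: one must bound the ultraviolet part of the interaction at the level of quadratic forms, uniformly in $\eps$, and with an explicit $\Lambda$-decay good enough to beat the $\sqrt{\Lambda}$ growth of $\|V^{\Lambda}\|_{\infty}$; this is exactly where the structure of the Fröhlich form factor and the restriction $d\ge 2$ enter, through the commutator identity that converts the borderline ultraviolet singularity into a derivative acting harmlessly on the particle variables.
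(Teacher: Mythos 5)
Your proposal is correct, but it takes a genuinely different route from the paper's. The paper's proof splits the potential at a \emph{fixed} cutoff $\varrho>0$ into $V_{\eps}^{<}+V_{\eps}^{>}$ (and similarly $V_\mu$), and then sends $\eps\to 0$ at that fixed $\varrho$: the low-frequency part is uniformly bounded in $L^\infty$ and handled by dominated convergence (exactly as in Theorem~\ref{teo:conv2}); the high-frequency part is first rewritten via the same Lieb--Yamazaki commutator identity $W^{>}_z = [\,\cdot\,,i\nabla]$, but the crucial observation is that after extracting the derivative the residual kernel $\one_{[\varrho,\infty)}(|\kv|)\,|\kv|^{-(d+1)/2}$ \emph{already lies in} $L^2(\R^d)$, so Corollary~\ref{cor:pointwise} applies directly to the vector field $\mathbf S_\eps$ (giving pointwise convergence to zero at fixed $\varrho$), and the gradient bound on resolvents of Lemma~\ref{lemma:gradient bound} absorbs the extra derivative. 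No limit $\varrho\to\infty$ is ever taken. Your proof instead treats the cutoff Hamiltonian $\HH_\eps^\Lambda$ as an instance of Theorem~\ref{teo:conv2}, derives a quantitative uniform-in-$\eps$ tail form bound $O(\Lambda^{-1/2})(-\Delta+1)$, and closes with a three-term estimate by first sending $\eps\to 0$ at fixed $\Lambda$ and then $\Lambda\to\infty$, after checking that the $O(\Lambda^{-1/2})$ decay of the tail beats the $O(\Lambda^{1/4})$ growth of $\|(\HH_0+c)^{1/2}(\HH_\eps^\Lambda+i)^{-1}\|$. Both hinge on the same commutator identity for the polaron form factor and the condition $d\geq 2$; yours is more modular (it reuses Theorem~\ref{teo:conv2} as a black box and produces an explicit rate $O(\Lambda^{-1/4})$), while the paper's avoids the double limit and the $\Lambda$-bookkeeping by recognising that a single commutation at fixed $\varrho$ already brings the problem back to the $L^2$ framework of Proposition~\ref{pro:classical limit}. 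One point you should make explicit, and which the paper treats with some care in the passage around \eqref{eq:polaron proof 1}: the second resolvent identity for $\HH_\eps$, $\HH_\eps^\Lambda$, $\HHe$, $\HHe^\Lambda$ must be justified at the level of quadratic forms (by polarization and sup over test vectors), since the operators are only defined through their forms; this is straightforward because all four have the same form domain $\dom(\sqrt{\HH_0})$, but it should be stated.
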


As before the notation in \eqref{eq:V3} stands for
\bdm
	\widecheck{ \lf(|\kv|^{\frac{1-d}{2}} z\ri)}(\xv) : = \frac{1}{(2\pi)^{\frac{d}{2}}} \int_{\R^d} \diff \kv \: e^{i\kv \cdot \xv}  |\kv|^{\frac{1-d}{2}} z(\kv).
\edm
Note, however, than, unlike the potentials obtained in the case of the Nelson model, $ V_{\mu} $ is in general unbounded and it could not vanish at infinity. Anyways, as stated in the Theorem, $ V_{\mu} $ is infinitesimally form-bounded w.r.t. $ - \Delta $ and therefore it is only an arbitrarily small perturbation of the kinetic energy.

As for the Nelson model, it is interesting to find out which type of potentials can be produced through this quasi-classical limit. By taking suitable squeezed coherent states one can indeed get in the limit $ \eps \to 0 $ a wide class of potentials $ W $. Such potentials $ W $ might not vanish at infinity but can not be trapping in the usual sense, {\it i.e.}, the resolvent of $ -\Delta + W $ can not be compact.  More precisely, let
\bdm
	W \in \dot{H}^{\frac{d-1}{2}}(\R^d) \cap L^2_{\mathrm{loc}} (\R^d),
\edm 
then the squeezed coherent state
\bdm
	\Xi\lf(\tx\frac{1}{2(2\pi)^{d/2}} \lf| \kv \ri|^{\frac{d-1}{2}} \widecheck{W} \ri)
\edm
yields, according to \eqref{eq:V3}, the potential $ W $. In fact in this case the effective potential does not depend on $ \eps $ and equals $ W $ even before the limit $ \eps \to 0 $ is taken. The regularity request on $ W $ is made in order to ensure that the argument of the coherent state is an $ L^2 $ function and therefore the construction makes sense. Note that such potentials are actually the ``static'' analogues of the $0$th-order strongly coupled polaron dynamics studied in~\cite{Frank:2014aa}. 

More in general, all the potentials generated by the field interaction are form-bounded w.r.t. the free part of the Hamiltonian. In fact, as described in Remark \ref{rem:V decay}, if $ \Psi_{\eps} $ is more regular, {\it e.g.}, in addition to \eqref{eq:state} it belongs uniformly to $ \dom (\sqrt{\mathrm{d}\Gamma(|\kv|^2)}) $, then the potential $ V_{\mu} $ is continuous and vanishes as $ |\xv| \to \infty $. Hence we can say that in order to obtain ``rougher'' potentials, the state of the field can not be too regular. Notice however that the effective potential is in any case form-bounded and therefore can never be too strong.

\subsection{Ground state energy}
\label{sec:ground state}

This Section is devoted to the study of the ground state energy of the full Hamiltonian $ H $ in the quasi-classical limit $ \eps \to 0 $. In order to stress the dependence on $ \eps $, in this Section only we set $ H_{\eps} : = H $. All the three types of models considered so far take into account operators $ H_{\eps} $ which are bounded from below. However, in order to state our result, we have to select either the massive Nelson model or the Fr\"{o}hlich polaron (see Remark \ref{rem:massless} below for a discussion of the reasons). 
 
For any self-adjoint operator $A$ on $ \hilb $, we denote by
$\underline{\sigma}(A) \in \mathbb{R} \cup \{-\infty\}$ the bottom of the spectrum of $A$:
\begin{equation}
	\label{eq:bottom}
  	\underline{\sigma}(A) := \inf \lf\{\lambda\in \mathbb{R} \: \big| \: \lambda\in \sigma(A) \ri\} = \inf_{\psi \in \dom(A), \lf\| \psi \ri\|_2 = 1} \mean{\psi}{A}{\psi}_{\hilb},
\end{equation}
where $ \dom(A) \subset \hilb $ is the self-adjointness domain of $ A $ or any core for it.

Our main result is the convergence of the bottom of the spectrum of $ H_{\eps} $ as $\eps \to 0 $ to the infimum of the ground state energy of $ \HHe $ w.r.t. the measure $ \mu $ identifying the classical limit of the state of the field. To this purpose, let us define the measure minimization domain
\beq
	\label{eq:domain measure}
	\M_{\omega} : =\lf\{\mu\in \M \lf(L^2 (\mathbb{R}^d  )\ri) \: \Big| \: \mu\lf(L^2_{\omega} (\mathbb{R}^d )\ri) = 1, \: \mu\big|_{L^2_{\omega} (\mathbb{R}^d )}\text{ is Borel}, \: c(\mu) < \infty \ri\}.
\eeq
Here $ c(\mu) $ is the classical energy of the field $ c(\mu) =\lim_{\eps \to 0} \ceps $, for any $ \Psi_{\eps} \to \mu $, {\it i.e.}, 
\beq
	\label{eq:cmu}
	c(\mu) := \int_{L^2(\R^d)} \diff \mu(z) \: \big\| \omega^{1/2} z  \big\|_2^2 = \int_{L^2(\R^d)} \diff \mu(z) \int_{\R^d} \diff \kv \: \omega(\kv) \lf|z(\kv) \ri|^2.
\eeq
In addition, we have set
\beq
	\label{eq:l2omega}
	L^2_{\omega}(\R^d) : = \bigg\{ f \in L^2(\R^d) \: \bigg| \:  \int_{\R^d} \diff \kv \: \omega(\kv) \lf|f(\kv) \ri|^2 < \infty  \bigg\}.
\eeq
To simplify the presentation we formulate the results only for systems with continuously many radiation modes, for a countable number of modes it can be easily adapted. Finally, recall the definitions of $ \HHe $ given in \eqref{eq:conv2} and \eqref{eq:conv3}, and its dependence on the classical measure $ \mu $ through the potentials \eqref{eq:V2} and \eqref{eq:V3}.

	\begin{thm}[Ground state energy]
		\label{teo:gse}
		\mbox{}	\\
		Let the operator $ H_{\eps} $ be given either by \eqref{eq:nelson ham} or \eqref{eq:polaron ham}. Also, let the assumptions \eqref{eq:V}, \eqref{eq:omega} and \eqref{eq:g2} %(resp. \eqref{eq:V}, \eqref{eq:omega1} and \eqref{eq:g3}) 
		be satisfied, with the additional request
		\beq
			\label{eq:omega2}\tag{{\bf A}2${}^{\prime}$}
			\omega(\kv) \geq c > 0,	\quad	\mbox{uniformly w.r.t. } \kv \in \R^d, 
		\eeq
		Then we have
		\beq
			\label{eq:gse}
			\framebox{$ \disp\lim_{\eps \to 0} \underline{\sigma}(H_{\eps}) = \inf_{\mu \in \M_{\omega}} \big[ \underline{\sigma} (\HHe) + c(\mu) \big] $.}  
		\eeq
	\end{thm}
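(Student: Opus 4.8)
I would prove \eqref{eq:gse} by establishing the two bounds $\limsup_{\eps\to0}\underline\sigma(H_\eps)\le\inf_{\mu\in\M_\omega}[\underline\sigma(\HHe)+c(\mu)]$ and $\liminf_{\eps\to0}\underline\sigma(H_\eps)\ge\inf_{\mu\in\M_\omega}[\underline\sigma(\HHe)+c(\mu)]$ separately. A reduction used on both sides: $c(\mu)$ is affine in $\mu$ by \eqref{eq:cmu} and the potentials \eqref{eq:V2}--\eqref{eq:V3} are linear in $\mu$, so for each normalized $\psi$ the map $\mu\mapsto\mean{\psi}{\HHe}{\psi}$ is affine, whence $\mu\mapsto\underline\sigma(\HHe)+c(\mu)$ is concave on the convex set $\M_\omega$. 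Since the extreme points of $\M_\omega$ are the Dirac masses $\delta_{z_0}$ with $z_0\in L^2_\omega(\R^d)$, a Jensen-type argument yields
\[
\inf_{\mu\in\M_\omega}\big[\underline\sigma(\HHe)+c(\mu)\big]=\inf_{z_0\in L^2_\omega(\R^d)}\Big[\underline\sigma\Big(\HH_0+\sum_{j=1}^{N} V_{\delta_{z_0}}(\xv_j)\Big)+\big\|\omega^{1/2}z_0\big\|_2^2\Big],
\]
so it is enough to compare the two sides of \eqref{eq:gse} with the right-hand side written over Dirac masses.

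\emph{Upper bound.}
Fix $z_0\in L^2_\omega(\R^d)$ and take $\Psi_\eps=\Xi(z_0)$, the coherent state \eqref{eq:coherent}. A short computation with the displacement operator shows that its field expectations are $\eps$-independent: $\mean{\Xi(z_0)}{\diff\Gamma(\omega)}{\Xi(z_0)}_{\fock(\hh)}=\|\omega^{1/2}z_0\|_2^2=c(\delta_{z_0})$ and $\mean{\Xi(z_0)}{A(\xv)}{\Xi(z_0)}_{\fock(\hh)}=V_{\delta_{z_0}}(\xv)$ (the latter in form sense, $V_{\delta_{z_0}}$ being exactly the partial field expectation of $A(\xv)$), so that $\HH_\eps=\HH_0+\sum_{j=1}^{N} V_{\delta_{z_0}}(\xv_j)$ for every $\eps$. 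Testing $H_\eps$ on $\psi\otimes\Xi(z_0)$ with $\psi$ normalized in a form core of $\HH_0$ then gives $\underline\sigma(H_\eps)\le\underline\sigma\big(\HH_0+\sum_{j}V_{\delta_{z_0}}(\xv_j)\big)+c(\delta_{z_0})$ for all $\eps$; taking the infimum over $z_0$ and invoking the concavity reduction yields the upper bound. (Alternatively one could realise a general $\mu\in\M_\omega$ by states $\Psi_\eps\to\mu$ with $\ceps\to c(\mu)$ and use the quadratic-form convergence underlying Theorems~\ref{teo:conv2}--\ref{teo:conv3}.)

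\emph{Lower bound.}
Pick $\eps_k\to0$ realising $\liminf_\eps\underline\sigma(H_\eps)$ and approximate ground states $\Psi_{\eps_k}$, $\|\Psi_{\eps_k}\|=1$, with $\mean{\Psi_{\eps_k}}{H_{\eps_k}}{\Psi_{\eps_k}}\le\underline\sigma(H_{\eps_k})+\eps_k$. Since $\sum_{j}A(\xv_j)$ is form-bounded with relative bound $<1$ w.r.t.\ $\hfree=\HH_0+\diff\Gamma(\omega)$ uniformly for $\eps\le1$, and $\underline\sigma(H_\eps)$ is bounded uniformly in $\eps$ (above by the upper bound, below by completing the square, which for the Nelson model uses the gap \eqref{eq:omega2}), one obtains the a priori bound $\mean{\Psi_{\eps_k}}{\HH_0}{\Psi_{\eps_k}}+\mean{\Psi_{\eps_k}}{\diff\Gamma(\omega)}{\Psi_{\eps_k}}\le C$; together with \eqref{eq:omega2} this gives $\mean{\Psi_{\eps_k}}{\diff\Gamma(1)}{\Psi_{\eps_k}}\le C/c$, so \eqref{eq:state} holds uniformly. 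Passing to a further subsequence, $\{\Psi_{\eps_k}\}$ admits a quasi-classical Wigner measure (the joint particle--field analogue of the convergence behind \eqref{eq:lim measure}): a probability measure $\mu$ on $L^2_\omega(\R^d)$ and a measurable family $z\mapsto\gamma_z$ of states on $L^2(\R^{dN})$ such that expectations of (particle observable)$\,\otimes\,$(field Weyl operator) converge to the corresponding integrals against $\gamma_z\,\diff\mu(z)$; here \eqref{eq:omega2} is precisely what rules out escape of field mass to infinity, so $\mu(L^2_\omega)=1$ and $\mu\in\M_\omega$. Along the subsequence, lower semicontinuity (Fatou, $\HH_0$ and $\diff\Gamma(\omega)$ bounded below) gives $\liminf_k\mean{\Psi_{\eps_k}}{\HH_0}{\Psi_{\eps_k}}\ge\int\diff\mu(z)\,\tr(\HH_0\gamma_z)$ and $\liminf_k\mean{\Psi_{\eps_k}}{\diff\Gamma(\omega)}{\Psi_{\eps_k}}\ge c(\mu)$, while the interaction term converges exactly to $\int\diff\mu(z)\,\tr\big(\sum_{j}V_{\delta_z}(\xv_j)\gamma_z\big)$ by the observable-convergence machinery of the quasi-classical limit (truncating the unbounded $A(\xv)$, localising in the particle variables, and controlling remainders through the a priori bounds, as in the proofs of Theorems~\ref{teo:conv2}--\ref{teo:conv3}). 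Adding the three contributions and using $\tr\big((\HH_0+\sum_{j}V_{\delta_z}(\xv_j))\gamma_z\big)\ge\underline\sigma\big(\HH_0+\sum_{j}V_{\delta_z}(\xv_j)\big)$ together with $c(\mu)=\int c(\delta_z)\,\diff\mu(z)$, one gets
\[
\liminf_{k}\mean{\Psi_{\eps_k}}{H_{\eps_k}}{\Psi_{\eps_k}}\ge\int\diff\mu(z)\Big[\underline\sigma\Big(\HH_0+\sum_{j}V_{\delta_z}(\xv_j)\Big)+c(\delta_z)\Big]\ge\inf_{z_0\in L^2_\omega(\R^d)}\Big[\underline\sigma\Big(\HH_0+\sum_{j}V_{\delta_{z_0}}(\xv_j)\Big)+c(\delta_{z_0})\Big].
\]
Since $\mean{\Psi_{\eps_k}}{H_{\eps_k}}{\Psi_{\eps_k}}\le\underline\sigma(H_{\eps_k})+\eps_k$ and $\underline\sigma(H_{\eps_k})\to\liminf_\eps\underline\sigma(H_\eps)$, the left-hand side above is $\le\liminf_\eps\underline\sigma(H_\eps)$, so $\liminf_\eps\underline\sigma(H_\eps)\ge\inf_{z_0\in L^2_\omega(\R^d)}[\cdots]=\inf_{\mu\in\M_\omega}[\underline\sigma(\HHe)+c(\mu)]$ by the concavity reduction, which is the lower bound.

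\emph{Main obstacle.}
The delicate part is the lower bound: the $\eps$-uniform control of the (only approximately product) ground states and, above all, the extraction of the state-valued quasi-classical measure together with the \emph{exact} convergence of the interaction term, which couples the semiclassical field limit with the quantum particle degrees of freedom and is sensitive because $A(\xv)$ is unbounded. Assumption \eqref{eq:omega2} is used exactly here — to prevent loss of field mass in the limiting measure and to keep the spectra bounded below — which is also why the massless case must be excluded (cf.\ Remark~\ref{rem:massless}).
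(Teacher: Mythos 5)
Your proposal is correct in spirit and gets both inequalities by an argument that is consistent with the theorem, but the route differs noticeably from the paper's, most markedly in the lower bound.

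\textbf{Reduction to Dirac masses.} Your Jensen/concavity argument (each $\mean{\psi}{\HHe(\mu)}{\psi}+c(\mu)$ is affine in $\mu$, hence $\underline\sigma(\HHe)+c(\mu)$ is concave and its infimum reduces to the inf over Dirac masses $\delta_{z_0}$) is a clean shortcut and is actually slightly stronger than what the paper proves. The paper's Lemma~\ref{lemma:finite measures} only reduces to \emph{finitely supported} measures $\Mfin$, and does so via density of $\Mfin$ in $(\M_\omega,W_2)$ together with norm continuity of $\mu\mapsto\KKe(\mu)$ and Kato's theorem on stability of the spectrum, which is a different mechanism. Both reductions are valid; yours is more elementary, modulo a measurability check on $z\mapsto\underline\sigma(\HHe(\delta_z))+c(\delta_z)$ that should be made explicit.

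\textbf{Upper bound.} Your argument (test $H_\eps$ on $\psi\otimes\Xi(z_0)$, note that for $\eps$-independent $z_0$ the coherent expectations are $\eps$-independent, then optimise over $z_0$ and $\psi$) matches the paper's upper bound in substance; the paper phrases it with a convex combination $\sum_i\alpha_i\mean{\varphi\otimes\Xi(z_i)}{H}{\varphi\otimes\Xi(z_i)}$ because it has reduced to $\Mfin$ rather than to single Dirac masses, but the content is the same.

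\textbf{Lower bound.} This is where you genuinely diverge from the paper, and where the main gap in your sketch lies. You extract approximate minimizers $\Psi_{\eps_k}$, get the $\eps$-uniform bounds on $\diff\Gamma(1)$, $\diff\Gamma(\omega)$, $\HH_0$ and then invoke a \emph{joint} particle–field (state-valued) quasi-classical Wigner measure together with Fatou-type lower semicontinuity for $\HH_0$ and $\diff\Gamma(\omega)$ and exact convergence of the interaction term. This is a legitimate and conceptually attractive strategy, but that machinery — disintegration of a joint state into a probability measure $\mu$ on the field phase space and a $\mu$-measurable family $z\mapsto\gamma_z$ of particle states, with convergence of expectations of \emph{entangled} observables of the form (particle observable)$\otimes$(Weyl operator) — is \emph{not} developed in this paper; Proposition~\ref{pro:classical limit} only treats pure field observables for a fixed field state. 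In particular, the LSC of $\tr(\HH_0\gamma_z)$ under weak convergence in trace norm and the "truncate, localize, control remainders" step for the interaction are left entirely to references that do not yet exist in this paper. As written, your lower bound is a roadmap whose central lemma (joint Wigner-measure extraction) you would still have to supply.

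The paper instead avoids any joint Wigner measure: it uses the fact that $\{\psi\otimes\Xi(f):\psi\in C_0^\infty,\,f\in L^2_\omega\}$ is total, writes the near-minimizer $\Pi_{\eps,\delta}$ as a \emph{finite}, $\eps$-independent (in $M(\delta)$, $z_{i,\delta}$, $\psi_{i,\delta}$) combination $\sum_i\lambda_{i,\delta}(\eps)\,\psi_{i,\delta}\otimes\Xi(z_{i,\delta})$, and computes $\mean{\Pi_{\eps,\delta}}{H}{\Pi_{\eps,\delta}}$ explicitly. The off-diagonal cross terms carry the factor $e^{-\frac{1}{2\eps}\|z_{i,\delta}-z_{k,\delta}\|_2^2}$, which kills them as $\eps\to0$, leaving a convex combination of $\mean{\psi_{i,\delta}}{\KKe(\delta_{z_{i,\delta}})}{\psi_{i,\delta}}$ that is bounded below by $\inf_{\mu\in\Mfin}\underline\sigma(\KKe(\mu))$. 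This is more hands-on but fully self-contained, requiring only the single-field Wigner-measure convergence already established.

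Finally, the paper treats the polaron separately (Sect.~\ref{sec:gs polaron}), reformulating everything at the quadratic form level because the operators do not make pointwise sense; the continuity-in-$\mu$ step (analogue of your reduction) must be carried out on forms (Lemma~\ref{lemma:8 polaron}). Your sketch treats Nelson and polaron uniformly, which is fine for the a priori bounds but means the form-level reformulation of the reduction step would also need to be carried out.

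In summary: your upper bound and reduction are correct and somewhat more economical than the paper's; your lower bound is a genuinely different approach via joint Wigner measures that is plausible but presupposes technology not built in this paper, and would need to be filled in (the extraction of the state-valued Wigner measure and the convergence of the interaction against it) to be a complete proof.
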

	
	\begin{remark}[Boundedness from below]
		\label{rem:boundedness}
		\mbox{}	\\
		Since $ \sigma(H_{\eps}) $ is bounded from below (see Propositions \ref{pro:sa1} and
                \ref{pro:sa2}), and the bound can be actually chosen uniformly w.r.t.  $\varepsilon$, we
                implicitly state that the r.h.s. is also finite (Propositions \ref{pro:bd kke} and
                \ref{pro:bd tmu}). In order for this to be true, the presence of the constant $ c(\mu) $ is
                obviously crucial: the free energy of the field is needed in order to control from below the
                interaction term.
	\end{remark}
	
	\begin{remark}[Nelson massless model]
		\label{rem:massless}
		\mbox{}	\\
		The reason why the Nelson massless model is excluded from the statement is that, without the
                bound \eqref{eq:omega2}, there can be quantum states for which the associated measure $ \mu $
                is concentrated on a suitable homogeneous Sobolev space and hence outside of $L^2$. In fact,
                $ \mu $ is not in general a true probability measure on $L^2$ but only a cylindrical measure
                \cite{Falconi:2016ab}, with respect to which it is possible to integrate only cylindrical
                functions. We still expect the result to be true for the massless Nelson model; the proof
                however would require to deal with such technical problems and we omit its discussion here,
                for the sake of simplicity.
	\end{remark}
	
	\begin{remark}[Convergence of ground states]
		\label{rem:gs conv}
		\mbox{}	\\
		The reader might wonder whether it is possible to deduce from the ground state energy
                convergence \eqref{eq:gse} an analogous result for the ground states. The major obstruction in
                this direction is given by the existence of the ground state itself: it is indeed known that,
                for instance, the massive Nelson model admits a ground state, once the translational symmetry
                has been broken. However, it is much more complicated to show that the infimum of the
                r.h.s. of \eqref{eq:gse} is actually reached on a configuration
                $ \mu_{\mathrm{gs}}, \psi_{\mathrm{gs}} $: for any given measure $ \mu $, the Schr\"{o}dinger
                operator $ \HHe $ certainly has a ground state $ \psi_{\mu} $ but it is far from obvious that it
                would converge on a minimizing sequence $ \mu_n $.
	\end{remark}

\subsection{Trapping potentials}
\label{sec:traps}

We conclude the Section by presenting a generalization of the results discussed in Sect. \ref{sec:discrete}--\ref{sec:polaron}, {\it i.e.}, the convergence of the effective particle Hamiltonians to Schr\"{o}dinger operators with trapping. Indeed, as we have commented extensively, the effective potentials $ V_{\mu} $ obtained in the quasi-classical limit in Theorems \ref{teo:conv1}, \ref{teo:conv2} and \ref{teo:conv3} are never traps. In fact, with the exception of the polaron, those potentials always vanish at infinity.  So in this discussion we take a different point of view: instead of considering a rather general state for the full system, but with good properties in terms of the classical limit, we restrict the class of field configurations to coherent states and drop the regularity assumptions, in order to find out whether one can reproduce a wider class of effective potentials. As we are going to see, this is indeed the case and we will show that one can derive any reasonable confining trap.

Let us now consider the Nelson model defined by \eqref{eq:nelson ham} and recall the definition \eqref{eq:coherent} of a squeezed coherent state:
\beq
	\label{eq:coherent state}
	\Xi(f) := \WW\lf( \tx\frac{1}{i\eps} f\ri)\Omega,
\eeq
where $\Omega$ is the vacuum in $ \fock(L^2 (\mathbb{R}^d ))$ and $\WW(f)$, $f\in L^2 (\mathbb{R}^d )$, the Weyl operator. We have seen in \eqref{eq:coherent convergence} that
\bdm
	\Xi(f) \xrightarrow[\eps \to 0]{} \delta(z - f),
\edm
in the sense of \eqref{eq:eps convergence}. If $ f $ is independent of $ \eps $ and belongs to $ L^2(\R^d) $, the potential generated in the limit is always vanishing at infinity. Therefore we modify
the coherent vector, in such a way that it converges to a point measure on $\D^{\prime}(\mathbb{R}^d)$
concentrated \emph{outside} of $L^2 (\mathbb{R}^d )$. 

We are now ready to state the main result of this section. Let
\beq
	\label{eq:trap}
	W \in L^2_{\mathrm{loc}} (\mathbb{R}^d; \mathbb{R}^+ )
\eeq
be any positive confining potential and assume that $\lambda$ admits a polynomially bounded
multiplicative inverse $\frac{1}{\lambda}$, then we denote by $f_{W,\varepsilon} \in C^{\infty}_0(\R^d) $ the function
\beq
	\label{eq:fw}
 	f_{W,\varepsilon}(\kv)= \frac{1}{2(2\pi)^{d/2} \lambda^{*}(\kv)} \lf(  \widehat{\varphi_{\varepsilon} *  W} \ri)(\kv),
\eeq
where $ \varphi_{\varepsilon}(\xv)=\varepsilon^{-d} \varphi(\xv/\varepsilon) $, $ \varphi \in C^{\infty}_0(\R^d) $, is a suitable mollifier (see Lemma \ref{lemma:l2 convergence} for further details). The coherent state we want to consider has then the form
\beq
	\label{eq:coherent trap}
	\Xi\lf(f_{W,\eps}\ri),
\eeq
and notably it does not satisfy the assumptions \eqref{eq:state}. As a matter of fact, by Proposition~\ref{eq:coherent potential}, it follows that
\begin{align*}
  &\meanlr{\Xi\lf(f_{W,\eps}\ri)}{\mathrm{d}\Gamma(1)}{\Xi\lf(f_{W,\eps}\ri)} = \lf\| f_{W,\eps}  \ri\|_{L^2}^2\; ,\\
  &\meanlr{\Xi\lf(f_{W,\eps}\ri)}{\mathrm{d}\Gamma(\omega)}{\Xi\lf(f_{W,\eps}\ri)} = \lf\| \sqrt{\omega} f_{W,\eps}  \ri\|_{L^2}^{2},
\end{align*}
and both right hand sides diverge as $\varepsilon\to 0$ whenever $W\notin L^2(\R^d)$.

	\begin{thm}[Effective Hamiltonian]
		\label{teo:trap}
		\mbox{}	\\
		Let the assumptions \eqref{eq:V}, \eqref{eq:omega} and \eqref{eq:g1} be satisfied and additionally assume that $ \frac{1}{\lambda(\kv)} $ is polynomially bounded and $ \lambda $, $ \sqrt{\omega} \lambda \in L^2(\R^d) $. Then we have
		\beq
			\label{eq:conv4}
			\framebox{$ \disp\meanlr{\Xi\lf(f_{W,\eps}\ri)}{H}{\Xi\lf(f_{W,\eps}\ri)}_{\fock(L^2(\R^d))} \xrightarrow[\eps \to 0]{\mathrm{s}-\mathrm{res}} \HHe = \HH_0 + \disp\sum_{j = 1}^N W(\xv_j)$,}
		\eeq
    		and $ \HHe $ is essentially self-adjoint on $ C^{\infty}_0(\R^d) $.
	\end{thm}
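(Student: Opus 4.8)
The plan is to compute the expectation $\meanlr{\Xi(f_{W,\eps})}{H}{\Xi(f_{W,\eps})}$ explicitly by exploiting the algebraic properties of the Weyl operators, and then to prove strong resolvent convergence to $\HHe = \HH_0 + \sum_j W(\xv_j)$ by a combination of (i) an explicit $L^2$-approximation of $W$ by the mollified coupling functions, and (ii) a general criterion for strong resolvent convergence for forms/operators that are bounded below and converge on a common core. First I would recall the Weyl relations, in particular that conjugating the field operators by $\WW(\tfrac{1}{i\eps}f)$ shifts the annihilation operator: $\WW(\tfrac{1}{i\eps}f)^* a(g) \WW(\tfrac{1}{i\eps}f) = a(g) + \braket{g}{f}_{\hh}$ (up to the conventions fixed in the paper). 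Using this, and the fact that $\Xi(f) = \WW(\tfrac1{i\eps}f)\Omega$ with $a(g)\Omega = 0$, one gets, for each $j$,
\[
  \meanlr{\Xi(f_{W,\eps})}{A(\xv_j)}{\Xi(f_{W,\eps})} = 2\Re \int_{\R^d}\diff\kv\; \lambda^*(\kv)\, e^{i\kv\cdot\xv_j}\, f_{W,\eps}(\kv),
\]
and, since by construction $f_{W,\eps}(\kv) = \tfrac{1}{2(2\pi)^{d/2}\lambda^*(\kv)}\widehat{\varphi_\eps * W}(\kv)$, this is exactly $(\varphi_\eps * W)(\xv_j)$. Similarly $\meanlr{\Xi(f_{W,\eps})}{\diff\Gamma(\omega)}{\Xi(f_{W,\eps})} = \|\sqrt\omega f_{W,\eps}\|_2^2 = c_\eps$, which is subtracted off in the definition of $\HHe$-prelimit (this is where the hypothesis $\sqrt\omega\lambda\in L^2$, together with polynomial boundedness of $1/\lambda$ and the mollifier cutoff, is used to make $f_{W,\eps}$ genuinely in $L^2$, so that all the above expectations are finite even though they blow up as $\eps\to 0$). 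Hence
\[
  \meanlr{\Xi(f_{W,\eps})}{H}{\Xi(f_{W,\eps})} - c_\eps = \HH_0 + \sum_{j=1}^N (\varphi_\eps * W)(\xv_j),
\]
an honest Schr\"odinger operator with a smooth potential.

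The second step is to identify the limiting operator and its core. Since $W\in L^2_{\mathrm{loc}}(\R^d;\R^+)$ is positive, $\HH_0 + \sum_j W(\xv_j)$ is a nonnegative symmetric operator on $C^\infty_0(\R^{dN})$; by a Faris--Lavine / Kato-type argument — positivity of $W$ plus (\ref{eq:V}) giving $\HH_0$ essentially self-adjoint on $C^\infty_0$ — the operator $\HHe$ is essentially self-adjoint on $C^\infty_0(\R^{dN})$ (here I would cite the standard result that a positive potential in $L^2_{\mathrm{loc}}$ added to an essentially self-adjoint Schr\"odinger operator keeps essential self-adjointness on $C^\infty_0$; note the statement's ``$C^\infty_0(\R^d)$'' should read $C^\infty_0(\R^{dN})$). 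The approximants $\HH_\eps := \HH_0 + \sum_j(\varphi_\eps * W)(\xv_j)$ are likewise essentially self-adjoint on $C^\infty_0$, with $\varphi_\eps*W \geq 0$ as well (choosing $\varphi \geq 0$). Then I would invoke the criterion: if $\HH_\eps$ and $\HHe$ are self-adjoint and bounded below, $C^\infty_0$ is a common core for all of them, and $\HH_\eps\psi \to \HHe\psi$ in $L^2$ for every $\psi\in C^\infty_0$, then $\HH_\eps \to \HHe$ in strong resolvent sense. The remaining analytic input is precisely that $\varphi_\eps * W \to W$ in $L^2_{\mathrm{loc}}$ (standard mollifier property for $W\in L^2_{\mathrm{loc}}$), which for $\psi\in C^\infty_0$ (compact support) gives $\|(\varphi_\eps*W - W)\psi\|_{L^2} \to 0$; combined with $\HH_0\psi$ being $\eps$-independent this yields $\HH_\eps\psi\to\HHe\psi$ in $L^2$. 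This is encapsulated in Lemma~\ref{lemma:l2 convergence} referenced in the statement, whose content is exactly the convergence $\varphi_\eps * W \to W$ together with the choice of mollifier making $f_{W,\eps}\in C^\infty_0$.

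The main obstacle is not in the resolvent-convergence machinery, which is classical, but in the bookkeeping that makes $f_{W,\eps}$ a legitimate $L^2$ (indeed $C^\infty_0$) function for each fixed $\eps>0$ while simultaneously ensuring $\varphi_\eps*W\to W$ locally: one must choose the mollifier $\varphi$ so that $\widehat{\varphi_\eps}$ decays fast enough to absorb the polynomial growth of $1/\lambda$ (hence $\varphi\in C^\infty_0$, so $\widehat{\varphi_\eps}$ is Schwartz-class and kills any polynomial), \emph{and} so that $\widehat{\varphi_\eps * W} = \widehat{\varphi_\eps}\,\widehat W$ is compactly supported or at least decaying enough that $f_{W,\eps}/\text{(reg. of }\lambda)$ stays in $L^2$ — this is delicate because $\widehat W$ is only a tempered distribution a priori. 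The resolution is to mollify $W$ first (so $\varphi_\eps * W\in C^\infty$) and additionally spatially cut off, or to use that $\varphi\in C^\infty_0$ forces $\widehat{\varphi_\eps}\,\widehat W$ to be a smooth function with controlled growth; getting all these compatibility conditions to hold simultaneously — and checking that $\Xi(f_{W,\eps})$ is then a well-defined normalized vector in Fock space — is the technical heart of the argument, and is where I expect most of the work (and the precise hypotheses $\lambda,\sqrt\omega\lambda\in L^2$ and $1/\lambda$ polynomially bounded) to be consumed. Once $f_{W,\eps}\in L^2$ is secured, everything else is the routine Weyl-relation computation and the standard strong-resolvent-convergence argument above.
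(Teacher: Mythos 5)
Your proposal is correct and follows essentially the same route as the paper: the Weyl-shift computation (Proposition~\ref{pro:coherent states}) identifying the prelimit effective potential as $\varphi_\eps * W$, essential self-adjointness of $\HH_\eps$ and $\HHe$ on $C^\infty_0(\R^{dN})$ via the Kato-type criterion for positive $L^2_{\mathrm{loc}}$ potentials, and then strong resolvent convergence from the common-core criterion combined with the mollifier convergence $\varphi_\eps * W \to W$ in $L^2_{\mathrm{loc}}$ (Lemmas~\ref{lemma:l2 convergence} and \ref{lemma:dprime}). Your extended discussion of the $f_{W,\eps}\in L^2$ regularity issue is a fair flag — the paper handles it more tersely by asserting rapid decay of $\widehat{\varphi_\eps * W}$ — but the argument you outline matches the paper's in all essential steps.
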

	
The paradigmatic case one can think of is the derivation of an harmonic trapping potential: $ W(\xv) = \alpha |\xv|^2 $ satisfies indeed the hypothesis of the Theorem and therefore the partial trace of  $ H $ on the coherent state $ \Xi(f_{W,\eps}) $ converges in strong resolvent sense to the Schr\"{o}dinger operator
\bdm
   \HHe =\sum_{j=1}^n \lf( -\Delta_j+\alpha \lf| \xv_j \ri|^2 \ri) + U(\xv_1, \ldots, \xv_N).
\edm
A similar statement holds true for $ W(\xv) = \alpha |\xv|^s $, $ s > 0 $, or, more in general, for any positive potential diverging at infinity. The magneto-optical traps considered in condensed matter physics are then reproduced as effective potentials emerging from the interaction of quantum particles with a radiation field in the quasi-classical regime.

	\begin{remark}[Field energy]
		\label{rem:field energy}
		\mbox{}	\\
		It is interesting to remark that all the confining potentials described above can be obtained
                in the quasi-classical limit only with an infinite energy of the field. More precisely,
                whenever $ W $ is trapping, the free energy $ c_{\eps} \to + \infty $, as $ \eps \to 0 $ :
                recall that for a squeezed coherent state $\Xi(f_{\varepsilon})$, $c_{\varepsilon}$ takes the form
                \begin{equation*}
                  c_{\varepsilon}=\lVert \sqrt{\omega} f_{\varepsilon}  \rVert_2^2
                \end{equation*}
                (see Proposition \ref{pro:coherent states} for further details) and therefore it diverges
                in the limit $\varepsilon\to 0$, whenever $\lim_{\varepsilon\to 0}f_{\varepsilon}\notin L^2$ in
                the distributional sense. This is however not surprising since the physical approximation we
                are considering is the one of large number of field excitations: in order to have a trapping
                effective potential, the field must be very strong. Therefore the number of excitations has to
                diverge even faster and the field energy has to become the dominant term in the energy.
	\end{remark}

\section{Proofs}
\label{sec:proofs}

\subsection{Preliminaries}
\label{sec:model}

We first discuss the well-posedness of the models we plan to study and state the explicit technical assumptions we make. 

The potential $ U $, which is supposed to describe both an additional external trapping and the particle interaction, is assumed to be such that $ \HH_0 = - \Delta + U $ is self-adjoint and bounded from below on $ L^2 (\mathbb{R}^{dN} )$. For concreteness we require
\beq
	\tag{{\bf A}1}
	U \in L^2_{\mathrm{loc}}\big(\R^{dN};\R^+\big) + K_{\ll}\big(\R^{dN}\big),
\eeq
where 
$$ 
	K_{\ll} \big(\R^{dN}\big) = \lf\{ V : \R^{dN} \to \R \: \big| \: V \mbox{ is infinitesimally bounded w.r.t. }  - \Delta \ri\},
$$
is the set of multiplication operators which are Kato-infinitesimally small w.r.t. $ - \Delta $. In the following we will use the notation $ U =: U_+ + U_{\ll}  $ to distinguish the positive part $ U_+ $ of the potential from the infinitesimal one $ U_{\ll} $. With such assumptions $ \HH_0 $ is essentially self-adjoint on $C_0^{\infty}\lf(\R^{dN}\ri)$ and self-adjoint and bounded from below on
$$
	 \dom(\HH_0) = \lf\{ \psi \in H^2(\R^{dN}) \: \big| \: U_+ \psi \in L^2\big(\R^{dN}\big) \ri\}.
$$
We aim at modelling a Coulomb-type interaction between the particles and, possibly, the presence of an external trapping potential, that is assumed to be positive without loss of generality.

Concerning the field part of the free Hamiltonian $ \hfree $, we recall that
\beq
	\tag{{\bf A}2}
	\omega(\kv) \geq 0,		
\eeq
so that $ \diff \Gamma(\omega) $ is a self-adjoint operator on $ \Gamma_{\mathrm{sym}}\lf(\mathfrak{H}\ri) $ with domain $ \dom(\diff \Gamma(\omega)) $.

It remains then to give a meaning to the interaction term. For any $ g(\xv) \in L^{\infty}(\R^{d}; \hh)$, one can easily define the creation and annihilation operators $ a(g(\xv)) $, $ a^{\dagger}(g(\xv )) $ and their sum $ a(g(\xv)) + a^{\dagger}(g(\xv)) $, as closed and densely defined operators on the Fock space  $ \Gamma_{\mathrm{sym}}\lf(\mathfrak{H}\ri) $ for a.e. $ \xv \in \R^d$. 

The simplest case we are going to consider is $ \hh = \ell^2(\Z^d) $, in which case
\beq
	a^{\#}(g(\xv)) : = \sum_{\nv \in \Z^d} a^{\#}_{\nv} g_{\nv}(\xv),
\eeq
with $ a^{\#}_{\nv} $ the usual creation and annihilation operators associated with the frequencies $ \kv_{\nv} \in \R^d $, such that 
\bdm
	\lf[ a_{\nv}, a^{\dagger}_{\mv} \ri] = \varepsilon \delta_{n_1,m_1} \dotsm \delta_{n_d,m_d},
\edm
and 
\beq
	\lf\{ g_{\nv}(\xv) \ri\}_{\nv \in \Z^d} \in \ell^2(\Z^d),	\quad	\mbox{for a.e. } \xv \in \R^d.
\eeq
The dispersion relation is in this case set equal to
\beq
	\omega(\kv_{\nv}) = \omega_{\nv} : = \lf| \kv_{\nv} \ri|.
\eeq

Similarly, when $ \hh = L^2(\R^d) $ (Nelson model), 
\beq
	a^{\#}(g(\xv)) : = \int_{\R^d} \diff \kv \: a^{\#}(\kv) g(\kv;\xv),
\eeq
with $ a^{\#}(\kv) $ the usual operator-valued distributions satisfying \eqref{eq:CCR} and 
\beq	
	g(\: \cdot \:; \xv) \in L^2(\R^d),	\quad	\mbox{for a.e. } \xv \in \R^d.
\eeq

In both cases described above we define the interaction as 
\beq
	\label{eq:interaction}
	\sum_{j = 1}^N A(\xv_j) : = \sum_{j = 1}^{N} \lf[ a(g(\xv_j)) + a^{\dagger}(g(\xv_j)) \ri],
\eeq
with
\beq
	\tag{{\bf A}4}
	g_{\nv}(\xv) = \lambda_{\nv} e^{- i \kv_{\nv} \cdot \xv},		\qquad		\lf\{ \lambda_{\nv} \ri\}_{\nv \in \Z^d} \in \ell^2(\Z^d), 
\eeq
in the first case and 
\beq	
	\tag{{\bf A}4${}^{\prime}$}
	g(\kv; \xv) = \lambda(\kv) e^{- i \kv \cdot \xv},	\qquad		\lambda \in L^2(\R^d),
\eeq
in the second one. The polaron is obviously not covered by the assumptions above and we will discuss it separately.

A preliminary but crucial result for our analysis is the self-adjointness of the full Hamiltonian of the system, which in the case of the Nelson model (and a fortiori for a discrete set of frequencies) can be proven directly using the properties of at most quadratic interactions in the Fock space. We refer to \cite{MR0266533,Falconi:2014aa} for a detailed proof.  We denote by $ C_0^{\infty}(\R^d) $ the set of smooth functions with compact support and, consequently, $ C_0^{\infty}(\diff \Gamma(1)) \subset \fock(\hh) $ stands for the vectors in $ \fock(\hh) $ with finitely many particles.

	\begin{proposition}[Self-adjointness of $ H $ -- cases 1. \& 2.]
  		\label{pro:sa1}
  		\mbox{}	\\
 		Let $ H $ be given by \eqref{eq:full ham} with interaction \eqref{eq:interaction} and let the
                assumptions \eqref{eq:V}, \eqref{eq:omega} and \eqref{eq:g1} (resp. \eqref{eq:g2}) be
                satisfied. Then the Hamiltonian $H$ is essentially self-adjoint on the domain
                $ \dom(\HH_0) \cap \dom\lf(\mathrm{d}\Gamma(\omega) \ri)\cap C_0^{\infty}(\mathrm{d}\Gamma(1)) $. If in
                addition $\omega^{-1/2}\lambda\in \ell^2(\mathbb{Z}^d)$ (resp. $L^2 (\mathbb{R}^d )$), then $H$ is self adjoint
                on $\dom(\HH_0) \cap \dom\lf(\mathrm{d}\Gamma(\omega \ri))$ and bounded from below.
	\end{proposition}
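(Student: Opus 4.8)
The plan is to treat the interaction term $\sum_{j=1}^N A(\xv_j) = \sum_{j=1}^N [a(g(\xv_j)) + a^\dagger(g(\xv_j))]$ as a perturbation of $\hfree = \HH_0 + \diff\Gamma(\omega)$ and to invoke standard abstract criteria for self-adjointness of linear (first-order) field perturbations. First I would set up the pull-through / commutator estimates: for $g(\kv;\xv) = \lambda(\kv) e^{-i\kv\cdot\xv}$ (or the discrete analogue), one has $\|g(\xv)\|_{\hh} = \|\lambda\|$ and, when the extra hypothesis holds, $\|\omega^{-1/2} g(\xv)\|_{\hh} = \|\omega^{-1/2}\lambda\|$, both uniformly in $\xv$. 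The classical bound for annihilation/creation operators,
\beq
	\| a(g(\xv)) \Psi \| \leq \| \omega^{-1/2} g(\xv) \|_{\hh} \, \| \diff\Gamma(\omega)^{1/2} \Psi \|, \qquad \| a^\dagger(g(\xv)) \Psi \| \leq \| \omega^{-1/2} g(\xv) \|_{\hh} \, \| \diff\Gamma(\omega)^{1/2} \Psi \| + \eps^{1/2} \| g(\xv) \|_{\hh} \, \| \Psi \|,
\eeq
shows that, fibered over $\xv$, the interaction is relatively bounded with respect to $\diff\Gamma(\omega)^{1/2}$, hence infinitesimally relatively bounded with respect to $\diff\Gamma(\omega)$ itself (and a fortiori with respect to $\hfree$, since $\HH_0 \geq -C$). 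By the Kato--Rellich theorem this already yields self-adjointness on $\dom(\HH_0) \cap \dom(\diff\Gamma(\omega))$ and boundedness below, once the bound is integrated/assembled over the $N$ particle coordinates — this is where one uses that $g(\xv)$ is a bounded $\hh$-valued function so the operator $\sum_j A(\xv_j)$ acts as a direct integral over $\xv \in \R^{dN}$ of fiberwise-defined closed operators.

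For the first (weaker) conclusion, without assuming $\omega^{-1/2}\lambda \in L^2$, one only has $\|g(\xv)\|_{\hh} = \|\lambda\| < \infty$ but no control by $\diff\Gamma(\omega)^{1/2}$. Here I would instead use the elementary $N_\eps$-bound $\| a^{\#}(g(\xv)) \Psi \| \leq C \| \lambda \| \, \| (\diff\Gamma(1) + \eps)^{1/2} \Psi \|$ together with the fact that on the domain $C_0^\infty(\diff\Gamma(1))$ of finite-particle vectors everything is well defined and symmetric, and appeal to a Nelson-type commutator theorem with conserved quantity $\diff\Gamma(1) + 1$ (or directly quote the quadratic-interaction machinery of \cite{MR0266533,Falconi:2014aa} as the paper indicates). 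The key point is that $H$ restricted to $\dom(\HH_0) \cap \dom(\diff\Gamma(\omega)) \cap C_0^\infty(\diff\Gamma(1))$ is essentially self-adjoint because the interaction is at most linear in creation/annihilation operators and hence controlled by $\diff\Gamma(1)$, which is comparable to $\hfree$ plus a correction on the finite-particle subspace.

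The main obstacle I expect is the second conclusion in the borderline/infrared regime: when $\omega$ is only assumed $\geq 0$ (massless case), $\omega^{-1/2}\lambda \in L^2$ is a genuine infrared condition, and without it the $\diff\Gamma(\omega)^{1/2}$-bound degenerates, so one cannot get self-adjointness on the ``nice'' domain $\dom(\HH_0) \cap \dom(\diff\Gamma(\omega))$ but only essential self-adjointness on the smaller finite-particle domain — hence the two-tiered statement. Technically, the delicate verification is that the fiber integral $\int^\oplus_{\R^{dN}} [a(g(\xv)) + a^\dagger(g(\xv))]\, \diff\xv$ is a well-defined closed operator with the claimed relative bound uniformly in $\xv$ (this uses $g \in L^\infty(\R^d;\hh)$ crucially and measurability of the fibers), and that adding $\HH_0$ — itself a sum of a positive part and a Kato-small part by \eqref{eq:V} — does not spoil the Kato--Rellich argument; these are routine given the estimates above but require care in bookkeeping the $\eps$-dependence, which one checks is harmless since the $\eps^{1/2}\|g(\xv)\|$ term is a bounded, hence infinitesimal, perturbation. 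I would conclude by citing \cite{MR0266533,Falconi:2014aa} for the full details, as the paper explicitly does, rather than reproducing the abstract self-adjointness theorem.
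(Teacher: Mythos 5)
Your proposal is correct and matches the paper's proof essentially step for step: the first conclusion (essential self-adjointness on the finite-particle domain) is dispatched, as the paper does, by appealing to the abstract criterion of \cite{Falconi:2014aa} (and \cite{MR0266533}), while the second conclusion is obtained via Kato--Rellich from the same estimate $\| a(g(\xv))\Psi\|\leq \|\omega^{-1/2}g\|_{L^{\infty}(\R^d;\hh)}\|\sqrt{\diff\Gamma(\omega)}\Psi\|$ that appears in the Appendix as Lemma \ref{lemma:est nelson}, exactly as carried out in Proposition \ref{pro:kato-rellich}. The only cosmetic difference is your choice of comparison operator (you compare against $\diff\Gamma(\omega)$ and then $\hfree$, whereas the paper compares directly against $-\Delta+U_++\diff\Gamma(\omega)$ and folds in the Kato-small part $U_\ll$ at the end), which does not change the argument.
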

	
	\begin{proof}
          The first part of the statement is a straightforward application of \cite[Theorem
          3.1]{Falconi:2014aa}. Under the additional regularity assumptions on $\lambda$, the exact domain of
          self-adjointness and boundedness from below are obtained via an application of Kato-Rellich Theorem:
          one can indeed show that both $ U_{\ll} $ (by assumption) and the interaction term are infinitesimally
          small w.r.t. to $ \hamf + U_+ $ in the sense of Kato. We postpone the details to the Appendix.
	\end{proof}
	
As anticipated, the polaron case is not covered by the above result and has to be discussed separately. Fr\"{o}hlich polaron Hamiltonian is indeed identified by the choices
\beq
	\label{eq:omega1}%\tag{{\bf A}2${}^{\prime}$}
	\omega(\kv) = 1,
\eeq
and
\beq
	\label{eq:g3}%\tag{{\bf A}4${}^{\prime\prime}$}
	g(\kv; \xv) = \frac{1}{|\kv|^{\frac{d-1}{2}}} e^{- i \kv \cdot \xv},	
\eeq
which is clearly not in $ L^{\infty}(\R^d; L^2(\R^d)) $. In fact, the only way to give a meaning to the formal expression $H $ is through its quadratic form 
\beq
	\label{eq:formQ}
	Q_{H}[\Psi] : = \bra{\Psi}{H}\ket{\Psi}, 
\eeq
which can be shown to be well defined at least in a dense subset of the Hilbert space (see the Appendix). Moreover one can prove (see, \emph{e.g.}, \cite{Frank:2014aa,Griesemer:2015aa}) that the form is closable and its closure defines a unique self-adjoint operator:

	\begin{proposition}[Self-adjointness of $ H $ -- case 3.]
  		\label{pro:sa2}	
  		\mbox{}	\\
  		Let $ H $ be given by \eqref{eq:polaron ham} with interaction \eqref{eq:polaron interaction} and let the
                assumption \eqref{eq:V} be satisfied. Then the quadratic
                form $ Q_{H}[\Psi] $ is closed and bounded from below and identifies a unique self-adjoint
                operator, again denoted by $ H $, with domain
                $ \dom(H) \subset \dom(\sqrt{-\Delta+U_+}) \cap \dom(\sqrt{\mathrm{d}\Gamma(1)})$.
	\end{proposition}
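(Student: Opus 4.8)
The plan is to realise $H$ via the KLMN theorem, taking as reference operator the free part $\hamf = \HH_0 + \diff\Gamma(1)$ (recall that $\omega\equiv 1$ here), and proving that the interaction $\disp\sum_{j=1}^N A(\xv_j)$ defines a symmetric sesquilinear form on $\dom(\sqrt{\hamf})$ which is \emph{infinitesimally} form-bounded with respect to $\hamf$: for every $\delta>0$ there is $C_\delta$ with $\bigl| \mean{\Psi}{\sum_j A(\xv_j)}{\Psi}\bigr| \leq \delta\,\mean{\Psi}{\hamf}{\Psi} + C_\delta\,\lVert\Psi\rVert^2$. Granted this, KLMN produces a closed, bounded-below quadratic form $Q_H = Q_{\hamf} + Q_{\sum_j A(\xv_j)}$ with form domain $\dom(\sqrt{\hamf}) = \dom(\sqrt{\HH_0}) \cap \dom(\sqrt{\diff\Gamma(1)})$; since $U_{\ll}$ is $(-\Delta)$-form-infinitesimal by \eqref{eq:V}, one has $\dom(\sqrt{\HH_0}) = \dom(\sqrt{-\Delta+U_+})$, whence $\dom(H) \subset \dom(\sqrt{-\Delta+U_+}) \cap \dom(\sqrt{\diff\Gamma(1)})$, and $H \geq (1-\delta)\hamf - C_\delta$ is bounded below because $\hamf$ is.

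The heart of the matter is the form bound, and the obstruction is that $g(\kv;\xv) = |\kv|^{\frac{1-d}{2}}e^{-i\kv\cdot\xv}$ fails to be in $L^2(\Rd)$ — purely because of its slow decay at large $|\kv|$, there being no infrared difficulty. I would therefore split $g = g^{(1)}_\kappa + g^{(2)}_\kappa$ according to $\{|\kv|\leq\kappa\}$ and $\{|\kv|>\kappa\}$. The low-frequency part is harmless: $\lVert g^{(1)}_\kappa(\,\cdot\,;\xv)\rVert_{L^2}^2 = \int_{|\kv|\leq\kappa}|\kv|^{1-d}\,\diff\kv = c_d\,\kappa$ uniformly in $\xv$, so the elementary estimate $\lVert a^{\#}(f)\Psi\rVert \leq \lVert f\rVert_2\,\lVert(\diff\Gamma(1)+1)^{1/2}\Psi\rVert$, Young's inequality, and integration in the particle variables yield $\pm\sum_j\bigl(a(g^{(1)}_\kappa(\xv_j)) + a^{\dagger}(g^{(1)}_\kappa(\xv_j))\bigr) \leq \delta\,\diff\Gamma(1) + C_{\delta,\kappa}$ as forms, for every $\delta>0$.

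For the high-frequency part one must borrow the particle kinetic energy. The point is that $e^{i\kv\cdot\xv_j} = \tfrac{1}{i|\kv|^2}\,\kv\cdot\nabla_{\xv_j}e^{i\kv\cdot\xv_j}$: integrating by parts in $\xv_j$ inside the sesquilinear form $\mean{\Phi}{a(g^{(2)}_\kappa(\xv_j))}{\Psi}$ transfers the gradient $\nabla_{\xv_j}$ onto $\Phi$ and $\Psi$ and replaces the form factor by $|\kv|^{\frac{1-d}{2}}|\kv|^{-1}$, whose $L^2$-norm on $\{|\kv|>\kappa\}$ equals $(c_d/\kappa)^{1/2}\to 0$ as $\kappa\to\infty$. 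A Cauchy--Schwarz in $\kv$ together with $\int\diff\kv\,\lVert a(\kv)\Psi\rVert^2 = \mean{\Psi}{\diff\Gamma(1)}{\Psi}$ then gives $\bigl|\mean{\Psi}{a(g^{(2)}_\kappa(\xv_j)) + a^{\dagger}(g^{(2)}_\kappa(\xv_j))}{\Psi}\bigr| \leq C\kappa^{-1/2}\,\lVert\nabla_{\xv_j}\Psi\rVert\,\mean{\Psi}{\diff\Gamma(1)}{\Psi}^{1/2} \leq \delta\,\mean{\Psi}{-\Delta_{\xv_j}+\diff\Gamma(1)}{\Psi}$ once $\kappa$ is large. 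Summing over $j$ (so that $\sum_j(-\Delta_{\xv_j}) = -\Delta$) and using $-\Delta \leq (1-\eta)^{-1}(\HH_0 + C_\eta)$, again from the Kato-smallness of $U_{\ll}$ in \eqref{eq:V}, completes the infinitesimal bound relative to $\hamf$.

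A minor technical point is that the form $Q_{\sum_j A(\xv_j)}$ must be genuinely defined on all of $\dom(\sqrt{\hamf})$: I would introduce the ultraviolet-regularised interactions $A_\Lambda(\xv)$ (form factor restricted to $\{|\kv|<\Lambda\}$, hence in $L^\infty(\Rd;L^2(\Rd))$ and covered by the arguments behind Proposition~\ref{pro:sa1}), note that all the bounds above are uniform in $\Lambda$, and verify that $\mean{\Phi}{A_\Lambda(\xv_j)}{\Psi}$ is Cauchy as $\Lambda\to\infty$ on $\dom(\sqrt{\hamf})$; by polarisation this defines the limiting form, which agrees with the integrated-by-parts expression used above. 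KLMN then closes the argument, as in the treatments of \cite{Frank:2014aa,Griesemer:2015aa}. I expect the only non-routine ingredient to be the high-frequency estimate — the integration by parts trading one power of $|\kv|$ for a gradient $\nabla_{\xv_j}$, i.e. the polaron analogue of the Lieb--Yamazaki bound; the rest is bookkeeping with standard Fock-space inequalities and the form-smallness of $U_{\ll}$, and the full details are deferred to the Appendix.
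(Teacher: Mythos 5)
Your proposal is correct and follows essentially the same route the paper takes (in Proposition~\ref{pro:klmn} of the Appendix): split the form factor at a frequency cutoff, treat the infrared piece via the standard $\lVert a^{\#}(f)\Psi\rVert\lesssim\lVert f\rVert_2\,\lVert(\diff\Gamma(1)+1)^{1/2}\Psi\rVert$ estimate, and control the ultraviolet piece by writing $a(g_{>})=\bigl[a(\nabla_\xv\tilde g_{>}),-\nabla_\xv\bigr]$ with $\tilde g_{>}=(-\Delta_\xv)^{-1}g_{>}$ — which is exactly your integration-by-parts trade of one power of $|\kv|$ for a gradient — before invoking KLMN. The only cosmetic differences are that you phrase the result as full infinitesimal form-boundedness (true, and slightly stronger than the fixed relative bound $<1$ the paper records), and you make explicit the $\Lambda\to\infty$ regularisation step that defines the form on all of $\dom(\sqrt{\hamf})$, which the paper leaves implicit.
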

	
	\begin{proof}
		Since $ H $ is defined only in the quadratic form sense, one needs to use the KLMN Theorem, in order to show that the interaction term is an infinitesimally small perturbation of $ \hamf + U_+ $. For the convenience of the reader we recall some details of the proof in the Appendix.
	\end{proof}

\subsection{Quasi-classical limit}
\label{sec:semiclassics}

We now describe in mathematical details the procedure of the quasi-classical limit. First of all we want to restrict our attention to the system of particles alone and, in order to do that, we trace out the
field's degrees of freedom. The control parameter $\varepsilon$ on the
field, which will eventually be taken to zero, is introduced through \eqref{eq:CCR} or, more precisely, as 
\begin{equation}
  	\label{eq:ccr l2}
  	\lf[a(f),a^{\dagger}(g)\ri]=\varepsilon \braket{f}{g}_{\hh},
\end{equation}
for a generic pair of functions $ f, g \in \hh $. Notice that such a choice implies that both the creation and annihilation operators are of order $ \sqrt{\eps} $. Analogously, $H$ depends on $\varepsilon$ through the field free energy
$\mathrm{d}\Gamma(\omega) $, which is of order $ \eps $, and the interaction $A(f) $ proportional to $ \sqrt{\eps} $ again. Accordingly, quantum states for the field might in
general be $\varepsilon$-dependent. In Sect. \ref{sec:introduction} we have discussed the physical meaning of the limit $\varepsilon\to 0$, that we are going to consider in the following.

The Fock partial trace of an operator (quadratic form) on
$L^2 \lf(\R^{dN} \ri)\otimes \fock(\mathfrak{H})$ is defined as follows: let $Q$ be a quadratic form on the full Hilbert space $L^2 \lf(\R^{dN} \ri)\otimes \fock(\mathfrak{H})$, which should be thought of as the quadratic form associated with the operator $ H $, and let $ \dom_0[Q] $ be a core domain for $ Q $ given by tensor product states, {\it i.e.},
\beq
	\dom_0[Q] : = \lf\{ \psi \otimes \Psi \: \Big| \: \psi \in \dom_{0,1} \subset L^2 \big(\R^{dN} \big), \Psi \in \dom_{0,2} \subset \fock(\hh) \ri\},
\eeq
where $ \dom_{0,j} $ are densely defined subspaces. Then the {\it Fock partial trace} $ \QQ $ of $ Q $ w.r.t. a field state $\Psi_{\eps} \in \fock(\hh) $ is the quadratic form on $L^2 (\mathbb{R}^{dN} )$
\beq
   \QQ[\psi] := Q[\psi \otimes\Psi_{\eps}],
\eeq
which is densely defined on $ \dom_{0,1} $. Similarly one can define the sesquilinear form $ \QQ[\psi,\phi] $ as
\beq
	\QQ[\psi,\phi] : = Q\lf[\psi \otimes \Psi_{\eps}, \phi \otimes \Psi_{\eps}\ri],
\eeq
or, equivalently, from $ \QQ[\psi] $ by polarization.

Such a procedure can be applied to the full Hamiltonian $ H $, yielding a quadratic form on $ L^2\lf(\R^{dN}\ri) $, which is associated to a Schr\"{o}dinger operator with an $ \eps$-dependent potential: 

	\begin{proposition}[Partial trace]
		\label{pro:trace}
		\mbox{}	\\
 		Let \eqref{eq:g1} (resp. \eqref{eq:g2}) be satisfied and $ Q_{H} $ be the sesquilinear form associated to the self-adjoint operator $ H $. Then the partial trace $\QQ_{H}$ of $ Q_H $ on $ \Psi_{\eps} \in \fock(\hh) $ is densely defined on $ C^{\infty}_0 \lf(\mathbb{R}^{dN} \ri)$ for any
  $ \Psi_{\eps}\in \dom  (\sqrt{\mathrm{d}\Gamma(\omega)} ) $. Moreover for any $ \psi, \phi \in C^{\infty}_0(\R^{dN}) $, the quadratic form $ \QQ_H $ is given by
  		\beq
  			\label{eq:trace}
    			\QQ_H[\psi,\phi] = \meanlr{\psi}{\HH_0 +  \sum V_{\varepsilon,\Psi_{\eps}}(\xv_j) + \ceps}{\phi}_{L^2 (\mathbb{R}^{dN})},
  		\eeq
  		where 
  		\beq
  			\label{eq:V and ceps}
  			V_{\varepsilon,\Psi_{\eps}}(\xv) = \meanlr{\Psi_{\eps}}{A(\xv)}{\Psi_{\eps}}_{\fock(L^2 (\mathbb{R}^{d}))},	\qquad  \ceps = \meanlr{\Psi_{\eps}}{\diff\Gamma(\omega)}{\Psi_{\eps}}_{\fock(L^2 (\mathbb{R}^{d}))}.
		\eeq
	\end{proposition}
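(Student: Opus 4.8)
The plan is to verify \eqref{eq:trace} by direct computation, starting from the definition of the partial trace $\QQ_H[\psi,\phi] = Q_H[\psi\otimes\Psi_\eps,\phi\otimes\Psi_\eps]$ and splitting $H = \HH_0 + \diff\Gamma(\omega) + \sum_j A(\xv_j)$ into its three pieces. The term coming from $\HH_0$ is immediate: since $\HH_0$ acts only on the particle variables and $\Psi_\eps$ is normalized, one gets $\meanlr{\psi}{\HH_0}{\phi}_{L^2(\R^{dN})}$ as soon as $\psi,\phi \in C_0^\infty(\R^{dN}) \subset \dom(\HH_0)$, which holds under \eqref{eq:V}. Similarly, $\diff\Gamma(\omega)$ acts only on the field, so its contribution is $\braket{\psi}{\phi}_{L^2(\R^{dN})}\cdot\meanlr{\Psi_\eps}{\diff\Gamma(\omega)}{\Psi_\eps}_{\fock(\hh)} = \ceps\,\braket{\psi}{\phi}$, which requires precisely $\Psi_\eps \in \dom(\sqrt{\diff\Gamma(\omega)})$ so that this expectation is finite — this is where that hypothesis enters. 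The interaction term $\sum_j A(\xv_j)$ is the one requiring care, and it is where the bulk of the argument lies.

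For the interaction term, I would first recall from Sect. \ref{sec:model} that for $g(\xv) \in L^\infty(\R^d;\hh)$ the operator $a(g(\xv)) + a^\dagger(g(\xv))$ is, for a.e.\ $\xv$, a closed densely defined operator on $\fock(\hh)$, and under \eqref{eq:g1} (resp.\ \eqref{eq:g2}) with $g_\nv(\xv) = \lambda_\nv e^{-i\kv_\nv\cdot\xv}$ (resp.\ $g(\kv;\xv) = \lambda(\kv)e^{-i\kv\cdot\xv}$) the $\hh$-norm $\|g(\xv)\|_\hh = \|\lambda\|_{\ell^2}$ (resp.\ $\|\lambda\|_{L^2}$) is a finite constant independent of $\xv$. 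Consequently, by the standard $N_f$-bound $\|a^\#(h)\Psi\| \le \sqrt{\eps}\,\|h\|_\hh\,\|(\diff\Gamma(1)+1)^{1/2}\Psi\|$, the vector $A(\xv)\Psi_\eps$ lies in $\fock(\hh)$ for a.e.\ $\xv$ provided $\Psi_\eps \in \dom(\sqrt{\diff\Gamma(1)})$ — which is implied by $\Psi_\eps\in\dom(\sqrt{\diff\Gamma(\omega)})$ in the massive case, and in general is part of the hypotheses of the statements where this proposition is applied. Then, writing out the tensor product form,
\beq
  Q_H\Big[\psi\otimes\Psi_\eps, \sum_j A(\xv_j)\,\phi\otimes\Psi_\eps\Big] = \sum_{j=1}^N \int_{\R^{dN}} \diff\xv\; \psi^*(\xv)\,\phi(\xv)\; \meanlr{\Psi_\eps}{A(\xv_j)}{\Psi_\eps}_{\fock(\hh)},
\eeq
by computing the Fock inner product fiberwise in $\xv$ and using Fubini, which is legitimate because $\xv \mapsto \meanlr{\Psi_\eps}{A(\xv_j)}{\Psi_\eps}$ is bounded (again by the $N_f$-bound, uniformly in $\xv$) and $\psi^*\phi \in L^1(\R^{dN})$ for $\psi,\phi\in C_0^\infty$. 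This identifies the contribution as $\meanlr{\psi}{\sum_j V_{\eps,\Psi_\eps}(\xv_j)}{\phi}$ with $V_{\eps,\Psi_\eps}$ as in \eqref{eq:V and ceps}; here one also notes $V_{\eps,\Psi_\eps} \in L^\infty(\R^d)$, so the associated multiplication operator is bounded and $C_0^\infty(\R^{dN})$ is certainly in its domain, confirming that $\QQ_H$ is densely defined on $C_0^\infty(\R^{dN})$.

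The main obstacle I anticipate is the careful justification that the quadratic form of the full self-adjoint operator $H$ — whose form domain, by Propositions \ref{pro:sa1}--\ref{pro:sa2}, is a core-type object not a priori obviously containing all $\psi\otimes\Psi_\eps$ with $\psi\in C_0^\infty$ and $\Psi_\eps$ merely in $\dom(\sqrt{\diff\Gamma(\omega)})$ — does admit such product states and that the formal splitting $Q_H = Q_{\HH_0} + Q_{\diff\Gamma(\omega)} + Q_{\mathrm{int}}$ is valid on them. This amounts to checking that $\psi\otimes\Psi_\eps \in \dom(\sqrt{\hamf + U_+})$ whenever $\psi\in C_0^\infty(\R^{dN})$ (so $\psi\in\dom(\HH_0)\subset\dom(\sqrt{-\Delta+U_+})$) and $\Psi_\eps\in\dom(\sqrt{\diff\Gamma(\omega)})$, which follows since $\sqrt{-\Delta+U_+}$ and $\sqrt{\diff\Gamma(\omega)}$ act on separate tensor factors, and then invoking the KLMN/Kato-Rellich form-perturbation structure already established in the self-adjointness propositions to add back the infinitesimally small pieces $U_\ll$ and the interaction. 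I would handle this by reducing to the form core $\dom(\HH_0)\cap\dom(\diff\Gamma(\omega))\cap C_0^\infty(\diff\Gamma(1))$ used in Proposition \ref{pro:sa1}, computing \eqref{eq:trace} there by the elementary steps above, and then extending to general $\Psi_\eps\in\dom(\sqrt{\diff\Gamma(\omega)})$ by a density/closability argument using the uniform $N_f$-bounds — the continuity of both sides of \eqref{eq:trace} in $\Psi_\eps$ with respect to the $\sqrt{\diff\Gamma(\omega)+\diff\Gamma(1)+1}$-graph norm. The polaron case \eqref{eq:g3} is genuinely singular and not covered by the $L^\infty(\R^d;\hh)$ framework, but since $g(\kv;\xv) = |\kv|^{(1-d)/2}e^{-i\kv\cdot\xv}$ satisfies $\omega^{-1/2}g(\cdot;\xv) = g(\cdot;\xv) \in \dot H^{-1/2} \not\subset L^2$ only at infinity while $\omega(\kv)=1$, one still has the relative form bound of Proposition \ref{pro:sa2}, and I expect the analogous fiberwise computation to go through with the $N_f$-bound replaced by the form bound $\|A(\xv)^{1/2}\Psi\|^2 \lesssim \eps\,\|(\diff\Gamma(1)+1)^{1/2}\Psi\|^2$ uniform in $\xv$; I would remark on this extension rather than detail it, consistently with the paper's stated practice of postponing such technicalities to the Appendix.
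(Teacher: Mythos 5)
Your proposal is correct and follows the same basic route as the paper's proof, which is simply to compute the partial trace directly: the paper's proof is a two-sentence assertion that the computation is straightforward and that well-posedness on $C_0^\infty(\R^{dN})$ is inherited from $Q_H$. You have carried out exactly that computation and, helpfully, made explicit the domain/closability considerations (product states in the form domain, the $N_f$-bound versus the $\sqrt{\mathrm{d}\Gamma(\omega)}$-bound of Lemma \ref{lemma:est nelson}, and the polaron caveat already flagged in Remark \ref{rem:trace polaron}) that the paper leaves implicit.
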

	
	\begin{proof}
		The result is obtained by computing the partial trace in a straightforward way. The well-posedness of the r.h.s. on smooth functions with compact support is inherited from the properties of the quadratic form $ Q_H $, whose domain contains such type of wave functions for the particle subsystem.
	\end{proof}
	
	 \begin{remark}[Partial trace for the polaron]
	 	\label{rem:trace polaron}
	 	\mbox{}	\\
	 	The above Proposition does not apply straightforwardly to the polaron model, since by \eqref{eq:g3} $ \lambda \notin L^2(\R^d) $. It is however possible to prove an analogous statement where the main difference is that $ \QQ_H $ is only a quadratic form and the association to the operator on the r.h.s. purely formal, until one proves that such a form is closed and defines a unique self-adjoint operator (see the Appendix).
	 \end{remark}
	 
In general it is very difficult to characterize the effective potential $ V_{\varepsilon,\Psi_{\eps}} $ obtained in this
way. For example, it is not a priori assured that the r.h.s. of \eqref{eq:trace} is a sesquilinear form associated to a
unique self-adjoint operator, even if $H$ is self-adjoint. Conversely, it might happen that such a form identifies a unique self-adjoint operator, even though $ H$ is not self-adjoint. Such problems however do not show up in the limit $ \eps \to 0 $, if reasonable assumptions on the state $ \Psi_{\eps} $ are made.

The result below is based on the techniques of semiclassical analysis for infinite dimensional systems introduced in 
\cite{ammari:nier:2008,MR2513969,MR2802894,2011arXiv1111.5918A}. Let us recall that both the operators, {\it e.g.}, $\mathrm{d}\Gamma(\omega)$, and vectors, {\it e.g.}, $\Psi_\eps $, in the Fock space depend on $\varepsilon$.

	\begin{proposition}[Classical limit]
		\label{pro:classical limit}
		\mbox{}	\\
  		Let $ \Psi_{\eps} \in  \fock(\mathfrak{H})$ be such that, uniformly in $\varepsilon $ small,
 		 \begin{itemize}
  			\item there exist $\delta \geq \frac{1}{2} $ and $ C < + \infty $ such that
    				\begin{equation}
      				\meanlr{\Psi_\varepsilon}{\lf( \mathrm{d}\Gamma(1) \ri)^{\delta}}{\Psi_\varepsilon}_{\fock(\hh)} \leq  C;
    				\end{equation}
  			\item there exists $ C' < + \infty $, such that
    				\begin{equation}
     					\mean{\Psi_\varepsilon}{\mathrm{d}\Gamma(\omega)}{\Psi_\varepsilon}_{\fock(\hh)} \leq C',
    				\end{equation}
    				where $ \omega $ is the multiplication operator by the function $ \omega_{\nv} $ or $ \omega(\kv) $.
  		\end{itemize}
  		Then there is a subsequence $ \lf\{ \Psi_{\eps_k} \ri\}_{k\in \mathbb{N}} $, $\varepsilon_k \to 0$, as $ k \to \infty $, and a probability measure $\mu \in \M(\mathfrak{H}) $, such that:
  		\begin{itemize}
  			\item $\mu$ is concentrated on $ \dom(\omega)$;
  			\item $\lVert z \rVert_{\mathfrak{H}}^{\alpha_1}$ and $\lVert \sqrt{\omega} z\rVert_{\mathfrak{H}}^{\alpha_2}$, with $\alpha_1\leq 2\delta$ and $\alpha_2\leq 2$, are integrable with respect to the measure $\mathrm{d} \mu(z)$ and
    				\begin{equation}
     					\label{eq:field energy limit}
     					\lim_{k\to \infty} \mean{\Psi_{\varepsilon_k}}{\diff \Gamma(\omega)}{\Psi_{\varepsilon_k}}_{\fock(\hh)}=\int_{\mathfrak{H}}^{}\mathrm{d}\mu(z) \: \lf\|\sqrt{\omega} z \ri\|_{\hh}^2;
   				\end{equation}
 			 \item for any $ g \in \mathfrak{H} $,
    				\begin{equation}
     					\label{eq:A limit}
      				\lim_{k\to \infty} \mean{\Psi_{\varepsilon_k}}{a(g) + a^{\dagger}(g) }{\Psi_{\varepsilon_k}}_{\fock(\hh)} =  2\Re \int_{\mathfrak{H}}^{}  \mathrm{d}\mu(z) \: \braket{z}{g}_{\mathfrak{H}}.
    				\end{equation}
   		\end{itemize}
	\end{proposition}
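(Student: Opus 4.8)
The statement is, in essence, a specialization to the quasi--classical setting of the infinite--dimensional semiclassical (Wigner) analysis developed in \cite{ammari:nier:2008,MR2513969,MR2802894,2011arXiv1111.5918A}, together with its cylindrical refinements in \cite{Falconi:2016ab}, so the plan is to check the hypotheses and invoke the relevant theorems there. First I would apply the basic extraction result: since $\lVert\Psi_\eps\rVert_{\fock(\hh)}=1$ uniformly, along a subsequence $\eps_k\to 0$ there is an (a priori cylindrical) measure $\mu$ on $\hh$ carrying the semiclassical limit of all bounded Weyl observables, i.e.
\[
  \lim_{k\to\infty}\meanlr{\Psi_{\eps_k}}{\WW(g)}{\Psi_{\eps_k}}_{\fock(\hh)}=\int_{\hh}e^{2i\Re\braket{z}{g}_{\hh}}\,\diff\mu(z)\qquad\text{for all }g\in\hh,
\]
with $\WW$ the $\eps$--rescaled Weyl operator of \eqref{eq:weyl} and the CCR scaling \eqref{eq:CCR}.

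The first uniform bound, $\mean{\Psi_\eps}{(\diff\Gamma(1))^{\delta}}{\Psi_\eps}\le C$ with $\delta\ge\frac12$, then plays two roles. On the one hand it upgrades $\mu$ to a genuine Borel probability measure on $\hh$ (the ``no escape of mass'' lemma, which requires precisely a moment of positive order); on the other hand, by lower semicontinuity along the subsequence it gives $\int_{\hh}\lVert z\rVert_{\hh}^{2\delta}\,\diff\mu(z)\le\liminf_k\mean{\Psi_{\eps_k}}{(\diff\Gamma(1))^{\delta}}{\Psi_{\eps_k}}\le C$, and interpolation with the total mass yields $\lVert z\rVert_{\hh}^{\alpha_1}\in L^1(\mu)$ for every $\alpha_1\le 2\delta$, in particular $\alpha_1=1$. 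With $\lVert z\rVert_{\hh}\in L^1(\mu)$ the right--hand side of \eqref{eq:A limit} is well defined, and \eqref{eq:A limit} itself follows from the convergence of Wick--quantized polynomials: the symbol of $a(g)+a^{\dagger}(g)$ is the first--degree polynomial $z\mapsto 2\Re\braket{z}{g}_{\hh}$, and $\delta\ge\frac12$ is exactly the moment threshold under which Wick observables of total degree one converge to the integral of their symbol.

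The genuinely delicate part is the field energy: concentration of $\mu$ on $\{z:\lVert\sqrt{\omega}\,z\rVert_{\hh}<\infty\}$, integrability of $\lVert\sqrt{\omega}\,z\rVert_{\hh}^{\alpha_2}$ for $\alpha_2\le 2$, and the \emph{identity} \eqref{eq:field energy limit}. Here I would truncate the dispersion, $\omega_R:=\omega\wedge R$: for each fixed $R$ the observable $\diff\Gamma(\omega_R)$ has the bounded--coefficient quadratic symbol $z\mapsto\braket{z}{\omega_R z}_{\hh}$, so that $\mean{\Psi_{\eps_k}}{\diff\Gamma(\omega_R)}{\Psi_{\eps_k}}\to\int_{\hh}\braket{z}{\omega_R z}_{\hh}\,\diff\mu(z)$; letting $R\to\infty$, on the measure side monotone convergence gives $\int\braket{z}{\omega_R z}_{\hh}\,\diff\mu\uparrow\int\lVert\sqrt{\omega}\,z\rVert_{\hh}^{2}\,\diff\mu$ (which also proves the concentration and integrability once one knows the limit is finite), while on the state side one must control the high--energy tail $\mean{\Psi_\eps}{\diff\Gamma(\omega\one_{\omega>R})}{\Psi_\eps}$ uniformly in $\eps$, which is where the bound $C'$ enters. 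Combining the two yields both the Fatou lower bound and the matching upper bound, hence \eqref{eq:field energy limit}.

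I expect this last step to be the main obstacle. Unlike the bounded observables of the first step and the linear observables $a(g)+a^{\dagger}(g)$ --- both covered by the $\delta\ge\frac12$ moment bound alone --- $\diff\Gamma(\omega)$ is a quadratic observable whose naive treatment would require a uniform bound on $\diff\Gamma(1)$, which is not assumed: since $\omega$ may vanish on a large set, $\mean{\Psi_\eps}{\diff\Gamma(\omega)}{\Psi_\eps}\le C'$ does \emph{not} control $\mean{\Psi_\eps}{\diff\Gamma(1)}{\Psi_\eps}$. One must instead show directly, from the bound on the $\omega$--weighted number operator, that no field energy is carried off to high frequencies in the limit; this uniform tail estimate, rather than any of the algebraic manipulations, is the crux, and for it I would rely on the infinite--dimensional estimates of \cite{ammari:nier:2008,MR2802894,Falconi:2016ab}.
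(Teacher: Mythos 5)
Your proposal follows the same route as the paper's own proof, which is a straight appeal to the infinite-dimensional semiclassical framework: extraction of a Wigner measure and integrability of $\lVert z\rVert^{\alpha_1}$ for $\alpha_1\leq 2\delta$ are cited from \cite[Theorem 6.2]{ammari:nier:2008}, the concentration of $\mu$ on $\dom(\sqrt\omega)$, the integrability of $\lVert\sqrt\omega\,z\rVert^{\alpha_2}$, and the convergence of $\mathrm d\Gamma(\omega)$ from \cite[Lemma 3.13]{2011arXiv1111.5918A}, and the convergence of the linear field observable from \cite{Ammari:2014aa}. Your identification of the role played by each hypothesis ($\delta\geq\frac12$ upgrading the cylindrical cluster point to a Borel probability measure and handling Wick observables of total degree one; the $\mathrm d\Gamma(\omega)$ bound being the extra ingredient for the field energy), and your remark that the uniform high-frequency tail estimate is the genuine crux, are both accurate and match what the cited lemma actually proves.

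One internal inconsistency in the sketch is worth flagging. You assert that for fixed $R$ the truncated energy converges, $\mean{\Psi_{\eps_k}}{\diff\Gamma(\omega_R)}{\Psi_{\eps_k}}\to\int_{\hh}\braket{z}{\omega_R z}_{\hh}\,\diff\mu(z)$, ``because the quadratic symbol has bounded coefficients,'' and then, a few lines later, correctly observe that a quadratic Wick observable is \emph{not} in general controlled by the $\delta\geq\frac12$ moment alone (one would naively want $\delta\geq 1$, i.e.\ a uniform bound on $\mean{\Psi_\eps}{\diff\Gamma(1)}{\Psi_\eps}$, which is unavailable when $\omega$ can vanish). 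These two statements are in tension: $\diff\Gamma(\omega_R)$ is itself a degree-two Wick observable, and boundedness of $\omega_R$ does not lower its degree, so the half-moment hypothesis does not make this step ``free.'' What actually rescues the truncated expectation is the same uniform bound $\mean{\Psi_\eps}{\diff\Gamma(\omega)}{\Psi_\eps}\leq C'$ you invoke for the tail — via $0\leq\omega_R\leq\omega$ it gives uniform control of $\diff\Gamma(\omega_R)$ and hence the equi-integrability needed for the degree-two convergence — and this is precisely what \cite[Lemma 3.13]{2011arXiv1111.5918A} packages. Since you explicitly defer to the cited lemmas at the end, the argument closes, but the middle of the sketch should make visible that the $\diff\Gamma(\omega)$ bound enters already at the truncated stage, not only in the $R\to\infty$ limit.
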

	
	\begin{proof}
                The existence of a subsequence converging to the classical measure is proved in               \cite[Theorem 6.2]{ammari:nier:2008}, as well as the integrability of
                $\lVert z \rVert_{\mathfrak{H}}^{\alpha_1}$, $\alpha_1\leq 2\delta$. The concentration of
                $\mu$ in $D(\sqrt{\omega})$, and the integrability of $\lVert \sqrt{\omega} z \rVert_{\mathfrak{H}}^{\alpha_2}$,
                $\alpha_2\leq 2$, as well as the convergence of the corresponding evaluation of
                $\mathrm{d}\Gamma(\omega)$ is proved in \cite[Lemma 3.13]{2011arXiv1111.5918A}. The
                convergence of the expectation of the field operator also follows along the same guidelines,
                an interested reader might consult, \emph{e.g.}, \cite{Ammari:2014aa}.
	\end{proof}
	
	In the following, the role of the function $ g $ will be played by the cut-off $ \lambda \in \hh $, so that \eqref{eq:A limit} will allow us to take the limit $ \eps \to 0 $ of the interaction term in the Hamiltonian $ H $. Once again, the case of the polaron is excluded since $ \lambda \notin L^2(\R^d) $ and therefore a comment is in order.

	\begin{remark}[Classical limit for the polaron]
		\label{rem:classical limit polaron}
		\mbox{}	\\
		The limit \eqref{eq:A limit} may also hold true for functions $ g $ not belonging to $ \hh =
                L^2(\R^d) $. The easiest situation is given by a function $g$ that is in $H^{-s}(\R^d)$,
                $s>0$. In this case, the scalar product $ \braket{\: \cdot \:}{\: \cdot \:}_2 $ defined on
                $H^s \otimes H^s$ extends to a continuous duality map on $ H^s \otimes H^{-s} $ (denoted by
                $ \braket{\: \cdot \:}{\: \cdot \:}_{*} $).  Hence for any $ g \in H^{-s}(\R^d) $, \eqref{eq:A limit} is
                reformulated as follows:
		\begin{equation}
     			 \label{eq:A limit polaron}
     			 \lim_{k\to \infty} \meanlr{\Psi_{\varepsilon_k}}{a(g) + a^{\dagger}(g) }{\Psi_{\varepsilon_k}}_{\fock(L^2)} =  2\Re \int_{L^2(\R^d)}^{}  \mathrm{d}\mu(z) \: \braket{z}{g}_{*},
    		\end{equation}
    		where the r.h.s. is finite if and only if $\mu$ is concentrated on $H^s(\R^d) $.

                Another important example is given by generalized functions $\xv \mapsto g(\xv)$ whose inverse
                derivative (more precisely $ (- \Delta + 1)^{-1/2} g $) takes values in $\hh$ for a.e. $\xv \in \R^{d} $. We
                adopt the natural notation $W^{-1,\infty}\lf(\R^{d},\hh\ri)$ for the space of such functions. Now for
                any $g(\xv)\in W^{-1,\infty}\lf(\R^{d},\hh\ri)$, the convergence
                \begin{equation}
                  \label{eq:A limit polaron 2}
                  \lim_{k\to \infty} \sum_{j=1}^N\meanlr{\Psi_{\varepsilon_k}}{a(g(\xv_j)) + a^{\dagger}(g(\xv_j)) }{\Psi_{\varepsilon_k}}_{\fock(\hh)} =  2\Re \sum_{j=1}^N\int_{\hh}^{}  \mathrm{d}\mu(z) \: \braket{z}{g(\xv_j)}_{\hh}
                \end{equation}
                has to be interpreted as the convergence in a dense domain of quadratic forms in $L^2 \lf(\R^{dN}\ri)$,
                and the limit defines a quadratic form bounded by $ \QQ_{\sqrt{-\Delta}}$. In fact, since $g(\xv)\in W^{-1,\infty}(\R^{d},\hh)$, there exists a
                $\tilde{\gv}(\xv) = \lf(\tilde{g}_1(\xv), \ldots, \tilde{g}_d(\xv) \ri) \in L^{\infty}(\R^{d},\hh\otimes \R^{d})$ such that
                \begin{equation*}
                  g(\xv)= \lf[-i\nabla, \:\tilde{\gv}(\xv) \ri].
                \end{equation*}
                Therefore it follows that
                \begin{equation*}
                  2\Re \sum_{j=1}^N\int_{\hh}^{}  \mathrm{d}\mu(z) \: \braket{z}{g(\xv_j)}_{\hh}=\sum_{j=1}^N  \lf[ -i\nabla_{\xv_j}, \: 2\Re \int_{\hh}^{}  \mathrm{d}\mu(z) \: \braket{z}{\tilde{\gv}(\xv_j)}_{\hh} \ri]\; .
                \end{equation*}
	\end{remark}
	
	An important feature which we have already commented upon extensively in Sect.~\ref{sec:main} is the fact that, given any $ \Psi_{\eps}$ satisfying the hypothesis of Proposition \ref{pro:classical limit}, there exists at least one limit measure $ \mu $, which might depend on the chosen subsequence. When we say that, as in \eqref{eq:eps convergence},
	\bdm
		\Psi_{\eps} \xrightarrow[\eps \to 0]{} \mu,
	\edm
	we mean that the subsequence has been chosen (and thus the limit point $ \mu $) or the limit is unique and no choice has to be made. To ensure that the results of Proposition \ref{pro:classical limit} hold true, we will also assume that \eqref{eq:state} hold true, {\it i.e.},
	\beq
		\tag{{\bf A}3}
		\mean{\Psi_\varepsilon}{\mathrm{d}\Gamma(1)}{\Psi_\varepsilon}_{\fock(\hh)} \leq C < + \infty; 	\qquad		\mean{\Psi_\varepsilon}{\mathrm{d}\Gamma(\omega)}{\Psi_\varepsilon}_{\fock(\hh)} \leq C' < +\infty.
	\eeq
	
	A first important consequence of Proposition \ref{pro:classical limit} is the following.
	
	\begin{corollary}[Pointwise convergence]
		\label{cor:pointwise}
		\mbox{}	\\
		Let the assumptions of Proposition \ref{pro:classical limit} be satisfied, {\it i.e.}, $ \Psi_{\eps} \to \mu $ as $ \eps \to 0 $ in the sense of \eqref{eq:eps convergence}, and let $ g(\xv) \in L^{\infty}\lf(\R^{dN}; \hh\ri) $. Then
		\beq
			\label{eq:pointwise}
			V_{\eps, \Psi_{\eps}}(\xv)  \xrightarrow[\eps \to 0]{\mathrm{a.e.}} V_{\mu}(\xv) = 2\Re \int_{\mathfrak{H}}^{}  \mathrm{d}\mu(z) \: \braket{z}{g(\xv)}_{\hh}.
		\eeq
	\end{corollary}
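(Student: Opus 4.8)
The plan is to deduce the pointwise (a.e.) convergence of $ V_{\eps,\Psi_{\eps}} $ to $ V_{\mu} $ directly from the abstract classical-limit statement in Proposition~\ref{pro:classical limit}, applied \emph{pointwise in} $ \xv $. Recall that by Proposition~\ref{pro:trace} the effective potential is
\beq
	V_{\eps, \Psi_{\eps}}(\xv) = \meanlr{\Psi_{\eps}}{A(\xv)}{\Psi_{\eps}}_{\fock(\hh)} = \meanlr{\Psi_{\eps}}{a(g(\xv)) + a^{\dagger}(g(\xv))}{\Psi_{\eps}}_{\fock(\hh)}, \nonumber
\eeq
and that, since $ g(\xv) = \lambda e^{-i\kv\cdot\xv} $ (or its discrete analogue), we have $ g(\xv) \in \hh $ for \emph{every} fixed $ \xv $ with $ \| g(\xv) \|_{\hh} = \| \lambda \|_{\hh} $ independent of $ \xv $ --- this is exactly the hypothesis $ g \in L^{\infty}(\R^{dN};\hh) $. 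So for each fixed $ \xv $ the function $ g(\xv) $ is a legitimate choice of the vector $ g $ in \eqref{eq:A limit}, and the right-hand side of \eqref{eq:A limit} is precisely $ V_{\mu}(\xv) $ as defined in \eqref{eq:pointwise}.

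First I would fix the subsequence $ \eps_k \to 0 $ and the limit measure $ \mu $ provided by the convergence $ \Psi_{\eps} \to \mu $ in the sense of \eqref{eq:eps convergence} (if the limit is unique no subsequence is needed; otherwise the statement is understood along the fixed subsequence, consistently with the convention set up after \eqref{eq:eps convergence}). Then, for each fixed $ \xv \in \R^{dN} $, I apply \eqref{eq:A limit} with $ g $ replaced by $ g(\xv) \in \hh $, obtaining
\beq
	\lim_{k\to\infty} V_{\eps_k,\Psi_{\eps_k}}(\xv) = \lim_{k\to\infty} \meanlr{\Psi_{\eps_k}}{a(g(\xv)) + a^{\dagger}(g(\xv))}{\Psi_{\eps_k}}_{\fock(\hh)} = 2\Re \int_{\hh} \diff\mu(z) \: \braket{z}{g(\xv)}_{\hh} = V_{\mu}(\xv). \nonumber
\eeq
The integral on the right is finite because $ \braket{z}{g(\xv)}_{\hh} $ is bounded in modulus by $ \| z \|_{\hh} \|\lambda\|_{\hh} $ and $ \| z \|_{\hh} $ is $ \mu $-integrable by Proposition~\ref{pro:classical limit} (with $ \alpha_1 = 1 \leq 2\delta $, using $ \delta \geq \tfrac12 $ from \eqref{eq:state}). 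Thus we have in fact \emph{pointwise} (everywhere, not merely a.e.) convergence; the ``a.e.'' in the statement is the natural weaker form one records because in general $ g(\xv) $ need only be defined for a.e.\ $ \xv $.

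The only subtle point --- and the one I would be most careful about --- is not an estimate but a \emph{logical} one: Proposition~\ref{pro:classical limit} produces, for the given sequence $ \Psi_{\eps} $, \emph{one} measure $ \mu $ such that \eqref{eq:A limit} holds simultaneously for all $ g \in \hh $ along \emph{one common} subsequence $ \eps_k $; one must make sure that the same $ \mu $ and the same $ \eps_k $ serve for every value of $ \xv $, so that the pointwise limits assemble into a single function $ V_{\mu} $. This is automatic: the subsequence extraction in \cite[Theorem~6.2]{ammari:nier:2008} is performed once and for all, independently of which test vector $ g $ one later plugs in, and this is precisely what the notation $ \Psi_{\eps} \to \mu $ records. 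Hence no diagonal argument over $ \xv $ is needed. With that observation the proof is complete; the remaining content is the trivial integrability bound noted above, and one may simply remark that it is inherited from Proposition~\ref{pro:classical limit}.
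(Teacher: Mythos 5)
Your proposal takes essentially the same approach as the paper: treat $\xv$ as a parameter and apply Proposition~\ref{pro:classical limit}, specifically \eqref{eq:A limit}, with $g$ replaced by $g(\xv)\in\hh$ for a.e.\ $\xv$, along the single fixed subsequence $\eps_k$. The paper's proof is a one-sentence version of exactly this argument; your additional remarks (that the subsequence in Proposition~\ref{pro:classical limit} is chosen once and serves all $g\in\hh$ simultaneously, and that the limit integral is finite by the $\mu$-integrability of $\|z\|_{\hh}$) are both correct and implicit in the paper's citation of that proposition.
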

	
	\begin{remark}[Convergence along a subsequence]
		\label{rem:subsequence}
		\mbox{}	\\
		The convergence $ \Psi_{\eps} \to \mu $ is meant on a specific subsequence, if the limit is not unique. Therefore the above pointwise limit \eqref{eq:pointwise} holds true along the same subsequence and, in order to be precise, we should have stated the convergence of $ V_{\eps_k,\Psi_{\eps_k}} $, as $ k \to \infty $. However, we choose not to use such a cumbersome notation but we stress that \eqref{eq:pointwise} should be taken in the appropriate sense.
	\end{remark}
	
	% \begin{remark}[Pointwise convergence for the polaron]
	% 	\label{rem:pointwise polaron}
	% 	\mbox{}	\\
	% 	In light of Remark \ref{rem:classical limit polaron}, the statement of Corollary \ref{cor:pointwise} calls for a suitable modification to apply to the polaron. It suffices however to replace \eqref{eq:A limit} with \eqref{eq:A limit polaron} to get
	% 	\beq
	% 		\label{eq:pointwise polaron}
	% 		V_{\eps, \Psi_{\eps}}(\xv)  \xrightarrow[\eps \to 0]{\mathrm{a.e.}} V_{\mu}(\xv) = 2\Re \int_{\hh}^{}  \mathrm{d}\mu(z) \: \braket{z}{g(\xv)}_{\hh_{\pm}},
	% 	\eeq
	% 	for any $ g(\xv) \in L^{\infty}\lf(\R^d, \hh_-\ri) $, with $ \hh_{\pm} : = H^{\pm s}\lf(\R^d\ri) $, $ s > \frac{1}{2} $.
	% \end{remark}
	\begin{proof}
		By treating $ \xv \in \R^{dN} $ as a parameter, one can directly apply Proposition \ref{pro:classical limit} and specifically \eqref{eq:A limit}: for a.e. $ \xv \in \R^{dN} $, $ g(\xv) $ belongs to $ \hh $ and therefore 
		\bdm
			\meanlr{\Psi_{\varepsilon_k}}{a(g(\xv)) + a^{\dagger}(g(\xv)) }{\Psi_{\varepsilon_k}}_{\fock(\hh)} \xrightarrow[\eps \to 0]{}  2\Re \int_{\mathfrak{H}}^{}  \mathrm{d}\mu(z) \: \braket{z}{g(\xv)}_{\mathfrak{H}}, 
		\edm
		where the convergence is meant on the chosen subsequence.
	\end{proof}
	
	We conclude with an obvious consequence of Proposition \ref{pro:classical limit} and assumptions \eqref{eq:state}:
	
	\begin{corollary}[Field energy]
		\label{cor:field energy}
 		\mbox{}	\\
 		Let the assumptions \eqref{eq:state} be satisfied, then	
  		\begin{equation}
  			\label{eq:lim field energy}
   			 \lim_{\varepsilon\to 0}c_{\varepsilon}=\int_{\mathfrak{H}}^{}  \mathrm{d}\mu(z) \: \lf\| \sqrt{\omega} z \ri\|_{\mathfrak{H}}^2 < + \infty.
  		\end{equation}
	\end{corollary}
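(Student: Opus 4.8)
The plan is to read the statement off Proposition~\ref{pro:classical limit}, specialized to the multiplication operator~$\omega$; the point is that the hypotheses~\eqref{eq:state} are designed to match exactly the assumptions of that Proposition. First I would note that the two bounds in~\eqref{eq:state} are precisely the two bulleted hypotheses of Proposition~\ref{pro:classical limit} with the choice $\delta=1$ in the first one (so $\delta\geq\tfrac12$ trivially holds, and the intermediate bound $\mean{\Psi_\varepsilon}{(\mathrm{d}\Gamma(1))^{1/2}}{\Psi_\varepsilon}_{\fock(\hh)}\leq C^{1/2}$ comes for free by operator concavity of $t\mapsto t^{1/2}$ together with Jensen's inequality). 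Hence Proposition~\ref{pro:classical limit} applies: along the subsequence underlying the notation $\Psi_\varepsilon\to\mu$ — which is, by the semiclassical convergence recalled in Sect.~\ref{sec:semiclassics}, one along which all three conclusions of that Proposition hold, \eqref{eq:lim measure} recording only a part of it — we obtain that $\mu$ is concentrated on $\dom(\sqrt\omega)$, that $z\mapsto\lVert\sqrt\omega z\rVert_{\hh}^{2}$ is $\mu$-integrable, and that the expectations of $\mathrm{d}\Gamma(\omega)$ converge as in~\eqref{eq:field energy limit}. The first two facts yield the finiteness $\int_{\hh}\mathrm{d}\mu(z)\,\lVert\sqrt\omega z\rVert_{\hh}^{2}<+\infty$ asserted in~\eqref{eq:lim field energy}, while \eqref{eq:field energy limit}, in view of the definition~\eqref{eq:ceps} of $c_\varepsilon$, is exactly the convergence $c_\varepsilon\to\int_{\hh}\mathrm{d}\mu(z)\,\lVert\sqrt\omega z\rVert_{\hh}^{2}$. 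This is the claim.

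There is really nothing to overcome at this stage: all the analytic content — existence of the limiting measure, its concentration on $\dom(\sqrt\omega)$, and the uniform integrability required to pass $\mathrm{d}\Gamma(\omega)$ to the limit — is already packaged in Proposition~\ref{pro:classical limit} (ultimately in \cite{ammari:nier:2008,2011arXiv1111.5918A}). The only subtlety deserving a word, consistently with Remark~\ref{rem:subsequence}, is that the convergence~\eqref{eq:lim field energy} is understood along the fixed subsequence that defines $\Psi_\varepsilon\to\mu$ (or along the full family, when the limit is unique); once that subsequence is fixed as part of the notation introduced before~\eqref{eq:eps convergence}, the conclusion~\eqref{eq:field energy limit} applies verbatim and no further extraction is needed. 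In short, the corollary is an immediate specialization, and whatever difficulty there is lies entirely in Proposition~\ref{pro:classical limit} itself.
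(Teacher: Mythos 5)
Your proposal is correct and is exactly the paper's intended argument: the paper presents the corollary as ``an obvious consequence of Proposition~\ref{pro:classical limit} and assumptions \eqref{eq:state}'' with no written proof, and you have simply spelled out the specialization (taking $\delta=1$ in the first hypothesis, reading off concentration, integrability, and \eqref{eq:field energy limit}, and noting that the convergence is along the subsequence fixed by the notation $\Psi_\varepsilon\to\mu$). Your aside deriving the $\delta=\tfrac12$ bound via Jensen is harmless but unnecessary, since $\delta=1\geq\tfrac12$ already satisfies the hypothesis as stated.
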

		
	Now that we have specified the key mathematical tools of our analysis, we proceed with the proofs of the results stated in Sect. \ref{sec:main}. 

\subsection{Discrete modes}

We aim at proving Theorem \ref{teo:conv1}: the key ingredient is the convergence guaranteed by Proposition \ref{pro:classical limit}. The other properties can be proven by direct inspection. We recall that the full Hamiltonian $ H $ is given in \eqref{eq:full ham} with interaction \eqref{eq:interaction}. We denote by $ C_{\mathrm{b}}\lf(\R^{dN}\ri) $ the space of bounded continuous functions on $ \R^{dN} $, while $ C_{\infty}\lf(\R^{dN}\ri) $ stands for continuous functions vanishing as $ |\xv| \to \infty $.

	Before attacking the proof of Theorem \ref{teo:conv1}, we only need one more technical result:

	\begin{lemma}
  		\label{lemma:pot bounded}
  		\mbox{}	\\
  		Let the assumption \eqref{eq:g1} be satisfied and $ \Psi_\varepsilon \in \dom(\mathrm{d}\Gamma(1)^{1/4}) $ uniformly in $ \eps $, then $ V_{\varepsilon,\Psi_\varepsilon}(\xv) \in C_{\mathrm{b}}(\mathbb{R}^d)$, {\it i.e.},
  		\beq
  			\label{eq:pot bounded}
  			\sup_{\xv \in \R^d} \lf| V_{\varepsilon,\Psi_\varepsilon}(\xv) \ri| \leq C < +\infty,
		\eeq
		uniformly in $ \eps $.
	\end{lemma}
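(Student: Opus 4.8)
The plan is to bound $V_{\varepsilon,\Psi_\varepsilon}(\xv)=\meanlr{\Psi_\varepsilon}{A(\xv)}{\Psi_\varepsilon}$ directly, using only the explicit form \eqref{eq:A1} of the interaction, the assumption \eqref{eq:g1} that $\lambda=\{\lambda_\nv\}\in\ell^2(\Z^d)$, and the hypothesis $\Psi_\varepsilon\in\dom(\diff\Gamma(1)^{1/4})$. The starting point is the pointwise identity
\beq
	V_{\varepsilon,\Psi_\varepsilon}(\xv) = \meanlr{\Psi_\varepsilon}{a\big(g(\xv)\big) + a^{\dagger}\big(g(\xv)\big)}{\Psi_\varepsilon}_{\fock(\hh)} = 2\Re \meanlr{\Psi_\varepsilon}{a\big(g(\xv)\big)}{\Psi_\varepsilon}_{\fock(\hh)},
\eeq
where $g_\nv(\xv)=\lambda_\nv e^{-i\kv_\nv\cdot\xv}$, so that $\lf\|g(\xv)\ri\|_{\ell^2(\Z^d)}=\lf\|\lambda\ri\|_{\ell^2(\Z^d)}$ \emph{independently of $\xv$}. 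First I would invoke the standard $N_\varepsilon$-bound for the annihilation operator in the $\varepsilon$-scaled Fock space, namely $\lf\|a(f)\Phi\ri\|\leq \sqrt{\varepsilon}\,\lf\|f\ri\|_{\hh}\,\lf\|\diff\Gamma(1)^{1/2}\Phi\ri\|$, which gives
\beq
	\lf|\meanlr{\Psi_\varepsilon}{a(g(\xv))}{\Psi_\varepsilon}\ri| \leq \sqrt{\varepsilon}\,\lf\|\lambda\ri\|_{\ell^2(\Z^d)}\,\lf\|\diff\Gamma(1)^{1/2}\Psi_\varepsilon\ri\|\,\lf\|\Psi_\varepsilon\ri\|.
\eeq
This already yields a uniform bound if $\Psi_\varepsilon\in\dom(\diff\Gamma(1)^{1/2})$ uniformly; to get the sharper hypothesis $\dom(\diff\Gamma(1)^{1/4})$ one puts one annihilation/creation operator on each side of the inner product and uses $\lf\|a^{\#}(f)\Phi\ri\|\leq \sqrt{\varepsilon}\,\lf\|f\ri\|_{\hh}\,\lf\|(\diff\Gamma(1)+\varepsilon)^{1/2}\Phi\ri\|$ together with an interpolation/commutator estimate so that only $\diff\Gamma(1)^{1/4}$ acts on $\Psi_\varepsilon$, at the cost of an extra factor controlled by $\varepsilon$ from the CCR; the factor $\sqrt{\varepsilon}$ then absorbs the loss.

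Next I would track the $\varepsilon$-dependence carefully: since $a(g(\xv))$ is of order $\sqrt\varepsilon$ in the scaled CCR \eqref{eq:ccr l2}, and $\meanlr{\Psi_\varepsilon}{\diff\Gamma(1)^{1/2}}{\Psi_\varepsilon}\leq \meanlr{\Psi_\varepsilon}{(\diff\Gamma(1)+\varepsilon)^{1/2}}{\Psi_\varepsilon}$, the bound on $\diff\Gamma(1)^{1/4}$ gives $\lf\|(\diff\Gamma(1))^{1/4}\Psi_\varepsilon\ri\|^2 = \meanlr{\Psi_\varepsilon}{\diff\Gamma(1)^{1/2}}{\Psi_\varepsilon}\leq C$ uniformly, hence $\lf|V_{\varepsilon,\Psi_\varepsilon}(\xv)\ri|\leq 2\,C'\,\lf\|\lambda\ri\|_{\ell^2(\Z^d)}$ uniformly in both $\xv$ and $\varepsilon$, which is exactly \eqref{eq:pot bounded}. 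The continuity of $\xv\mapsto V_{\varepsilon,\Psi_\varepsilon}(\xv)$ follows because $\xv\mapsto g(\xv)=\{\lambda_\nv e^{-i\kv_\nv\cdot\xv}\}_{\nv\in\Z^d}$ is continuous from $\R^d$ into $\ell^2(\Z^d)$ (dominated convergence on the series, using $\lambda\in\ell^2$ as dominating summand), and $f\mapsto \meanlr{\Psi_\varepsilon}{a(f)+a^{\dagger}(f)}{\Psi_\varepsilon}$ is $\ell^2$-continuous by the same estimate above; composition of continuous maps then gives $V_{\varepsilon,\Psi_\varepsilon}\in C_{\mathrm b}(\R^d)$.

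I do not expect a serious obstacle here — the lemma is essentially a bookkeeping exercise with the elementary number-operator bounds for linear field operators. The only mildly delicate point is obtaining the bound with the weak regularity $\diff\Gamma(1)^{1/4}$ rather than the more naive $\diff\Gamma(1)^{1/2}$: this forces one to split the expectation symmetrically as $\meanlr{a(g(\xv))\Psi_\varepsilon}{\Psi_\varepsilon}$ versus $\meanlr{\Psi_\varepsilon}{a(g(\xv))\Psi_\varepsilon}$ and distribute a total of one power of $\diff\Gamma(1)$ as $\tfrac14+\tfrac14$, which requires commuting $a^{\#}$ past $\diff\Gamma(1)^{1/4}$ and estimating the commutator — a routine but slightly fiddly computation in the scaled Fock space. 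Everything else ($\xv$-independence of $\lf\|g(\xv)\ri\|_{\ell^2}$, continuity, uniformity in $\varepsilon$ thanks to the $\sqrt\varepsilon$ prefactor) is immediate.
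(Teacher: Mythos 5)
Your overall plan — reduce to bounding $\lvert\meanlr{\Psi_\eps}{a(g(\xv))}{\Psi_\eps}\rvert$ by $\|\lambda\|_{\ell^2}\|\diff\Gamma(1)^{1/4}\Psi_\eps\|^2$, exploit the $\xv$-independence of $\|g(\xv)\|_{\ell^2}$, and get continuity from dominated convergence on the sequence space — is exactly the paper's plan. But the technical core, the passage from the crude $\diff\Gamma(1)^{1/2}$-bound to the sharper one with $\diff\Gamma(1)^{1/4}$ on each side, is where your proposal has a real gap, and it is compounded by a misreading of the scaling.

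First, the bound $\|a(f)\Phi\|\leq\sqrt\eps\,\|f\|_{\hh}\|\diff\Gamma(1)^{1/2}\Phi\|$ carries a spurious $\sqrt\eps$: in the paper's convention $\diff\Gamma(1)=a^\dagger a$ is itself the $\eps$-scaled number operator (its spectrum on the $n$-particle sector is $\eps n$), so writing $a=\sqrt\eps\,b$ with $[b,b^\dagger]=1$ gives $\|a(f)\Phi\|=\sqrt\eps\|b(f)\Phi\|\leq\sqrt\eps\|f\|\cdot\eps^{-1/2}\|\diff\Gamma(1)^{1/2}\Phi\|=\|f\|\|\diff\Gamma(1)^{1/2}\Phi\|$ — no $\sqrt\eps$. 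Your argument then leans on this nonexistent factor "absorbing the loss" from a commutator, so the mechanism you invoke to get down to $\diff\Gamma(1)^{1/4}$ is not available as stated. Second, the commutator/interpolation step that is supposed to split the power of $\diff\Gamma(1)$ symmetrically is only gestured at, not carried out; a pull-through like $a(g)(\diff\Gamma(1)+1)^{-1/4}=(\diff\Gamma(1)+1+\eps)^{-1/4}a(g)$ does not by itself produce the symmetric arrangement $(\diff\Gamma(1)+1)^{1/4}(\cdot)(\diff\Gamma(1)+1)^{1/4}$, and no error term is estimated. The paper instead establishes the needed estimate directly and without commutators: it expands $\meanlr{\Psi}{a(g)\Psi}$ over Fock sectors, isolates the factor $\sqrt{\eps m+1}$ (coming from $\sqrt{\eps(m+1)}\leq\sqrt{\eps m+1}$ for $\eps\leq1$), splits it as $(\eps m+1)^{1/4}(\eps m+1)^{1/4}$ between $\Psi_m$ and $\Psi_{m+1}$, and applies Cauchy–Schwarz twice to obtain $\lvert\meanlr{\Psi}{a(g)\Psi}\rvert\leq\|g\|_{\ell^2}\|(\diff\Gamma(1)+1)^{1/4}\Psi\|\,\|\diff\Gamma(1)^{1/4}\Psi\|$; equivalently, one checks that $(\diff\Gamma(1)+1)^{-1/4}a(g)(\diff\Gamma(1)+1)^{-1/4}$ is bounded with norm $\leq\|g\|_{\ell^2}$ uniformly in $\eps\leq1$. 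I'd recommend you redo the key step along these lines rather than via commutators.
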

	\begin{proof}
		The key observation is that the following bound holds true:
  		\beq
  			\label{eq:fock est form}
    			\lf| \meanlr{\Psi_{\eps}}{A(\xv)}{\Psi_{\eps}}_{\fock} \ri| \leq 2 \lf\| g_{\nv}(\xv)\ri\|_{\ell^2} \lf\| \bigl(\mathrm{d}\Gamma(1)+1\bigr)^{1/4}\Psi_\varepsilon  \ri\|_{\fock}^2.
  		\eeq
  		Indeed, it yields
  		\bmln{
     			\sup_{\xv \in \mathbb{R}^d}  \lf| V_{\varepsilon,\Psi_\varepsilon}(\xv)  \ri| = \sup_{\xv \in \mathbb{R}^d} \lf| \meanlr{\Psi_{\eps}}{A(\xv)}{\Psi_{\eps}}_{\fock} \ri| \leq C  \lf\| \bigl(\mathrm{d}\Gamma(1)+1\bigr)^{1/4}\Psi_\varepsilon  \ri\|_{\fock}^2 \\
     			\leq C \lf[ \lf\| \mathrm{d}\Gamma(1)^{1/4}\Psi_\varepsilon  \ri\|_{\fock}^2 + \lf\| \Psi_\varepsilon  \ri\|_{\fock}^2 \ri] \leq C,
    		}
    		and therefore $ V_{\varepsilon,\Psi_\varepsilon}(\xv) $ is uniformly bounded in $ \eps $. To prove continuity we use again \eqref{eq:fock est form}:
  		\bdm
  		     \lf| V_{\varepsilon,\Psi_\varepsilon}(\xv) - V_{\varepsilon,\Psi_\varepsilon}(\yv) \ri|  \leq C  \lf\| \bigl(\mathrm{d}\Gamma(1)+1\bigr)^{1/4}\Psi_\varepsilon  \ri\|_{\fock}^2 \lf\| g_{\nv}(\xv) - g_{\nv}(\yv) \ri\|_{\ell^2} \xrightarrow[\xv \to \yv]{} 0,
    		\edm
    		by dominated convergence.
  		
  		We prove now \eqref{eq:fock est form}: let $\Psi\in \mathscr{D}\bigl(\mathrm{d}\Gamma(1)^{1/4}\bigr)$ and $ g \in \ell^2 \lf( \R^{dN} \ri)$, then, using Cauchy-Schwarz twice, one has
                \bmln{
                	\lf| \braket{\Psi}{a(g)\Psi}_{\fock} \ri| \leq \sum_{m = 0}^{\infty}\sqrt{\eps m+1}\:\bigg| \sum_{\mathbf{n}\in \mathbb{Z}^d}^{} \Psi_{m}^{*} (\mathbf{n}_1, \ldots,\mathbf{n}_m) g_{\mathbf{n}} \Psi_{m+1}(\mathbf{n},\mathbf{n}_1, \ldots ,\mathbf{n}_m)  \bigg| \\
                	\leq \sum_{m = 0}^{\infty} \lf\| g  \ri\|_{\ell^2}^{} \lf\| (\eps m+1)^{1/4}\Psi_m  \ri\|_{\ell^2_m}^{} \lf\| (\eps m+1)^{1/4}\Psi_{m+1}  \ri\|_{\ell^2_{m+1}}^{}\\\leq \lf\| g  \ri\|_{\ell^2} \lf\| \lf(\mathrm{d}\Gamma(1)+1 \ri)^{1/4}\Psi  \ri\|_{\fock}^{} \lf\|\mathrm{d}\Gamma(1)^{1/4}\Psi  \ri\|_{\fock}^{}\; ,
                 }
                where for any $m\in \N \cup \{0 \} $, $\Psi_m\in \ell^2(\mathbb{Z}^d)^{\otimes _{\mathrm{sym}} m}= : \ell^2_m$ is the component of
                $\Psi$ with $m$ modes, {\it i.e.}, $ \Psi = \lf( \Psi_0, \ldots, \Psi_m, \ldots \ri) \in \fock\lf(\ell^2\lf(\Z^{d}\ri)\ri)$.
	\end{proof}
	
	We are now in position to complete the proof of the main result about the model with discrete modes of radiation.
	
	\begin{proof}[Proof of Theorem \ref{teo:conv1}]
		Under the hypothesis of Theorem \ref{teo:conv1} and thanks to Lemma \ref{lemma:pot bounded}, $ V_{\eps,\Psi_{\eps}} $ is an infinitesimally small perturbation of $ \HH_0 $ in the sense of Kato. Therefore $ \HH_{\eps} $ is self-adjoint on the domain of self-adjointness $ \dom(\HH_0) $ of $ \HH_0 $. 
		
		Moreover one has
		\beq
			\label{eq:pot mu bounded}
			\sum_{j = 1}^N \sup_{\xv_j \in \R^{d}} \lf| V_{\mu}(\xv_j) \ri| \leq 2N  \bigg( \sum_{\nv \in \Z^d} \lf| \lambda_{\nv} \ri|^2 \bigg)^{1/2} \bigg( \int_{\ell^2(\Z^d)}^{}  \mathrm{d}\mu(z) \: \lf\| z \ri\|^2_{\ell^2(\Z^d)} \bigg)^{1/2} < +\infty,
		\eeq
		thanks to the assumptions \eqref{eq:g1} on $ \lambda_{\nv} $ and \eqref{eq:state} on $ \Psi_{\eps} $ in combination with Proposition \ref{pro:classical limit}. Therefore $ \HHe $ is also self-adjoint on $ \dom(H_0) $.
		
		To prove the convergence in norm resolvent sense, pick any $ {\zeta} \in \rho(\HH_{\eps}) \cap \rho(\HHe) $ uniformly in $ \eps $, {\it i.e.}, such that there exists $ C > 0 $ so that $ \dist({\zeta}, \sigma(\HH_{\eps})) > C $. Then by the second resolvent identity 
		\bml{
			\sup_{\lf\| \psi \ri\|_2 = 1} \lf\| {\lf[ \lf( \HH_{\eps} - \zeta \ri)^{-1}  - \lf( \HHe - \zeta \ri)^{-1} \ri]} \psi \ri\|_{L^2} \\
			\leq \sup_{\lf\| \psi \ri\|_2 = 1}  \sum_{j = 1}^N \lf\| \lf(\HH_{\eps} - {\zeta} \ri)^{-1} \lf( V_{\eps, \Psi_{\eps}}(\xv_j) - V_\mu(\xv_j) \ri) \lf(\HHe - {\zeta} \ri)^{-1} \psi \ri\|^2_{L^2} \\
			\leq C \sup_{\lf\| \psi \ri\|_2 = 1} \sum_{j = 1}^N \lf\|  \lf( V_{\eps, \Psi_{\eps}}(\xv_j) - V_\mu(\xv_j) \ri) \lf(\HHe - {\zeta} \ri)^{-1} \psi \ri\|^2_{L^2} \leq  C N \sup_{\lf\| \psi \ri\|_2 = 1} \lf\|  \lf(\HHe - {\zeta} \ri)^{-1} \psi \ri\|^2_{L^2} \\
			\leq C \sup_{\lf\| \psi \ri\|_2 = 1} \lf\| \psi \ri\|^2_{L^2} \leq C < +\infty,
		}
		by the uniform boundedness of $ V_{\eps,\Psi_{\eps}} $ and $ V_\mu $ proven in \eqref{eq:pot mu bounded} and in \eqref{eq:pot bounded} and the assumptions on $ {\zeta} $. Therefore the integrand on the l.h.s. is uniformly bounded by a $ L^1 $ function, whose norm is finite. Hence we can apply a dominated convergence argument and the result then follows from pointwise convergence of $ V_{\eps, \Psi_{\eps}} $ to $ V_\mu $ proven in Corollary \ref{cor:pointwise}.
	\end{proof}

\subsection{Nelson model and trapping potentials}

The proof of Theorem \ref{teo:conv2} is a trivial adaptation of the proof of Theorem \ref{teo:conv1} discussed in the previous Sect.: it is indeed sufficient to replace $ \ell^2(\Z^d) $ with $ L^2(\R^d) $ and follow step by step the same arguments. We omit the details.

We turn now our attention to the result presented in Sect. \ref{sec:traps} and specifically Theorem \ref{teo:trap}. We recall that the setting is slightly different: the goal is to derive the effective particle Hamiltonian under restrictive assumptions on the field state, which is assumed to be a squeezed coherent state, {\it i.e.}, a state of the form \eqref{eq:coherent state},
\bdm
	\Xi(f) =  \WW\lf( \tx\frac{1}{i\eps} f \ri)\Omega,
\edm
$ \Omega $ being the vacuum state. More precisely we assume that $ \Psi_{\eps} $ is given by \eqref{eq:coherent trap}, {\it i.e.},
\bdm
	\Xi(f_{W,\varepsilon}) := \WW\lf( \tx\frac{1}{i\eps} f_{W,\eps} \ri)\Omega,
\edm
where 
\bdm
 	f_{W,\varepsilon}(\kv)= \frac{1}{2(2\pi)^{d/2} \lambda^{*}(\kv)} \lf(  \widehat{\varphi_{\varepsilon} *  W} \ri)(\kv),
\edm
for $ W \in L^2_{\mathrm{loc}}(\R^d; \R^+) $ and a suitable mollifier $ \varphi_{\varepsilon}(\xv)=\varepsilon^{-d} \varphi(\xv/\varepsilon) $, $ \varphi \in C^{\infty}_0(\R^d) $ with $ \lf\| \varphi \ri\|_1 = 1 $. Note in particular that we drop in this Sect. the assumptions \eqref{eq:state} on the field state and therefore Proposition \ref{pro:classical limit} does not apply. In the case of coherent states however the derivation of the effective potential is much more explicit and there is no need to pass through the convergence to classical measures:

	\begin{proposition}[Classical limit of coherent states]
		\label{pro:coherent states}
		\mbox{}	\\
		Let $ \Psi_{\eps} = \Xi(f_{\eps}) $ be a state of the form \eqref{eq:coherent state} for some $ f_{\eps} \in L^2(\R^d) $, for any $ \eps < 1 $. Then
                \begin{align}
                  	\label{eq:coherent potential}
			&\meanlr{\Psi_{\eps}}{A(\xv)}{\Psi_{\eps}} = 2\Re \int_{\R^d} \diff \kv \: e^{i \kv \cdot \xv} f_{\varepsilon}(\kv) \lambda^{*}(\kv)\; .
                      \end{align}
                      If in addition $\omega f_{\varepsilon}\in L^2 (\R^d  )$, then
                      \begin{align}
                        \label{eq:coherent energy}
                        &\meanlr{\Psi_{\eps}}{\mathrm{d}\Gamma(\omega)}{\Psi_{\eps}} = \int_{\R^d} \diff \kv \: \omega(\kv) \lf| f_{\varepsilon}(\kv) \ri|^2.
                      \end{align}
	\end{proposition}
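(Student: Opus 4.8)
The plan is to compute both expectation values directly, exploiting the well-known action of Weyl operators on creation and annihilation operators. Recall that $ \Psi_{\eps} = \Xi(f_{\eps}) = \WW\bigl(\tfrac{1}{i\eps} f_{\eps}\bigr)\Omega $ and that the Weyl operators implement a shift: for any $ h \in L^2(\R^d) $ one has the canonical identity
\beq
	\label{eq:weyl shift}
	\WW(h)^* \, a(g) \, \WW(h) = a(g) + i\eps \braket{g}{h}_{L^2},
\eeq
and correspondingly $ \WW(h)^* a^{\dagger}(g) \WW(h) = a^{\dagger}(g) - i\eps \braket{h}{g}_{L^2} $ (the factor $ \eps $ coming from the scaled CCR \eqref{eq:ccr l2}, consistently with the $ \tfrac{1}{i\eps} $ normalization in the definition of $ \Xi $, so that the shift is by $ f_{\eps} $ itself). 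Taking $ h = \tfrac{1}{i\eps} f_{\eps} $ and $ g = \lambda(\kv) e^{-i\kv\cdot\xv} $, conjugating $ A(\xv) $ as given in \eqref{eq:nelson interaction} gives $ \WW(h)^* A(\xv) \WW(h) = A(\xv) + 2\Re \int_{\R^d} \diff\kv\, e^{i\kv\cdot\xv} f_{\eps}(\kv)\lambda^*(\kv) $. Since $ \braket{\Omega}{A(\xv)\Omega} = 0 $ (the vacuum is annihilated by $ a(\kv) $ and $ a^{\dagger}(\kv)\Omega $ is orthogonal to $ \Omega $), taking the vacuum expectation yields \eqref{eq:coherent potential} immediately.

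For \eqref{eq:coherent energy} the same strategy applies to $ \diff\Gamma(\omega) = \int \diff\kv\,\omega(\kv)\, a^{\dagger}(\kv)a(\kv) $. Conjugating each factor by $ \WW(h) $ with $ h = \tfrac{1}{i\eps}f_{\eps} $ turns $ a(\kv) $ into $ a(\kv) + f_{\eps}(\kv) $ and $ a^{\dagger}(\kv) $ into $ a^{\dagger}(\kv) + f_{\eps}^*(\kv) $, so that
\beq
	\label{eq:conj dgamma}
	\WW(h)^* \diff\Gamma(\omega) \WW(h) = \int_{\R^d} \diff\kv\,\omega(\kv)\bigl( a^{\dagger}(\kv) + f_{\eps}^*(\kv)\bigr)\bigl(a(\kv) + f_{\eps}(\kv)\bigr).
\eeq
Expanding the product and taking the vacuum expectation, the terms $ a^{\dagger}(\kv)a(\kv) $, $ a^{\dagger}(\kv) f_{\eps}(\kv) $ and $ f_{\eps}^*(\kv) a(\kv) $ all give zero against $ \Omega $, leaving only $ \int \diff\kv\,\omega(\kv)|f_{\eps}(\kv)|^2 $, which is \eqref{eq:coherent energy}. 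The additional hypothesis $ \omega f_{\eps} \in L^2 $ is exactly what is needed to guarantee that $ \Xi(f_{\eps}) \in \dom(\sqrt{\diff\Gamma(\omega)}) $ (equivalently $ \sqrt{\omega} f_{\eps} \in L^2 $, since $ \omega \geq 0 $ and $ f_{\eps} \in L^2 $ already; one may note $ \|\sqrt{\omega}f_{\eps}\|_2^2 \leq \|f_{\eps}\|_2\|\omega f_{\eps}\|_2 $ by Cauchy–Schwarz) so that the left-hand side of \eqref{eq:coherent energy} is a finite number and the formal manipulation above is legitimate.

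The only genuinely delicate point is the justification of the conjugation identities \eqref{eq:weyl shift}–\eqref{eq:conj dgamma} at the operator level rather than the purely formal one: one must check that the relevant vectors lie in the appropriate domains and that the unbounded operators involved are closable with the manipulations valid on a common core. This is standard for Weyl operators acting on coherent states — indeed $ \Xi(f_{\eps}) $ lies in the domain of any polynomial in creation/annihilation operators and, under $ \omega f_{\eps} \in L^2 $, in $ \dom(\diff\Gamma(\omega)) $ — so the argument can be carried out by first restricting to the dense set $ C_0^{\infty}(\diff\Gamma(1)) $ of finite-particle vectors, applying the algebraic identities there, and then extending by continuity. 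I would simply invoke the standard properties of Weyl operators and coherent states (as in \cite{ammari:nier:2008}) and present the computation above; the bulk of the proof is the two-line algebraic expansion, with the domain bookkeeping being routine.
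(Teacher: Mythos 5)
Your proof is correct and follows essentially the same route as the paper: conjugate by the Weyl operator to shift $a(\kv)\mapsto a(\kv)+f_{\eps}(\kv)$, $a^{\dagger}(\kv)\mapsto a^{\dagger}(\kv)+f_{\eps}^{*}(\kv)$, and then observe that all cross terms vanish against the vacuum. The only slight imprecision is calling $\omega f_{\eps}\in L^2$ ``equivalent'' to $\sqrt{\omega}f_{\eps}\in L^2$ — the former is strictly stronger (and is what guarantees $\Xi(f_{\eps})\in\dom(\diff\Gamma(\omega))$ so the left side is defined as an operator expectation rather than merely a form), while the latter is what makes the right-hand integral finite — but this does not affect the argument.
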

	
	\begin{proof}
          The result is a consequence of the following well known property of the Weyl operators: for any
          $g\in L^2 (\R^d )$
          \begin{align*}
            &\WW^{\dagger}\lf( \tx\frac{1}{i\eps} g \ri)a(\kv) \WW\lf( \tx\frac{1}{i\eps} g \ri)=a(\kv)+ g(\kv)\; , \\
            &\WW^{\dagger}\lf( \tx\frac{1}{i\eps} g \ri)a^{\dagger}(\kv) \WW\lf( \tx\frac{1}{i\eps} g \ri)=a^{\dagger}(\kv)+g^{*}(\kv)\;.
          \end{align*}
          It then follows that
          \begin{equation*}
            \begin{split}
              \meanlr{\Psi_{\eps}}{A(\xv)}{\Psi_{\eps}}_{\fock} = 2\Re \int_{\R^d} \diff \kv \: e^{i \kv \cdot \xv} f_{\varepsilon}(\kv) \lambda^{*}(\kv)+ \meanlr{\Omega}{A(\xv)}{\Omega}_{\fock},
            \end{split}
          \end{equation*}
          and the second term in the right hand side is zero since it consists of the action of the
          annihilation operator on the vacuum (once on the right and once on the left). Analogously,
          \bmln{
              \meanlr{\Psi_{\eps}}{\mathrm{d}\Gamma(\omega)}{\Psi_{\eps}}_{\fock} = \int_{\R^d} \diff \kv \: \omega(\kv) f_{\varepsilon}^{*}(\kv) f_{\varepsilon}(\kv)+\meanlr{\Omega}{\mathrm{d}\Gamma(\omega)}{\Omega}_{\fock} \\
              + 2\Re\meanlr{\Omega}{a(\omega f_{\varepsilon})}{\Omega}_{\fock},
          }
          and again the second and third term on the right hand side vanish because the annihilation operator
          acts on the vacuum.
	\end{proof}

	\begin{remark}[Convergence to a classical measure]
		\label{rem:cylindrical measure}
		\mbox{}	\\
		One can naturally wonder whether a result like the one stated in Proposition \ref{pro:classical limit}, {\it i.e.}, a sort of convergence of $ \Psi_{\eps} $ to a classical measure, holds true also for state of the form \eqref{eq:coherent trap}, or, more generally, for $ \Xi(f_{\eps}) $. The answer is actually given by \cite[Theorem 3.15]{Falconi:2016ab}: there is always at least one cluster point, but unfortunately such point might be a {\it cylindrical measure} instead of a true measure. Given the properties of cylindrical measures, this actually means that, by suitably enlarging the space, one can make the limit point $ \mu $ a true measure, but the key feature is that typically the support of $ \mu $ is {\it outside} $ \hh $.
	\end{remark}
	
	A technical but useful result is the following
	
	\begin{lemma}
		\label{lemma:l2 convergence}
		\mbox{}	\\
		For any $ \varphi \in C^{\infty}_0(\R^d) $ with $ \lf\| \varphi \ri\|_1 = 1 $ and $ W \in L^2_{\mathrm{loc}}(\R^d) $,
		\beq
			\label{eq:l2 convergence}
			\varphi_{\eps} * W \xrightarrow[\eps \to 0]{L^2_{\mathrm{loc}}(\R^d)} W,
		\eeq
		where $ \varphi_{\eps}(\xv) : = \eps^{-d} \varphi(\xv/\eps) $.
	\end{lemma}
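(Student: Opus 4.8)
The plan is to reduce the localized convergence to the classical fact that mollification is an approximate identity on $L^2(\Rd)$, using crucially that $\varphi$ — and hence each $\varphi_\eps$ — has \emph{compact} support, not merely that $\varphi\in L^1$.

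First I would fix an arbitrary compact set $K\subset\Rd$; the claim is that $\lVert \varphi_\eps * W - W\rVert_{L^2(K)}\to 0$. Choose $R>0$ with $\supp\varphi\subset B_R(0)$, so that $\supp\varphi_\eps\subset B_{\eps R}(0)$, and set $K':=\{\xv\in\Rd : \dist(\xv,K)\leq R\}$, a fixed compact neighbourhood of $K$ independent of $\eps$. For $\xv\in K$ and $\eps<1$, the value $(\varphi_\eps * W)(\xv)=\int \varphi_\eps(\yv)\,W(\xv-\yv)\,\diff\yv$ only involves the restriction of $W$ to $K'$. Therefore, introducing the truncation $\widetilde W:=\one_{K'}W$, which belongs to $L^2(\Rd)$ because $W\in L^2_{\mathrm{loc}}(\Rd)$ and $K'$ is compact, one has $(\varphi_\eps * W)(\xv)=(\varphi_\eps * \widetilde W)(\xv)$ for every $\xv\in K$ and every $\eps<1$, while $W=\widetilde W$ on $K$.

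The matter thus reduces to the standard statement $\varphi_\eps * \widetilde W\xrightarrow[\eps\to 0]{L^2(\Rd)}\widetilde W$, which immediately yields convergence in $L^2(K)$ and hence in $L^2_{\mathrm{loc}}(\Rd)$. I would establish it in the usual manner: Young's inequality gives the uniform bound $\lVert \varphi_\eps * h\rVert_{L^2}\leq \lVert\varphi\rVert_1\,\lVert h\rVert_{L^2}=\lVert h\rVert_{L^2}$; since $\int\varphi_\eps=1$, Minkowski's integral inequality gives $\lVert \varphi_\eps * h - h\rVert_{L^2}\leq \sup_{\lvert\yv\rvert\leq \eps R}\lVert h(\cdot-\yv)-h\rVert_{L^2}$, which tends to $0$ by continuity of translations in $L^2(\Rd)$; applying this with $h=\widetilde W$ concludes. (Equivalently, one first approximates $\widetilde W$ in $L^2$ by a function in $C_0(\Rd)$, treats that case by uniform continuity, and closes the argument with the uniform bound above.)

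There is no genuine obstacle in this lemma; the only point requiring a little care is the localization step, namely checking that the convolution evaluated on $K$ never sees $W$ outside the fixed compact set $K'$ — and this is exactly where the compact support of $\varphi$ is used.
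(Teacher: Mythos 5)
Your proof is correct, and it takes a genuinely different route from the one in the paper. The paper argues directly on the compact set $K$: after the change of variables it rewrites $\varphi_\eps * W - W$ as $\int_{\supp\varphi}\varphi(\xv')\bigl(W(\cdot+\eps\xv')-W(\cdot)\bigr)\,\diff\xv'$, applies Cauchy--Schwarz to obtain a uniform bound on $\int_K|\varphi_\eps*W-W|^2$, and then concludes by invoking Vitali's theorem together with almost-everywhere convergence of the mollification. You instead localize at the outset: since $\supp\varphi_\eps\subset B_{\eps R}(0)$ shrinks with $\eps$, for $\xv\in K$ and $\eps<1$ the convolution only sees $W$ on a fixed compact neighbourhood $K'$, so after truncating to $\widetilde W=\one_{K'}W\in L^2(\R^d)$ you reduce to the textbook fact that $\varphi_\eps*h\to h$ in $L^2(\R^d)$ for $h\in L^2(\R^d)$, proved via Minkowski's integral inequality and continuity of translations. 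Your reduction is arguably the cleaner of the two: it sidesteps any uniform-integrability verification of the family $|\varphi_\eps*W-W|^2$ on $K$, which Vitali's theorem requires and which the paper's uniform $L^1(K)$ bound alone does not supply, and it isolates exactly where the compact support of $\varphi$ enters (the localization step). In exchange, the paper's version is more compressed and self-contained once one grants the a.e.\ convergence of the mollification. One small point common to both arguments: the identity $\varphi_\eps*W-W=\int\varphi_\eps(\yv)\bigl(W(\cdot-\yv)-W(\cdot)\bigr)\,\diff\yv$ uses $\int\varphi_\eps=1$, so the hypothesis $\|\varphi\|_1=1$ should be read with $\varphi\geq 0$ (or at least $\int\varphi=1$), as is standard for a mollifier.
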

	
	\begin{proof}
		We use once more dominated convergence: let $ K \subset \R^d $ be any compact set, then
		\bmln{
			\int_{K} \diff \xv \: \lf| \lf( \varphi_{\eps} * W \ri)(\xv) - W(\xv) \ri|^2 = \int_{K} \diff \xv \: \bigg| \int_{\supp(\varphi)} \diff \xvp \: \varphi(\xvp) \lf( W(\xv + \eps \xvp) - W(\xv) \ri) \bigg|^2	\\
			\leq \lf\| \varphi \ri\|^2_{L^2(\R^d)} \int_{K} \diff \xv \: \int_{\supp(\varphi)} \diff \xvp \: \lf| W(\xv + \eps \xvp) - W(\xv) \ri|^2 \\
			\leq C \bigg[ \int_{K} \diff \xv \: \int_{\supp(\varphi)} \diff \xvp \: \lf| W(\xv + \eps \xvp) \ri|^2 + |\supp \varphi| \lf\| W \ri\|^2_{L^2(K)} \bigg] \leq C_K,
		}
		so that we can take the limit $ \eps \to 0 $ inside the integral by Vitali's Theorem. Since $ \varphi_{\eps} * W $ converges a.e. to $ W $ on any compact set, we obtain the result.
	\end{proof}
	
	The last technical ingredient for the proof of Theorem \ref{teo:trap} is stated in the next Lemma.
	
	\begin{lemma}
		\label{lemma:dprime}
		\mbox{}	\\
  		Let $ T_{\varepsilon} : =-\Delta + U + V_{\varepsilon} $  be a family of self-adjoint operators on $L^2 (\mathbb{R}^{dN} )$ such that \eqref{eq:V} is satisfied and $V_{\varepsilon} \in L^2_{\mathrm{loc}} \lf(\mathbb{R}^{dN}; \mathbb{R}_+ \ri) $, with {$L^2 $ norm uniformly bounded in $ \eps $ in any compact set}, and
  		\beq
  			\label{eq:V conv}
  			V_{\eps}(\xv_1, \ldots, \xv_N)  \xrightarrow[\eps \to 0]{L^2_{\mathrm{loc}}(\R^{dN}; \R^+)} V_0(\xv_1, \ldots, \xv_N).	
		\eeq
		Then $T_{\varepsilon} \xrightarrow[\eps \to 0]{} T=-\Delta+U+V_0$ in strong resolvent sense.
	\end{lemma}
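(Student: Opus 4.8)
The plan is to reduce the claim to the classical criterion for strong resolvent convergence: if $T_{\eps}$ and $T$ are self-adjoint, $\mathscr{C}$ is a common core for all of them, and $T_{\eps}\psi\to T\psi$ in $L^2$ for every $\psi\in\mathscr{C}$, then $T_{\eps}\to T$ in strong resolvent sense. The whole point is thus to produce a common core on which the graph convergence is elementary; the natural candidate is $\mathscr{C}=C_0^{\infty}(\R^{dN})$.

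First I would check that $C_0^{\infty}(\R^{dN})$ is indeed a core for each $T_{\eps}$ and for $T$. Writing $T_{\eps}=\bigl(-\Delta+U_++V_{\eps}\bigr)+U_{\ll}$, the hypothesis \eqref{eq:V} together with $V_{\eps}\in L^2_{\mathrm{loc}}(\R^{dN};\R^+)$ gives that $U_++V_{\eps}$ is a nonnegative $L^2_{\mathrm{loc}}$ potential, while $U_{\ll}\in K_{\ll}(\R^{dN})$. This is exactly the structure of $\HH_0$ analysed in Sect.~\ref{sec:model}, and the same argument applies verbatim: essential self-adjointness of $-\Delta$ plus a nonnegative $L^2_{\mathrm{loc}}$ potential on $C_0^{\infty}$, followed by a Kato--Rellich absorption of the infinitesimally small term $U_{\ll}$, shows that $T_{\eps}$ is essentially self-adjoint on $C_0^{\infty}(\R^{dN})$. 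Moreover, the convergence \eqref{eq:V conv} together with the uniform-in-$\eps$ bound on $\lVert V_{\eps}\rVert_{L^2(K)}$ over compact sets $K$ forces $V_0$ to lie in $L^2_{\mathrm{loc}}(\R^{dN};\R^+)$ as well, so the very same reasoning gives that $T=-\Delta+U+V_0$ is essentially self-adjoint on $C_0^{\infty}(\R^{dN})$. Hence $\mathscr{C}=C_0^{\infty}(\R^{dN})$ is a common core for all the operators involved.

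The convergence on the core is then immediate: for $\psi\in C_0^{\infty}(\R^{dN})$ with $K:=\supp\psi$ one has $T_{\eps}\psi-T\psi=(V_{\eps}-V_0)\,\psi$, hence $\lVert T_{\eps}\psi-T\psi\rVert_{L^2(\R^{dN})}\leq\lVert\psi\rVert_{L^{\infty}(\R^{dN})}\,\lVert V_{\eps}-V_0\rVert_{L^2(K)}\to 0$ by \eqref{eq:V conv}. Applying the criterion recalled above then yields $T_{\eps}\to T$ in strong resolvent sense.

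I expect the only genuine obstacle to be the common-core step, i.e.\ checking simultaneous essential self-adjointness on $C_0^{\infty}(\R^{dN})$. This is exactly what makes it possible to bypass the form-theoretic subtleties: since the positive parts $V_{\eps}$ are in general unbounded, the form domains of the $T_{\eps}$ move with $\eps$, so a direct proof via Mosco or $\Gamma$-convergence of the quadratic forms would require more delicate approximation arguments (the obstruction being that generic functions in the form domain of $T$ are not compactly supported, so that $\int(V_{\eps}-V_0)\lvert\psi\rvert^2\to 0$ is not automatic). Since the core-based route is available and relies only on the essential self-adjointness theory already used for $\HH_0$ in Sect.~\ref{sec:model}, I would not pursue the form-theoretic alternative; with the common core in hand, everything else is routine.
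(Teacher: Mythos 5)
Your proposal is correct and follows essentially the same route as the paper: reduce to the Reed--Simon criterion (common core plus graph convergence), invoke essential self-adjointness of $-\Delta$ plus a nonnegative $L^2_{\mathrm{loc}}$ potential on $C_0^{\infty}(\R^{dN})$ (absorbing $U_{\ll}$ by Kato--Rellich), and then estimate $\lVert(V_{\eps}-V_0)\psi\rVert_{L^2}$ by compactness of $\supp\psi$. Your bound $\lVert\psi\rVert_{L^{\infty}}\lVert V_{\eps}-V_0\rVert_{L^2(\supp\psi)}$ is a slightly cleaner version of the same estimate the paper uses.
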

	
	\begin{proof}	
		The result is a direct consequence of a general result about convergence of operators (see, {\it e.g.}, \cite[Theorem VIII.25]{MR0493419}): if there exists a common core for all the operators $ T_{\eps} $, $ T_0 $ and on that core $ T_{\eps} \psi \to T_0 \psi $, then the operators converge in strong resolvent sense. 
		
		Under the hypothesis of the Lemma both the sequence of operators $ T_{\eps} $ and $ T_0 $ are essentially self-adjoint on $ C^{\infty}_0\lf(\R^{dN}\ri) $ (see, {\it e.g.}, \cite[Theorem X.28]{RS2}). Moreover, for any $ \psi \in C^{\infty}_0\lf(\R^{dN}\ri) $,
		\bmln{
			\lf\| \lf( T_{\eps} - T_0 \ri) \psi \ri\|^2_{L^2(\R^{dN})} = \lf\| \lf( V_{\eps} - V_0 \ri) \psi \ri\|^2_{L^2(\R^{dN})} = \int_{\R^{dN}} \diff \xv \: \lf( V_{\eps} - V_0 \ri)^2 \lf| \psi \ri|^2 \\
			\leq C \lf\| V_{\eps} - V_0 \ri\|_{L^2(\supp({\psi}))} \lf( \lf\| V_{\eps} \ri\|_{L^2(\supp({\psi}))} + \lf\| V_0 \ri\|_{L^2(\supp({\psi}))} \ri) \xrightarrow[\eps \to 0]{} 0,
		}
		since $ |\psi|^2 $ is bounded and has compact support.
	\end{proof}
	
	We proceed now with the proof of the main result:
	
	\begin{proof}[Proof of Theorem \ref{teo:trap}]
		Thanks to \eqref{eq:coherent potential} proven in Proposition \ref{pro:coherent states}, we know that the effective potential generated by the partial trace of the field operator on coherent states of the form \eqref{eq:coherent trap} is
		\bdm
			\sum_{j = 1}^N W_{\eps}(\xv_j) = \sum_{j=1}^N \lf( \varphi_{\eps} * W \ri)(\xv_j) \in L^2_{\mathrm{loc}}\big(\R^{dN}\big).
		\edm
		Note that Proposition \ref{pro:coherent states} can be applied since $ \varphi_{\eps}*W $ is a smooth function for any $ 0 < \eps < 1 $ and therefore its Fourier transform is rapidly decaying (faster than polynomially). Hence $ f_{W,\eps} \in L^2(\R^d) $, since $ \lambda(\kv) $ diverges as $ |\kv| \to \infty $ at most polynomially.
		
		Self-adjointness of $ \HH_{\eps} : = \HH_0 + U + \sum W_{\eps} $ and $ \HHe : = \HH_{\eps} : = \HH_0 + U + \sum W $ follows, {\it e.g.}, from \cite[Theorem X.28]{RS2}, which also guarantees that $ C^{\infty}_0\lf(\R^{dN}\ri) $ is a common core for both operators. Then the combination of Lemma \ref{lemma:l2 convergence} and Lemma \ref{lemma:dprime} completes the proof.	
	\end{proof}

\subsection{Polaron}

The full Hamiltonian of Fr\"{o}hlich polaron in given in \eqref{eq:polaron ham}, although that expression is purely formal. As anticipated in Proposition \ref{pro:sa2}	and proven in the Appendix, indeed, the interaction in \eqref{eq:polaron ham} makes sense only when written as a quadratic form, which can be shown to be a small perturbation of the free quadratic form associated to $ H_{\mathrm{free}} $. Therefore we think of $ H $ as the unique self-adjoint operator associated to the quadratic form $ Q_H[\Psi]  = \bra{\Psi}{H}\ket{\Psi} $.

The reader should also keep in mind that, as discussed in Remark \ref{rem:classical limit polaron}, the convergence of quantum expectation values of creation and annihilation operators to suitable classical quantities should be taken with care, and should be interpreted as the convergence of quadratic forms.

Therefore the proof strategy has to be suitably tuned to take into account two technical features, which are specific of the polaron: on the one hand one has to switch from the outset from  Schr\"{o}dinger operators to the associated quadratic forms, and, on the other, find an alternative route, which does not require $ g(\xv) $ to be in $L^{\infty}(\R^{dN},\hh)$.

We start by showing that both $ \HH_{\eps} $, for any $ \eps $, and $ \HHe $ are self-adjoint on suitable domains:

	\begin{lemma}
  		\label{lemma:sa hheps}
  		\mbox{}	\\
  		Let the assumptions \eqref{eq:V}, \eqref{eq:omega1} and \eqref{eq:g3} be satisfied and let $ \Psi_{\eps} \in \dom(\sqrt{\mathrm{d}\Gamma(1)}) $. Then $ \HH_{\eps} $ is self-adjoint on a domain $ \dom(\HH_{\eps}) \subset \dom\lf(\sqrt{-\Delta+U_+}\ri)$ and bounded from below for any $ \eps $ small.
	\end{lemma}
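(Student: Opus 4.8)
The plan is to reduce the self-adjointness of $\HH_\eps$ to an application of the KLMN theorem, exactly as in Proposition~\ref{pro:sa2} but now on the particle Hilbert space $L^2(\R^{dN})$. First I would compute the partial trace: by the polaron analogue of Proposition~\ref{pro:trace} (see Remark~\ref{rem:trace polaron}), the quadratic form $\QQ_H$ obtained by tracing the form $Q_H$ on the state $\Psi_\eps$ is, on $C_0^\infty(\R^{dN})$,
\[
	\QQ_H[\psi] = \meanlr{\psi}{\HH_0}{\psi}_{L^2} + \sum_{j=1}^N \meanlr{\psi}{V_{\eps,\Psi_\eps}(\xv_j)}{\psi}_{L^2} + c_\eps \lf\| \psi \ri\|_2^2,
\]
with $V_{\eps,\Psi_\eps}(\xv) = \meanlr{\Psi_\eps}{A(\xv)}{\Psi_\eps}_{\fock}$ and $A(\xv)$ the polaron interaction \eqref{eq:polaron interaction}. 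The key estimate is a form bound on the interaction term: I claim that for $\Psi_\eps \in \dom(\sqrt{\mathrm{d}\Gamma(1)})$ one has, for every $a>0$,
\[
	\lf| \sum_{j=1}^N \meanlr{\psi}{V_{\eps,\Psi_\eps}(\xv_j)}{\psi}_{L^2} \ri| \leq a \meanlr{\psi}{-\Delta}{\psi}_{L^2} + C_{a,\eps} \lf\| \psi \ri\|_2^2,
\]
where $C_{a,\eps}$ can be chosen uniformly for $\eps$ small. Granting this, since $U_\ll$ is infinitesimally form-bounded w.r.t.\ $-\Delta$ by \eqref{eq:V}, the full perturbation $\sum_j V_{\eps,\Psi_\eps}(\xv_j) + U_\ll$ is infinitesimally form-bounded w.r.t.\ $-\Delta + U_+$, hence KLMN yields a unique self-adjoint operator $\HH_\eps$ with form domain $\dom(\sqrt{-\Delta+U_+})$ and $\dom(\HH_\eps) \subset \dom(\sqrt{-\Delta+U_+})$, bounded below uniformly in $\eps$.

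The heart of the matter is therefore the form bound above, and this is where I expect the main obstacle. The point is that $\lambda(\kv) = |\kv|^{-(d-1)/2} \notin L^2(\R^d)$, so $V_{\eps,\Psi_\eps}(\xv)$ is not a bounded function and Lemma~\ref{lemma:pot bounded} does not apply; one must exploit the regularizing effect of $-\Delta$ in the particle variable. The natural route is to write, for each $j$, the creation/annihilation pair via the commutator trick of Remark~\ref{rem:classical limit polaron}: since $|\kv|^{-(d-1)/2} e^{-i\kv\cdot\xv} \in W^{-1,\infty}(\R^d;\hh)$ for $d\geq 2$ (indeed $(-\Delta_{\kv}\text{-side})$... rather $(-\Delta+1)^{-1/2}$ applied in $\kv$ makes it $L^2$-valued, uniformly in $\xv$ — this uses $d\geq 2$ so that $|\kv|^{-(d-1)} (1+|\kv|^2)^{-1}$ is integrable near zero and at infinity), we have a decomposition $g(\xv) = [-i\nabla_\xv, \tilde{\gv}(\xv)]$ with $\tilde{\gv} \in L^\infty(\R^d;\hh\otimes\R^d)$. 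Then
\[
	\meanlr{\psi}{a(g(\xv_j)) + a^\dagger(g(\xv_j))}{\psi} = \meanlr{\psi}{\lf[-i\nabla_{\xv_j}, a(\tilde{\gv}(\xv_j)) + a^\dagger(\tilde{\gv}(\xv_j))\ri]}{\psi},
\]
and moving one gradient onto each side of the inner product, together with the standard $N_\eps^{1/2}$-bound $\lf\| (a(h)+a^\dagger(h))\Psi \ri\| \leq 2\lf\| h \ri\|_\hh \lf\| (\mathrm{d}\Gamma(1)+\eps)^{1/2}\Psi \ri\|$ and $\lf\| \tilde{\gv}(\xv) \ri\|_\hh \leq C$ uniformly, produces exactly a bound by $\lf\| \nabla_{\xv_j}\psi \ri\|_2 \lf\| \psi \ri\|_2$ times $\lf\| (\mathrm{d}\Gamma(1)+\eps)^{1/2}\Psi_\eps \ri\|^2$; a Cauchy–Schwarz / Young step then converts this into the desired $a\lf\| \nabla\psi \ri\|_2^2 + C_a \lf\| \psi \ri\|_2^2$ form, with constants controlled by the uniform bound on $\meanlr{\Psi_\eps}{\mathrm{d}\Gamma(1)}{\Psi_\eps}$.

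Some care is needed with the commutator manipulation: $a^\#(\tilde{\gv}(\xv))$ depends on $\xv$ through the form factor, so $[-i\nabla_\xv, a^\#(\tilde{\gv}(\xv))]$ must be interpreted as $a^\#$ of the $\xv$-derivative of $\tilde{\gv}$, and one should either verify $\tilde{\gv}$ can be taken differentiable in $\xv$ with $\partial_\xv\tilde{\gv} \in L^\infty$ as well, or — cleaner — keep the commutator unexpanded and integrate by parts directly at the level of the sesquilinear form $\QQ_H[\psi,\phi]$ on $C_0^\infty$, which is legitimate since such $\psi$ are smooth and compactly supported. Once the form bound is in place, the self-adjointness, the inclusion of domains, and the uniform lower bound are immediate from KLMN, and the statement follows. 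I would relegate the verification that the polaron form factor lies in $W^{-1,\infty}(\R^d;\hh)$ for $d\geq 2$, and the attendant integrability computation, to a short lemma or to the Appendix alongside the proof of Proposition~\ref{pro:sa2}.
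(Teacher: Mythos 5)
Your strategy --- reduce to KLMN via the commutator trick of Remark~\ref{rem:classical limit polaron} --- is the right framework, and it is what the paper does, but the specific way you try to carry it out has a genuine gap at the infrared end. You claim the full form factor $g(\xv)(\kv)=|\kv|^{-(d-1)/2}e^{-i\kv\cdot\xv}$ can be written as $[-i\nabla_\xv,\tilde{\gv}(\xv)]$ with $\tilde{\gv}\in L^\infty(\R^d;\hh\otimes\R^d)$, and to justify this you check that $(-\Delta+1)^{-1/2}g(\xv)$ is $L^2$-valued. But the commutator $[-i\nabla_\xv,\:\cdot\:]$ produces a factor $\kv$, not $(1+|\kv|^2)^{1/2}$: the candidate is $\tilde{\gv}(\xv)(\kv)\propto \kv\,|\kv|^{-2}\,|\kv|^{-(d-1)/2}e^{-i\kv\cdot\xv}$, and membership in $L^\infty(\R^d;\hh)$ requires $\int_{\R^d}|\kv|^{-2}|\kv|^{-(d-1)}\,\diff\kv<\infty$, which diverges at $\kv=0$ (the radial integrand behaves like $r^{-2}$). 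Your inhomogeneous $H^{-1}$ check is correct but irrelevant; the homogeneous $\dot W^{-1,\infty}(\R^d;\hh)$ membership --- the one the commutator identity actually needs --- fails for the full polaron form factor.

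The paper circumvents exactly this by the infrared/ultraviolet split $g=g_{<,\varrho}+g_{>,\varrho}$ (see \eqref{eq:g decomposition 1}--\eqref{eq:g decomposition 2}, \eqref{eq:polaron trick} and \eqref{eq:klmn bound}), and the proof of Lemma~\ref{lemma:sa hheps} invokes precisely this: ``the trick is to split the expectation value of the field operator into an infrared contribution for $|\kv|\leq\varrho$ and an ultraviolet one for $|\kv|\geq\varrho$.'' For $|\kv|\leq\varrho$ the form factor is in $L^2(\R^d)$ for a.e.\ $\xv$ (since $|\kv|^{1-d}$ is locally integrable when $d\geq 2$), so that piece is a bounded perturbation via a Lemma~\ref{lemma:est nelson}--type bound; only for $|\kv|\geq\varrho$ does the commutator trick apply, and there $\lf\|(-\Delta)^{-1/2}g_{>,\varrho}\ri\|_{L^\infty(\R^d;\hh)}\sim\varrho^{-1/2}$, which is then made small enough (by taking $\varrho$ large) that the relative form bound on $-\Delta+U_+$ drops strictly below $1$. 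Without this split you cannot verify the KLMN hypothesis. Inserting it into your argument fixes the gap and reproduces the paper's estimate \eqref{eq:klmn bound}, from which the statement follows as you outline.
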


	\begin{proof}
		The result is a consequence of the estimate \eqref{eq:klmn bound} proven in the Appendix and used in the application of KLMN Theorem to the quadratic form $ \Q_H $ associated to the full polaron Hamiltonian. As explained in details in the Appendix, the trick is to split the expectation value of the field operator into an infrared contribution for $ |\kv| \leq \varrho $ and an ultraviolet one for $ |\kv| \geq \varrho $, where $ \varrho > 0 $ is a positive parameter to be optimized over.
		
  		Let then $ \psi \in \dom\lf(\sqrt{-\Delta+U_+}\ri)$ and $ \Psi_{\eps} $ be normalized. Then by definition of partial trace
  		\bdm
    			\sum_{j = 1}^N \meanlr{\psi}{V_{\eps,\Psi_{\eps}}(\xv_j)}{\psi}_{L^2(\R^{dN})} =\sum_{j = 1}^N \meanlr{\psi \otimes \Psi_{\eps}}{A(\xv_j)}{\psi \otimes \Psi_{\eps}}_{L^2 \otimes \fock}.
 		\edm
		The KLMN estimate \eqref{eq:klmn bound} yields ($ - \Delta = \sum - \Delta_j $ stands here for the Laplacian on $ \R^{dN} $)
  		\bmln{
      		\bigg| \sum_{j = 1}^N \meanlr{\psi}{V_{\eps,\Psi_{\eps}}(\xv_j)}{\psi}_{L^2(\R^{dN})}  \bigg| \leq \tx\frac{1}{2}  \meanlr{\psi}{- \Delta + U_+)}{\psi}_{L^2(\R^{dN})} \\+  \lf[ \tx\frac{1}{2} \meanlr{\Psi_{\eps}}{\diff \Gamma(1)}{\Psi_{\eps}}_{\fock} + C \ri] \lf\| \psi \ri\|_{L^2}^2;
    		}	
    		where $ C $ is a finite quantity. Hence the potential 
    		\bdm
    			\sum_{j=1}^N  V_{\eps,\Psi_{\eps}}(\xv_j) + U_{\ll}(\xv_1, \ldots, \xv_N) 
		\edm
		is a small perturbation of $ Q_{- \Delta + U_+} $ in the sense of quadratic forms, since by hypothesis $ U_{\ll}  $ is infinitesimally small w.r.t. $ - \Delta + U_+ $ and therefore the relative bound can be obtained as small as $ \frac{1}{2} + \epsilon  <  1 $. The KLMN Theorem (see, {\it e.g.}, \cite[Theorem X.17]{RS2}) then yields the results.
	\end{proof}
	
	\begin{lemma}
		\label{lemma:sa hhe}
		\mbox{}	\\
		Let $\mu \in \M\lf(L^2(\R^d)\ri) $ be a probability measure, satisfying the statement of Proposition
                \ref{pro:classical limit} under the assumptions \eqref{eq:state}. Then $ \HHe $ defined in
                \eqref{eq:conv3} is self-adjoint on $\dom(\HHe)$, with form domain $ \dom(\sqrt{-\Delta + U_+}) $.
	\end{lemma}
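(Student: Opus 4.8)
The strategy is the one already used for $\HH_0$ and in Lemma~\ref{lemma:sa hheps}: show that the perturbation $\sum_{j=1}^N V_\mu(\xv_j)$ is form-bounded, with relative bound strictly less than $1$, with respect to $-\Delta+U_+$, and then apply the KLMN Theorem. Note first that, since $\omega\equiv 1$ for the polaron, the hypothesis on $\mu$ (namely that $\mu$ is a classical limit provided by Proposition~\ref{pro:classical limit} under assumption \eqref{eq:state}, so with $\delta=1$) means that $\mu$ is a genuine probability measure on $L^2(\R^d)$ with $\int_{L^2(\R^d)}\mathrm{d}\mu(z)\,\lVert z\rVert_2^2<\infty$, and hence also $\int_{L^2(\R^d)}\mathrm{d}\mu(z)\,\lVert z\rVert_2<\infty$.

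The key step is to check that $V_\mu$ given by \eqref{eq:V3} is a well-defined element of $L^\infty(\R^d)+L^2(\R^d)$. For a fixed $z\in L^2(\R^d)$ set $V_z(\xv):=2(2\pi)^{\frac d2}\Re\,\widecheck{\bigl(\lvert\kv\rvert^{\frac{1-d}{2}}z\bigr)}(\xv)$ and split the momentum symbol at $\lvert\kv\rvert=1$. Since $d\geq 2$, one has $\lvert\kv\rvert^{\frac{1-d}{2}}\one_{\{\lvert\kv\rvert\leq 1\}}\in L^2(\R^d)$, so that $\lvert\kv\rvert^{\frac{1-d}{2}}z\,\one_{\{\lvert\kv\rvert\leq 1\}}\in L^1(\R^d)$ by Cauchy--Schwarz and its inverse Fourier transform is bounded; on the other hand $\lvert\kv\rvert^{\frac{1-d}{2}}z\,\one_{\{\lvert\kv\rvert>1\}}\in L^2(\R^d)$ because $\lvert\kv\rvert^{\frac{1-d}{2}}\leq 1$ there, so by Plancherel its inverse Fourier transform is in $L^2(\R^d)$. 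Both pieces are bounded linearly in $\lVert z\rVert_2$, so $\lVert V_z\rVert_{L^\infty+L^2}\leq C\lVert z\rVert_2$; since moreover $z\mapsto V_z$ is continuous, the estimate $\int\mathrm{d}\mu(z)\,\lVert V_z\rVert_{L^\infty+L^2}\leq C\int\mathrm{d}\mu(z)\,\lVert z\rVert_2\leq C\bigl(\int\mathrm{d}\mu(z)\,\lVert z\rVert_2^2\bigr)^{1/2}<\infty$ shows that the Bochner integral $V_\mu=\int\mathrm{d}\mu(z)\,V_z$ is a well-defined element of $L^\infty(\R^d)+L^2(\R^d)$ (and, tested against $\lvert\psi\rvert^2$ with $\psi\in C^{\infty}_0(\R^{dN})$, a form core, it coincides with \eqref{eq:V3}). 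Equivalently, one may integrate over $\mathrm{d}\mu(z)$ the KLMN estimate \eqref{eq:klmn bound} evaluated on the coherent states $\Xi(z)$, whose associated effective potential is exactly $V_z$ by Proposition~\ref{pro:coherent states}, obtaining directly $\bigl\lvert\langle\psi,\sum_j V_\mu(\xv_j)\psi\rangle\bigr\rvert\leq\tfrac12\langle\psi,(-\Delta+U_+)\psi\rangle+\bigl(\tfrac12\int\lVert z\rVert_2^2\,\mathrm{d}\mu+C\bigr)\lVert\psi\rVert_2^2$.

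It then remains to run the standard perturbative argument. For $d\leq 3$ an $L^2(\R^d)$ potential is infinitesimally operator-bounded with respect to $-\Delta$ by Sobolev embedding, and an $L^\infty$ potential is of course bounded; applying this in each variable $\xv_j$ shows that $\sum_{j=1}^N V_\mu(\xv_j)$ is infinitesimally form-bounded with respect to $-\Delta$ on $L^2(\R^{dN})$, hence a fortiori with respect to $-\Delta+U_+$. Since by \eqref{eq:V} the remaining part $U_\ll$ is also infinitesimally form-bounded with respect to $-\Delta+U_+$, the total perturbation $U_\ll+\sum_j V_\mu(\xv_j)$ has relative form bound strictly less than $1$, and the KLMN Theorem (see, {\it e.g.}, \cite[Theorem X.17]{RS2}) shows that $\HHe=(-\Delta+U_+)+U_\ll+\sum_j V_\mu(\xv_j)$ is self-adjoint and bounded from below on a domain $\dom(\HHe)$, with form domain $\dom(\sqrt{-\Delta+U_+})=\dom(\sqrt{\HH_0})$, as claimed.

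The only genuinely delicate point is the second paragraph, i.e.\ establishing that $V_\mu\in L^\infty(\R^d)+L^2(\R^d)$: this is where the integrability of $\lVert z\rVert_2^2$ against $\mu$, guaranteed by assumption \eqref{eq:state} through Proposition~\ref{pro:classical limit}, is essential — without it the average of the (individually only form-bounded) potentials $V_z$ need not even be a locally integrable function. Once that is in place, the conclusion is a routine application of the KLMN machinery already used throughout Sect.~\ref{sec:model}.
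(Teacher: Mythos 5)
Your proof is correct and reaches the same conclusion, but it handles the high-frequency (large $|\kv|$) part of $V_z$ by a genuinely different mechanism than the paper. The paper splits $V_z$ at an adjustable threshold $|\kv|=\varrho$ and, for the high-frequency piece $W_z^>$, rewrites it as a commutator $[\,\cdot\,,i\nabla_\xv]$ (the ``polaron trick'' \eqref{eq:polaron trick}), then applies Cauchy--Schwarz to obtain an infinitesimal form bound w.r.t.\ $-\Delta$ directly, in every dimension $d\geq 2$. You instead observe that $|\kv|^{\frac{1-d}{2}}z\,\one_{\{|\kv|>1\}}\in L^2$ and invoke Plancherel to place the high-frequency part of $V_z$ in $L^2(\R^d)$, then use the classical Kato--Rellich fact that $L^\infty(\R^d)+L^2(\R^d)$ potentials are infinitesimally $(-\Delta)$-bounded for $d\leq 3$. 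Since the paper restricts the polaron to $d\in\{2,3\}$ this dimensional limitation costs nothing here, and your route is arguably more elementary (pure Plancherel + standard Kato class facts, no commutator computation); the paper's commutator argument, by contrast, is dimension-independent and is reused verbatim in the resolvent-convergence proof of Theorem~\ref{teo:conv3}, which is presumably why the authors keep it. Your alternative closing remark — integrating the KLMN bound \eqref{eq:klmn bound} evaluated on coherent states $\Xi(z)$ against $\mathrm{d}\mu$ — is in fact essentially the paper's own argument. Both arguments, yours and the paper's, rely in the same way on the crucial $\mu$-integrability of $\lVert z\rVert_2^2$ coming from \eqref{eq:state} via Proposition~\ref{pro:classical limit}.
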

	
	\begin{proof}
  		We are going to prove that $ \sum V_{\mu}(\xv_j) $ is form-bounded w.r.t. $ -\Delta$ with infinitesimally small bound. For any $z\in L^2 (\mathbb{R}^d )$ we split the potential into two pieces:
		\bml{
			\label{eq:Wz}
			\widecheck{ \lf(|\kv|^{\frac{1-d}{2}} z \ri)} (\xv)  = \frac{1}{(2\pi)^{\frac{d}{2}}} \int_{|\kv| \leq \varrho} \diff \kv \: e^{i\kv \cdot \xv}  |\kv|^{\frac{1-d}{2}} z(\kv) + \frac{1}{(2\pi)^{\frac{d}{2}}} \int_{|\kv| \geq \varrho} \diff \kv \: e^{i\kv \cdot \xv}  |\kv|^{\frac{1-d}{2}} z(\kv) \\
			=:W^{<}_{z}(\xv)+W_{z}^{>}(\xv)\;.
    		}
  		By Cauchy-Schwarz,
  		\beq
  			\label{eq:bounded part}
   			\sup_{\xv \in \R^d} \lf| W^{<}_{z}(\xv) \ri| \leq \frac{1}{2}\int_{\lf| \kv\ri|\leq \varrho}^{} \diff \kv \: \lf|\kv\ri|^{1-d} + \frac{1}{2(2\pi)^{d}} \lf\|  z  \ri\|_{L^2}^2 \leq C \lf( 1 + \lf\|  z  \ri\|_{L^2}^2 \ri),
  		\eeq
  		for any finite $ \varrho $. Therefore, since $ \lf\| z \ri\|_{L^2}^2 $ is integrable with respect to $\mathrm{d}\mu(z)$ by \eqref{eq:state} and Proposition \ref{pro:classical limit}, the potential associated to $ W^{<}_{z} $ is bounded:
  		\begin{equation}
    			\label{eq:19}
    			\sup_{\xv \in \mathbb{R}^d} \bigg| \int_{L^2(\R^d)}^{} \diff \mu(z) \: W^{<}_{z}(\xv) \bigg| \leq C \int_{L^2(\R^d)} \diff \mu(z) \: \lf(1 + \lf\|  z  \ri\|_{L^2}^2 \ri) \leq C < + \infty.
  		\end{equation}
  		In particular the quadratic form of
  		\bdm
  			2 \Re \sum_{j=1}^N  \int_{L^2(\R^d)}^{} \diff \mu(z) \: W^{<}_{z}(\xv_j)
		\edm
		is infinitesimally small w.r.t. to any positive operator
  
   		 In order to bound the second part, we rewrite
  		\begin{equation}
  			\label{eq:polaron trick}
    			W_{z}^{>}(\xv)= \bigg[ \int_{|\kv| \geq \varrho} \diff \kv \:\frac{\kv}{|\kv|^{\frac{d+3}{2}}} e^{-i\kv \cdot \xv}   z(\kv) + \mbox{c.c.} \:, \: i\nabla_{\xv} \bigg],
  		\end{equation}
  		so that, by Cauchy-Schwarz, for any $\psi \in \dom(\sqrt{-\Delta_{\xv}})$ and any $\alpha>0$,
  		\begin{equation}
    			\label{eq:20}
    			\lf| \meanlr{\psi}{W_{z}^{>}(\xv)}{\psi}_{L^2(\R^d)} \ri| \leq \alpha \meanlr{\psi}{-\Delta_{\xv}}{\psi} + \frac{1}{\alpha} \lf\| z \ri\|_{L^2(\R^d)}^2 \lf\| \psi \ri\|_{L^2(\R^d)}^2 \int_{|\kv| \geq \varrho} \diff \kv \:\frac{1}{|\kv|^{d+1}}. 
  		\end{equation}
 		Now since again the integrals of both $ 1 $ and $ \lf\| z  \ri\|_{L^2}^2$ against $\mathrm{d}\mu(z)$ are finite, it follows that
 		\bdm
  			2 \Re \sum_{j=1}^N  \int_{L^2(\R^d)}^{} \diff \mu(z) \: W^{>}_{z}(\xv_j)
		\edm
		{is Kato-infinitesimally small w.r.t. $ \sum_{j=1}^N - \Delta_j $.}
	\end{proof}
	
	\begin{remark}[Decay of $ V_\mu $]
	  	\label{rem:V decay}
	  	\mbox{}	\\
 		We point out that, if in addition to \eqref{eq:state}, $  \Psi_\varepsilon \in \dom (\sqrt{\mathrm{d}\Gamma(|\kv|^2)}) $, with uniform bound w.r.t. $\varepsilon$, then $ V_{\mu}$ is in fact continuous and vanishing at infinity. Indeed one can show that $ V_{\mu} $ is the Fourier (anti-)transform of an $ L^1 $ function: instead of using the trick \eqref{eq:polaron trick}, it suffices to apply Cauchy inequality twice, obtaining
 		\bmln{
 			\int_{L^2(\R^d)} \diff \mu(z) \bigg| \int_{|\kv| \geq \varrho} \diff \kv \: \frac{1}{|\kv|^{\frac{d-1}{2}}}  z(\kv)	+ \mbox{c.c.} \bigg| \\
 			\leq  \bigg[ \int_{L^2(\R^d)} \diff \mu(z) \bigg]^{1/2} \bigg[\int_{L^2(\R^d)} \diff \mu(z) \bigg| \int_{|\kv| \geq \varrho} \diff \kv \: \frac{1}{|\kv|^{\frac{d-1}{2}}}  z(\kv)	+ \mbox{c.c.} \bigg|^2 \bigg]^{1/2} \\
 			\leq 4\bigg[\int_{|\kv| \geq \varrho} \diff \kv \: \frac{1}{|\kv|^{ d+1 }} \bigg]^{1/2} \bigg[ \int_{L^2(\R^d)} \diff \mu(z) \: \lf\| |\kv| z \ri\|_{L^2}^2 \bigg]^{1/2} \leq C.
		}
		Hence the Fourier transform of that part of the potential belongs to $ L^1 $, but an identical property holds true for $ W^<_z $: exploiting again Cauchy inequality instead of \eqref{eq:bounded part} as above, one obtains immediately the result.
	\end{remark}

	The completion of the proof of Theorem \ref{teo:conv3} only requires to prove the convergence of $ \HH_{\eps} $ to $ \HHe $ in norm resolvent sense. Before attacking the proof we state however another useful technical result that will be used later.
	
	\begin{lemma}
		\label{lemma:gradient bound}
		\mbox{}	\\
		Let $ \zeta \in \R $ such that $ - \zeta \in \rho(\HH_{\eps}) \cap \rho(\HHe) $ belongs to resolvent sets of $ \HH_{\eps} $ and $ \HHe $ uniformly in $ \eps $. Then
		\beqn	
			\label{eq:gradient bound}
			\lf\| i \partial_{{j}} \lf(\HH_{\eps} + \zeta \ri)^{-1} \ri\| \leq C,		\qquad	
			\lf\| i \partial_{{j}} \lf(\HHe + \zeta \ri)^{-1} \ri\|  \leq  C,						
		\eeqn
		for any $ {j} = 1, \ldots, dN $.
	\end{lemma}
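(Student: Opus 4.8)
The plan is to deduce the two bounds from a single \emph{uniform} lower bound on $\HH_\eps$ and $\HHe$. By Lemmata~\ref{lemma:sa hheps} and~\ref{lemma:sa hhe} (and the KLMN estimate \eqref{eq:klmn bound} behind them), the effective potentials $\sum_j V_{\eps,\Psi_\eps}(\xv_j)$ and $\sum_j V_\mu(\xv_j)$, together with $U_{\ll}$, are form-bounded relative to $-\Delta + U_+$ with relative bound strictly below one, \emph{and with constants independent of $\eps$}. Hence there are $a\in(0,1)$ and $b>0$, both independent of $\eps$, such that, in the sense of quadratic forms on $\dom(\sqrt{-\Delta+U_+})$,
\beq
	\HH_\eps + b \geq (1-a)\lf( -\Delta + U_+ \ri),	\qquad		\HHe + b \geq (1-a)\lf( -\Delta + U_+ \ri).
\eeq
Setting $\zeta_0 := b+1$ and using $U_+ \geq 0$, this gives $\HH_\eps + \zeta_0 \geq (1-a)(-\Delta+1) \geq 1 > 0$ and likewise for $\HHe$.

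First I would prove the estimate for the distinguished value $\zeta_0$. For $\chi \in L^2(\R^{dN})$ put $\phi := (\HH_\eps + \zeta_0)^{-1/2}\chi$; since $\phi$ lies in the form domain $\dom(\sqrt{-\Delta+U_+}) \subset H^1(\R^{dN})$, the inequality above yields
\beq
	\lf\| (-\Delta+1)^{1/2}\phi \ri\|_{L^2}^2 \leq \tx\frac{1}{1-a} \meanlr{\phi}{\HH_\eps + \zeta_0}{\phi}_{L^2} = \tx\frac{1}{1-a}\lf\| \chi \ri\|_{L^2}^2,
\eeq
so $(-\Delta+1)^{1/2}(\HH_\eps+\zeta_0)^{-1/2}$ is bounded, uniformly in $\eps$. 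Factoring
\beq
	i\partial_j \lf(\HH_\eps + \zeta_0\ri)^{-1} = \lf[ i\partial_j (-\Delta+1)^{-1/2} \ri]\lf[ (-\Delta+1)^{1/2}(\HH_\eps+\zeta_0)^{-1/2} \ri]\lf[ (\HH_\eps+\zeta_0)^{-1/2}\ri],
\eeq
and noting that $i\partial_j(-\Delta+1)^{-1/2}$ is a Fourier multiplier of norm $\leq 1$ while $\lf\| (\HH_\eps+\zeta_0)^{-1/2}\ri\| \leq 1$, the claim follows with $C = (1-a)^{-1/2}$. To pass to a general $\zeta$ with $\dist(-\zeta,\sigma(\HH_\eps)) \geq \delta > 0$ uniformly, I would use the second resolvent identity $(\HH_\eps+\zeta)^{-1} = (\HH_\eps+\zeta_0)^{-1}\lf( 1 + (\zeta_0-\zeta)(\HH_\eps+\zeta)^{-1}\ri)$, whence
\beq
	i\partial_j(\HH_\eps+\zeta)^{-1} = \lf[ i\partial_j(\HH_\eps+\zeta_0)^{-1}\ri]\lf( 1 + (\zeta_0-\zeta)(\HH_\eps+\zeta)^{-1}\ri),
\eeq
which is bounded by $(1-a)^{-1/2}\lf( 1 + |\zeta_0-\zeta|\,\delta^{-1}\ri)$. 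The argument for $\HHe$ is word for word the same.

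The one point that deserves attention — and the reason the statement is formulated for form-bounded rather than operator-bounded potentials — is that one cannot work on $\dom(-\Delta)$ directly: the whole computation must be carried out at the level of form domains and half-powers of $\HH_\eps + \zeta_0$, which is exactly what the lower bounds in the first paragraph make legitimate. The genuinely substantial input, namely the \emph{uniformity in $\eps$} of the relative form bound, has however already been established in Lemma~\ref{lemma:sa hheps} (and its appendix estimate \eqref{eq:klmn bound}), so no further work on that front is needed and the rest is bookkeeping.
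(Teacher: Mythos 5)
Your proof is correct and rests on exactly the same key input as the paper's: the uniform (in $\eps$) relative form bound from the KLMN estimate, which gives $-\Delta \leq C(\HH_\eps+1)$ and hence allows one to control $i\partial_j$ against the resolvent. The only difference is bookkeeping — you factor through $(-\Delta+1)^{1/2}(\HH_\eps+\zeta_0)^{-1/2}$ at a distinguished $\zeta_0$ and then pass to general $\zeta$ by the resolvent identity, whereas the paper substitutes the operator inequality directly inside $\sup_\psi\meanlr{\psi}{(\HH_\eps+\zeta)^{-1}(-\Delta)(\HH_\eps+\zeta)^{-1}}{\psi}$; your version is arguably a little more careful about staying within form domains, but it is the same argument.
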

	
	\begin{proof}
		Let us prove the result for $ \HH_{\eps} $. The proof for $ \HHe $ is identical. We write 
		\bml{
			\label{eq:polaron proof 0}
			\lf\| i \partial_{{j}} \lf(\HH_{\eps} + \zeta \ri)^{-1} \ri\|^2 = \sup_{\lf\| \psi \ri\|_{2} = 1} \meanlr{\psi}{ \lf(\HH_{\eps} + \zeta \ri)^{-1} \lf( - \partial_{{j}}^2 \ri)  \lf(\HH_{\eps} + \zeta \ri)^{-1}}{\psi} \\
			\leq \sup_{\lf\| \psi \ri\|_{2} = 1} \meanlr{\psi}{ \lf(\HH_{\eps} + \zeta \ri)^{-1} \lf( -\Delta \ri)  \lf(\HH_{\eps} + \zeta \ri)^{-1}}{\psi}.
		}
		On the other relative boundedness of the quadratic form of $ V_{\eps,\Psi_{\eps}} $, positivity of $ U_+ $ and Kato smallness of $ U_{\ll} $ implies that $ - \Delta \leq - \Delta + U_+ $ and
		\bdm
			- \Delta + U_+ \leq \HH_{\eps} + a \lf( - \Delta + U_+ \ri) + b 
		\edm
		for any $ a < 1 $ and $ b $ finite, which implies that
		\bdm
			- \Delta \leq - \Delta + U_+ \leq C \lf( H_{\eps} + 1 \ri).
		\edm
		By replacing such an operator bound into \eqref{eq:polaron proof 0}, we thus get

		\bmln{
			\lf\| i \partial_{{j}} \lf(\HH_{\eps} + \zeta \ri)^{-1} \ri\|^2 \leq \sup_{\lf\| \psi \ri\|_{2} = 1} \meanlr{\psi}{ \lf(\HH_{\eps} + \zeta \ri)^{-1} C \lf( H_{\eps} + 1 \ri)  \lf(\HH_{\eps} + \zeta \ri)^{-1}}{\psi} \\
			\leq C \bigg[ \sup_{\lf\| \psi \ri\|_{2} = 1} \meanlr{\psi}{ \lf(\HH_{\eps} + \zeta\ri)^{-1}}{\psi} + (1 - \zeta) \lf\|  \lf(\HH_{\eps} + \zeta \ri)^{-1} \ri\|^2 \bigg] \leq C.
		}
	\end{proof}
	
	\begin{proof}[Proof of Theorem \ref{teo:conv3}]
		Thanks to Lemma \ref{lemma:sa hheps} and \ref{lemma:sa hhe}, it suffices to prove that $ \HH_{\eps} \to \HHe $ in norm resolvent sense. We first decompose both potentials as follows
		\beq
			V_{\eps,\Psi_{\eps}} = V_{\eps}^< + V_{\eps}^{>},	\qquad		V_{\mu} = V_{\mu}^< + V_{\mu}^>
		\eeq
		in order to distinguish low and high frequencies, as in the proof of Lemma \ref{lemma:sa hhe}. Indeed we set
		\beq
			V_{\mu}^{\#}(\xv) : = 2 \Re \int_{L^2(\R^d)} \diff \mu(z) \: W_z^{\#}(\xv),
		\eeq
		where the operators $ W_z^{\#} $, $ \# $ being either $ < $ or $ > $, are defined in \eqref{eq:Wz} and $ \varrho > 0  $ is a positive parameter. For $ V_{\eps,\Psi_{\eps}} $ we perform a similar decomposition at the level of the full quadratic form, {\it i.e.},
		\bml{
			V_{\eps,\Psi_{\eps}}(\xv) = \meanlr{\Psi_{\eps}}{A(g(\xv))}{\Psi_{\eps}}_{\fock} =   \int_{|\kv| \leq \varrho} \diff \kv \: e^{-i\kv \cdot \xv}  |\kv|^{\frac{1-d}{2}} \meanlr{\Psi_{\eps}}{a^{\dagger}_{\kv}}{\Psi_{\eps}}_{\fock} + \mbox{c.c.} \\
			+ \int_{|\kv| \geq \varrho} \diff \kv \: e^{-i\kv \cdot \xv}  |\kv|^{\frac{1-d}{2}} \meanlr{\Psi_{\eps}}{a^{\dagger}_{\kv} }{\Psi_{\eps}}_{\fock} + \mbox{c.c.} = : V_{\eps}^<(\xv) + V_{\eps}^{>}(\xv),
		}
		for the same $ \varrho $ as above. Now let $ \psi \in L^2(\R^d) $, and $ \zeta > 0  $ such that $ - \zeta \in \rho(\HH_{\eps}) \cap \rho(\HHe) $ uniformly in $ \eps $, {\it i.e.}, $ \dist(-\zeta, \sigma(\HH_{\eps})) > C > 0 $. Then the second resolvent identity yields
		\bml{
			\label{eq:polaron proof 1}
			\sup_{\lf\| \psi \ri\|_2 = 1} \lf\| \lf[ \lf(\HH_{\eps} +  \zeta \ri)^{-1} - \lf(\HHe + \zeta \ri)^{-1} \ri] \psi \ri\|_{L^2} \leq \sup_{\lf\| \psi \ri\|_2 = 1} \lf[
			\lf\| \lf(\HH_{\eps} +  \zeta \ri)^{-1} T_{\eps}^<  \lf(\HHe + \zeta \ri)^{-1}  \psi \ri\|_{L^2} \ri. \\
			\lf. +\lf\|   \lf(\HH_{\eps} + \zeta \ri)^{-1} T_{\eps}^> \lf(\HHe + \zeta \ri)^{-1} \psi \ri\|_{L^2} \ri],
		}
		where we have set
		\bdm
			T_{\eps}^{\#}(\xv_1, \ldots, \xv_N) : = \sum_{j = 1}^N \lf( V_{\mu}^{\#}(\xv_j) - V_{\eps}^{\#}(\xv_j) \ri)
		\edm
		for short. At this stage there is a technical subtlety not to be forgotten: the explicit expressions of the operators $ \HH_{\eps} $ and $ \HHe $ are a priori purely formal and make sense only when represented as quadratic forms. Therefore it is not obvious that the second resolvent identity could be used above and would lead to the the r.h.s. of \eqref{eq:polaron proof 1}. There is however a simple way out: by noting that
		\bdm	
			\lf\| \lf[ \lf(\HH_{\eps} +  \zeta \ri)^{-1} - \lf(\HHe + \zeta \ri)^{-1} \ri] \psi \ri\|_{L^2} = \sup_{\lf\| \phi \ri\|_2 \leq 1} {\lf| \braketr{\phi}{ \lf[ \lf(\HH_{\eps} +  \zeta \ri)^{-1} - \lf(\HHe + \zeta \ri)^{-1} \ri] \psi}_{L^2} \ri|},
		\edm
		one can express the vector norm in terms of a sesquilinear form, which in turn can be reduced to quadratic forms via polarization. At that level then one can use the operator expressions, being sure that the second resolvent identity makes sense and yields the r.h.s. of \eqref{eq:polaron proof 1}, because the form domains of $ \HH_{\eps} $ and $ \HHe $ are the same. 
		
		We now claim that both potentials $ V_{\eps}^<(\xv) $ and $ V_\mu^<(\xv) $ belongs to $ L^{\infty}(\R^d) $ uniformly in $ \eps $ and therefore
		\bdm
			\lf\| T_{\eps}^<(\xv_1, \ldots, \xv_N) \ri\|_{L^{\infty}(\R^{dN})} \leq C < +\infty.
		\edm
		This was already proven for $ V_{\mu} $ in \eqref{eq:19}, while, for $ V_{\eps}^{<} $, we act exactly as in \eqref{eq:bounded part} to estimate (recall that by assumption $ \Psi_{\eps} $ is normalized and \eqref{eq:state} holds)
		\bdm
			\lf| V_{\eps}^<(\xv) \ri| \leq \int_{|\kv| \leq \varrho} \diff \kv \:  \frac{1}{|\kv|^{d-1} } + \meanlr{\Psi_{\eps}}{\diff \Gamma(1)}{\Psi_{\eps}}_{\fock} \leq C(\varrho) < +\infty,
		\edm
		for any finite $ \varrho $. Therefore we can prove that the term involving $ T_{\eps}^< $ in \eqref{eq:polaron proof 1} tends to $ 0 $ as $ \eps \to 0 $ directly by a dominated convergence argument, using the pointwise convergence to $ 0 $ of $ T_{\eps}^<(\xv_1, \ldots, \xv_N) $, which is discussed in Corollary~\ref{cor:pointwise}:
		\bml{
			\sup_{\lf\| \psi \ri\|_2 = 1} \lf\| \lf(\HH_{\eps} +  \zeta \ri)^{-1} T_{\eps}^<  \lf(\HHe + \zeta \ri)^{-1}  \psi \ri\|_{L^2} \leq C \sup_{\lf\| \psi \ri\|_2 = 1} \lf\| T_{\eps}^<  \phi \ri\|_{L^2}\\
			 = \sup_{\lf\| \psi \ri\|_2 = 1} \sup_{\lf\| \xi \ri\|_2 \leq 1} \braketr{\xi}{ T_{\eps}^<  \phi}_{L^2} \xrightarrow[\eps \to 0]{} 0,
		}
		where we have set $ \phi : = \lf(\HHe + \zeta \ri)^{-1}  \psi \in L^2(\R^d) $.
		
		To complete the proof it remains then to consider the second term on the r.h.s. of \eqref{eq:polaron proof 1}. The idea is still to use the dominated convergence theorem but one has to exploit the trick \eqref{eq:polaron trick}: we rewrite
		\bml{
			\label{eq:polaron proof 2}
			\sup_{\lf\| \psi \ri\|_2 = 1} \lf\|   \lf(\HH_{\eps} +  \zeta \ri)^{-1} T_{\eps}^> \lf(\HHe + \zeta \ri)^{-1} \psi \ri\|_{L^2} \\
			= \sup_{\lf\| \psi \ri\|_2 = 1} \sum_{k = 1}^{dN} \lf\|   \lf(\HH_{\eps} +  \zeta \ri)^{-1} \lf[ \lf(\mathbf{S}_{\eps}(\xv_1, \ldots, \xv_N) \ri)_k,  i \partial_k \ri] \lf(\HHe + \zeta \ri)^{-1} \psi \ri\|_{L^2},
		}
		where $ \mathbf{S}_{\eps}(\xv_1, \ldots, \xv_N) $ is a vectorial multiplication operator given by
		\bml{
			\mathbf{S}_{\eps}(\xv_1, \ldots, \xv_N) : = \sum_{j =1}^N  \bigg[ \int_{|\kv| \geq \varrho} \diff \kv \: e^{-i\kv \cdot \xv_j}  \frac{\kv}{|\kv|^{\frac{d+3}{2}}} \meanlr{\Psi_{\eps}}{a^{\dagger}\lf(\kv\ri) }{\Psi_{\eps}}_{\fock} + \mbox{c.c.} \\
			- \int_{L^2(\R^d)} \diff \mu(z) \int_{|\kv| \geq \varrho} \diff \kv \: e^{-i\kv \cdot \xv}  \frac{\kv}{|\kv|^{\frac{d+3}{2}}}  z^{*}(\kv) - \mbox{c.c.} \bigg].
		}
		By Lemma \ref{lemma:gradient bound} and \eqref{eq:polaron proof 2},
		\bmln{
			\lf\|   \lf(\HH_{\eps} +  \zeta \ri)^{-1} T_{\eps}^> \lf(\HHe + \zeta \ri)^{-1} \psi \ri\|_{L^2} \leq C \sum_{k = 1}^{dN} \lf[ \lf\|   \lf(\mathbf{S}_{\eps}(\xv_1, \ldots, \xv_N) \ri)_k  \chi \ri\|_{L^2} \ri.	\\
			\lf. + \lf\|  \lf( \mathbf{S}_{\eps}(\xv_1, \ldots, \xv_N) \ri)_k \lf(\HHe + \zeta \ri)^{-1} \psi \ri\|_{L^2}\ri],
		}
		where $ \chi : = i \partial_k \lf(\HHe + \zeta \ri)^{-1} \psi \in L^2(\R^{dN}) $ with norm independent of $ \eps $ and we have used the boundedness of the resolvent $ \lf(\HH_{\eps} +  \zeta \ri)^{-1} $. Now for any $ k = 1, \ldots, dN $
		\beq
			 \lf( \mathbf{S}_{\eps}(\xv_1, \ldots, \xv_N) \ri)_k \xrightarrow[\eps \to 0]{\mathrm{pointwise}} 0,
		\eeq
		by a direct application of Corollary \ref{cor:pointwise}, since now $ \one_{[\varrho,\infty)}(|\kv|) \: |\kv|^{-\frac{d+1}{2}} \in L^2(\R^d) $, for any $ \varrho >  0 $. Moreover following the very same argument used above, one can show that
		\beq
			 \lf\| \lf( \mathbf{S}_{\eps}(\xv_1, \ldots, \xv_N) \ri)_k \ri\|_{L^{\infty}(\R^{dN})} \leq C < +\infty.
		\eeq
		Indeed, by \eqref{eq:state} and integrability of $ \lf\| z \ri\|_2^2 $ (see Proposition \ref{pro:classical limit})
		\bmln{
			\lf| \mathbf{S}_{\eps}(\xv_1, \ldots, \xv_N) \ri| \leq N  \bigg[ 2\int_{|\kv| > \varrho} \diff \kv \: \frac{1}{|\kv|^{d+1}} + \meanlr{\Psi_{\eps}}{\diff \Gamma(1)}{\Psi_{\eps}}_{\fock} 	+ \int_{L^2(\R^d)} \diff \mu(z) \: \lf\| z \ri\|_{L^2}^2  \bigg]	\\
			\leq C(\varrho) < +\infty,
		}
		for any $ \varrho > 0 $. In conclusion, by writing
		\bmln{
			\sup_{\lf\| \psi \ri\|_2 = 1} \lf\|   \lf(\mathbf{S}_{\eps}(\xv_1, \ldots, \xv_N) \ri)_k  \chi \ri\|_{L^2} + \lf\|  \lf( \mathbf{S}_{\eps}(\xv_1, \ldots, \xv_N) \ri)_k \lf(\HHe + \zeta \ri)^{-1} \psi \ri\|_{L^2} \\
			= \sup_{\lf\| \psi \ri\|_2 = 1} \sup_{\lf\| \xi \ri\|_2 \leq 1} \lf[ \braketr{\xi}{  \lf(\mathbf{S}_{\eps}(\xv_1, \ldots, \xv_N) \ri)_k  \chi}_{L^2} + \braketr{\xi}{\lf( \mathbf{S}_{\eps}(\xv_1, \ldots, \xv_N) \ri)_k \lf(\HHe + \zeta \ri)^{-1} \psi}_{L^2} \ri],
		}
		and using again dominated convergence, we get the result.
	\end{proof}

\subsection{Ground state energy: massive Nelson model}
\label{sec:gs nelson}

The setting in this Sect. is the one described in Sect. \ref{sec:ground state}. We start by considering the massive Nelson model and then comment on the adaptation required for the polaron.

We recall that the effective Hamiltonian for the particles in the quasi-classical limit is by Theorem \ref{teo:conv2}
\bmln{
  	\HHe(\mu) = \HH_0+2 \Re\sum_{j=1}^N\disp\int_{L^2(\mathbb{R}^d)}^{}\mathrm{d}\mu(z) \: \int_{\R^d} \diff \kv \: e^{i \kv \cdot \xv_j} z(\kv)  \lambda^{*}(\kv) \\
  	= \HH_0+2 (2\pi)^{d/2} \Re\sum_{j=1}^N\disp\int_{L^2(\mathbb{R}^d)}^{}\mathrm{d}\mu(z) \: \lf(\widecheck{ z\lambda^*}\ri) (\xv_j) 
}
where we have made explicit the dependence of $ \HHe $ on the classical measure $ \mu \in \M(L^2(\R^d)) $ provided by Proposition \ref{pro:classical limit}. The field energy in the limit $ \eps \to 0 $ becomes, under the assumptions \eqref{eq:state} (see again Proposition \ref{pro:classical limit}),
\bdm
  	c(\mu) = \int_{L^2(\R^d)}^{}\diff \mu(z) \: \lf\| \sqrt{\omega} z  \ri\|_{L^2}^2.
\edm
For further convenience we will denote the full energy of the system in the classical limit by
\beq
	\label{eq:kke}
  	\KKe(\mu) := \HHe(\mu)+c(\mu).
\eeq
Finally, we recall the minimization domain \eqref{eq:domain measure} for the measure $ \mu $:
\bdm
	\M_{\omega} : =\lf\{\mu\in \M \lf(L^2 (\mathbb{R}^d  )\ri) \: \Big| \: \mu\lf(L^2_{\omega} (\mathbb{R}^d )\ri) = 1, \: \mu\big|_{L^2_{\omega} (\mathbb{R}^d )}\text{ is Borel}, \: c(\mu) < \infty \ri\},
\edm
where
\bdm
	L^2_{\omega}(\R^d) : = \bigg\{ f \in L^2(\R^d) \: \bigg| \:  \int_{\R^d} \diff \kv \: \omega(\kv) \lf|f(\kv) \ri|^2 < \infty  \bigg\},
\edm
and $ \omega \geq 1 $. An important remark about measures in $ \M_{\omega}$ is that, although each $ \mu \in \M_{\omega} $ is a probability measure on $ L^2(\R^d) $, its support is totally concentrated in $ L^2_{\omega} (\mathbb{R}^d )  $, {\it i.e.}, the measure $ \mu $ vanishes outside $ L^2_{\omega} (\mathbb{R}^d ) $ and all the integrals involving $ \mu $ can be equivalently computed over $ L^2(\R^d) $ or $ L^2_{\omega} (\mathbb{R}^d )  $.

The first key result we prove is the boundedness from below of the infimum of the spectral bottom of $ \KKe(\mu) $ over $ \mu \in \M_{\omega} $:

	\begin{proposition}[Boundedness from below of $  \underline{\sigma}(\KKe(\mu)) $]
 		\label{pro:bd kke}
 		\mbox{}	\\
 		Let the assumptions \eqref{eq:V}, \eqref{eq:g2} and \eqref{eq:omega2} be satisfied, then
  		\beq
  			\label{eq:bd kke}
  			\inf_{\mu \in \M_{\omega}} \underline{\sigma}\lf( \KKe(\mu) \ri) \geq \underline{\sigma}(\HH_0) - N^2 \lf\| {\omega^{-1/2}} \lambda \ri\|_{L^2}^2 > -\infty.
		\eeq
	\end{proposition}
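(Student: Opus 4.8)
The plan is to prove the stated inequality pointwise in $\mu$, i.e.\ to show that $\underline{\sigma}(\KKe(\mu))\geq \underline{\sigma}(\HH_0)-N^2\|\omega^{-1/2}\lambda\|_{L^2}^2$ for \emph{every} $\mu\in\M_\omega$; since the right-hand side does not depend on $\mu$, the bound on the infimum follows at once. By Theorem~\ref{teo:conv2} the potential $V_\mu$ is bounded, so $\KKe(\mu)=\HHe(\mu)+c(\mu)$ is self-adjoint on $\dom(\HH_0)$ and it suffices to bound $\mean{\psi}{\KKe(\mu)}{\psi}$ from below for normalized $\psi\in C_0^{\infty}(\R^{dN})$. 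The mechanism is the one anticipated in Remark~\ref{rem:boundedness}: the positive field energy $c(\mu)$ is exactly what is needed to absorb the negative part of the effective potential, which is \emph{not} bounded uniformly in $\mu$.

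First I would fix a single field configuration $z\in L^2_\omega(\R^d)$. Writing \eqref{eq:V2} as $V_z(\xv):=2\Re\braket{\lambda e^{-i\kv\cdot\xv}}{z}_{L^2}$, one has for every particle configuration $(\xv_1,\dots,\xv_N)$
\bdm
	\sum_{j=1}^N V_z(\xv_j)=2\Re\Big\langle \omega^{-1/2}\lambda{\textstyle\sum_{j=1}^N}e^{-i\kv\cdot\xv_j}\,\Big|\,\omega^{1/2}z\Big\rangle_{L^2}\geq-\Big\|\omega^{-1/2}\lambda{\textstyle\sum_{j=1}^N}e^{-i\kv\cdot\xv_j}\Big\|_{L^2}^2-\big\|\omega^{1/2}z\big\|_{L^2}^2,
\edm
using $2\Re\braket{a}{b}\geq-\|a\|^2-\|b\|^2$ and the fact that $\omega^{-1/2}\lambda,\,\omega^{1/2}z\in L^2(\R^d)$ by \eqref{eq:omega2}, \eqref{eq:g2} and $z\in L^2_\omega$. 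Since $\big|{\textstyle\sum_{j=1}^N}e^{-i\kv\cdot\xv_j}\big|\leq N$ pointwise in $\kv$, the first term is bounded below by the constant $-N^2\|\omega^{-1/2}\lambda\|_{L^2}^2$; hence, as multiplication operators (a fortiori as quadratic forms),
\bdm
	\sum_{j=1}^N V_z(\xv_j)\ \geq\ -N^2\big\|\omega^{-1/2}\lambda\big\|_{L^2}^2-\big\|\omega^{1/2}z\big\|_{L^2}^2.
\edm

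Next I would integrate this bound against $\diff\mu(z)$. Since $\mu$ is a probability measure with $c(\mu)<\infty$ and $\omega\geq c>0$, one has $\int\|z\|_{L^2}^2\,\diff\mu\leq c^{-1}c(\mu)<\infty$, so the integrand $\mean{\psi}{\sum_j V_z(\xv_j)}{\psi}$ is $\mu$-integrable and Fubini applies; recalling \eqref{eq:cmu} this gives
\bdm
	\sum_{j=1}^N V_\mu(\xv_j)\ =\ \int_{L^2(\R^d)}\diff\mu(z)\,\sum_{j=1}^N V_z(\xv_j)\ \geq\ -N^2\big\|\omega^{-1/2}\lambda\big\|_{L^2}^2-c(\mu)
\edm
as a quadratic form inequality on $\dom(\sqrt{\HH_0})$. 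Adding $\HH_0+c(\mu)$, the field energy cancels and yields $\KKe(\mu)\geq\HH_0-N^2\|\omega^{-1/2}\lambda\|_{L^2}^2$, whence $\underline{\sigma}(\KKe(\mu))\geq\underline{\sigma}(\HH_0)-N^2\|\omega^{-1/2}\lambda\|_{L^2}^2$ for all $\mu\in\M_\omega$, and taking the infimum preserves it. The right-hand side is finite because $\underline{\sigma}(\HH_0)>-\infty$ by \eqref{eq:V} and $\|\omega^{-1/2}\lambda\|_{L^2}^2\leq c^{-1}\|\lambda\|_{L^2}^2<\infty$ by \eqref{eq:omega2} and \eqref{eq:g2}.

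The only point requiring care — and the reason the statement is not entirely trivial — is that $\|V_\mu\|_{\infty}$ grows with $\mu$ (roughly like $\int\|z\|_{L^2}\,\diff\mu$), so boundedness of the effective potential alone does not suffice and the cancellation with $c(\mu)$ is essential; correspondingly, the weights $\omega^{\pm1/2}$ in the Cauchy--Schwarz splitting must be chosen precisely so that the unabsorbed term reproduces exactly $c(\mu)$ after integration against $\mu$. Everything else is routine: the Fubini interchange is justified by the integrability bound above, and all the form inequalities are read on the common form domain $\dom(\sqrt{\HH_0})$.
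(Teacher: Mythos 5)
Your proof is correct and rests on the same mechanism as the paper's: split $z\lambda^{*}$ as $(\omega^{1/2}z)\cdot(\omega^{-1/2}\lambda^{*})$, apply a weighted Cauchy--Schwarz/Young inequality so that the $z$-dependent piece is exactly $\lVert\sqrt{\omega}\,z\rVert_{L^2}^2$, and let $c(\mu)$ absorb it after integrating against $\diff\mu$. The only (inessential) organizational difference is that the paper applies Young's inequality per particle with a parameter $\delta$ and then optimizes $\delta=1/N$, while you group the $N$ phases inside the inner product, apply $2\Re\langle a\,|\,b\rangle\geq-\lVert a\rVert^2-\lVert b\rVert^2$ once, and recover the $N^2$ from the pointwise bound $\bigl|\sum_j e^{-i\kv\cdot\xv_j}\bigr|\leq N$ — same constant, no optimization step needed.
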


	\begin{proof}
		We first prove a simple but useful inequality: for any $ \mu \in \M_{\omega} $ and $ \delta > 0 $,
		\beq
			\label{eq:int bound}
			\bigg| \disp\int_{L^2(\mathbb{R}^d)}^{}\mathrm{d}\mu(z) \: \int_{\R^d} \diff \kv \: e^{i \kv \cdot \xv_j} z(\kv)  \lambda^{*}(\kv) \bigg| \leq \delta \disp\int_{L^2(\mathbb{R}^d)}^{}\mathrm{d}\mu(z) \: \lf\| \sqrt{\omega} z \ri\|_{L^2}^2 + \frac{1}{\delta} \lf\| \omega^{-1/2} \lambda \ri\|_{L^2}^2.
		\eeq
		Note that the r.h.s. is finite thanks to the assumptions on $ \lambda $ \eqref{eq:g2} and the hypothesis on $ \omega $. Here in particular it is important that we are considering the massive Nelson model to have $\omega\geq 1$. Using the above inequality we get
  		\bmln{
    		\underline{\sigma}\lf( \KKe(\mu) \ri) \geq \underline{\sigma}(\HH_0)+c(\mu)+2 (2\pi)^{d/2}\inf_{(\xv_1, \ldots, \xv_N) \in \mathbb{R}^{dN}} \Re\sum_{j=1}^N\disp\int_{L^2(\mathbb{R}^d)}^{}\mathrm{d}\mu(z) \: \lf(\widecheck{ z\lambda^*}\ri) (\xv_j)	\\ 	
    		\geq \underline{\sigma}(\HH_0)+c(\mu) - N \delta \disp\int_{L^2(\mathbb{R}^d)}^{}\mathrm{d}\mu(z) \: \lf\| \sqrt{\omega} z \ri\|_{L^2}^2 - \frac{N}{\delta} \lf\| \omega^{-1/2} \lambda \ri\|_{L^2}^2	\\
    		 \geq \underline{\sigma}(\HH_0) - N^2 \lf\| \omega^{-1/2} \lambda \ri\|_{L^2}^2 > -\infty,
  		}
  		where we have taken $ \delta = \frac{1}{N} $ in the last step. Since the r.h.s. is independent of $ \mu $, we conclude that \eqref{eq:bd kke} holds true.
	\end{proof}

An important consequence of Proposition \ref{pro:bd kke} is that the infimum can be taken over measures with finite
support, {\it i.e.}, finite linear combinations of Dirac masses. We set
\bml{
	\label{eq:mfin}
	\Mfin : = \bigg\{ \mu \in \M_{\omega} \Big| \: \exists I \subset \N \mbox{ finite}, \lf\{ \alpha_i \ri\}_{i\in I} \in \R^+, \sum_{i \in I} \alpha_i = 1, \lf\{ z_i \ri\}_{i \in I} \in L^2_{\omega}(\R^d),	\\
	\mbox{s.t. } \mu = \sum_{i \in I} \alpha_i \delta(z - z_i) \bigg\}.
}

	\begin{lemma}
  		\label{lemma:finite measures}
  		\mbox{}	\\
  		Under the assumptions of Proposition \ref{pro:bd kke},
  		\beq
  			\label{eq:finite measures}
  			\inf_{\mu \in \M_{\omega}} \underline{\sigma}\lf( \KKe(\mu) \ri) = \inf_{\mu \in \Mfin} \underline{\sigma}\lf( \KKe(\mu) \ri).
		\eeq
	\end{lemma}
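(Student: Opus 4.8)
The plan is to exploit the fact that $\mu\mapsto\KKe(\mu)$ is affine: by \eqref{eq:kke}, the expression for $\HHe(\mu)$ recalled above and \eqref{eq:cmu}, both the effective potential $\sum_{j=1}^N V_\mu(\xv_j)$ and the field energy $c(\mu)$ are integrals against $\mathrm{d}\mu(z)$ of the corresponding quantities evaluated at the Dirac mass $\delta_z$. Since $\Mfin\subseteq\M_\omega$, the inequality $\inf_{\mu\in\M_\omega}\underline{\sigma}(\KKe(\mu))\leq\inf_{\mu\in\Mfin}\underline{\sigma}(\KKe(\mu))$ is immediate, so only the reverse one has to be proven; in fact the argument below yields the slightly stronger statement that the infimum over $\M_\omega$ already coincides with the infimum over the pure Dirac masses $\delta_z$, $z\in L^2_\omega(\R^d)$.

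First I would fix $\mu\in\M_\omega$ and a normalized $\psi$ in the common form domain $\dom(\sqrt{\HH_0})$ of the operators $\KKe(\mu)$ (for the massive Nelson model the potentials $V_\mu$ are bounded, so this is legitimate), and establish the identity
\begin{equation*}
	\meanlr{\psi}{\KKe(\mu)}{\psi}_{L^2(\R^{dN})} = \int_{L^2_\omega(\R^d)}^{}\mathrm{d}\mu(z)\;\meanlr{\psi}{\KKe(\delta_z)}{\psi}_{L^2(\R^{dN})},
\end{equation*}
where $\KKe(\delta_z)=\HH_0+2(2\pi)^{d/2}\Re\sum_{j=1}^N(\widecheck{z\lambda^{*}})(\xv_j)+\lf\| \sqrt{\omega} z \ri\|_{L^2}^2$; this follows from the affine structure and from $\mu$ being a probability measure, the interchange of the $\mathrm{d}\mu$- and $\mathrm{d}\xv$-integrations being granted by Fubini's theorem. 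The required integrability is cheap: for $\mu\in\M_\omega$ one has $\omega\geq 1$, so $\lf\| z \ri\|_{L^2}^2\leq\lf\| \sqrt{\omega} z \ri\|_{L^2}^2$ is $\mu$-integrable because $c(\mu)<\infty$, whereas $\bigl|(\widecheck{z\lambda^{*}})(\xv)\bigr|\leq(2\pi)^{-d/2}\lf\| z\lambda^{*} \ri\|_{L^1}\leq(2\pi)^{-d/2}\lf\| \lambda \ri\|_{L^2}\lf\| z \ri\|_{L^2}$ by Cauchy--Schwarz and \eqref{eq:g2}. The same bound shows that $z\mapsto\meanlr{\psi}{\KKe(\delta_z)}{\psi}$ is a $\mu$-integrable Borel function, bounded from below uniformly in $z$ and $\psi$ thanks to Proposition \ref{pro:bd kke} (indeed $\delta_z\in\M_\omega$ for every $z\in L^2_\omega(\R^d)$).

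Then I would use that the average of a $\mu$-integrable function over a probability space is not smaller than its infimum: for the fixed $\psi$ this gives $\meanlr{\psi}{\KKe(\mu)}{\psi}\geq\inf_{z\in L^2_\omega(\R^d)}\meanlr{\psi}{\KKe(\delta_z)}{\psi}$, and taking the infimum over $\psi$ and exchanging the two infima I obtain
\begin{equation*}
	\underline{\sigma}\lf(\KKe(\mu)\ri)\;\geq\;\inf_{z\in L^2_\omega(\R^d)}\;\inf_{\lf\| \psi \ri\|_2=1}\meanlr{\psi}{\KKe(\delta_z)}{\psi}_{L^2(\R^{dN})}\;=\;\inf_{z\in L^2_\omega(\R^d)}\underline{\sigma}\lf(\KKe(\delta_z)\ri).
\end{equation*}
Since every Dirac mass $\delta_z$ with $z\in L^2_\omega(\R^d)$ belongs to $\Mfin$ (take $I=\{1\}$, $\alpha_1=1$, $z_1=z$), the right-hand side is $\geq\inf_{\nu\in\Mfin}\underline{\sigma}(\KKe(\nu))$; as $\mu\in\M_\omega$ was arbitrary, this gives $\inf_{\M_\omega}\underline{\sigma}(\KKe(\mu))\geq\inf_{\Mfin}\underline{\sigma}(\KKe(\mu))$ and hence \eqref{eq:finite measures}. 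I do not anticipate a genuine obstacle: the only points that need some care are the Fubini interchange and the uniform lower bound on $\meanlr{\psi}{\KKe(\delta_z)}{\psi}$, both handled by the elementary estimate on $\widecheck{z\lambda^{*}}$ above together with Proposition \ref{pro:bd kke}; everything else is the routine ``mean dominates infimum'' argument combined with the trivial inclusion $\Mfin\subseteq\M_\omega$.
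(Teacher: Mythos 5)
Your proof is correct, but it takes a genuinely different and more elementary route than the paper's. The paper proves the nontrivial inequality by topological/metric means: it endows $\M_\omega$ with the $2$-Wasserstein distance, invokes density of $\Mfin$ in $\M_2\lf(L^2(\R^d,\omega\,\diff\kv)\ri)$ (\cite[Theorem 6.18]{MR2459454}), and then establishes a continuity statement \eqref{eq:8} for $\mu\mapsto\underline{\sigma}(\KKe(\mu))$ along $\M_2$-convergent sequences, which in turn rests on an operator-norm estimate $\|\KKe(\mu)-\KKe(\mu_{n_k})\|\to 0$ together with Kato's spectral perturbation theorem \cite[Theorem 4.10]{Kabo}. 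You instead exploit the affine dependence of the quadratic form $\psi\mapsto\meanlr{\psi}{\KKe(\mu)}{\psi}$ on $\mu$ (both $V_\mu$ and $c(\mu)$ are $\mu$-barycenters of the corresponding Dirac-mass quantities, and $\HH_0$ is $\mu$-independent), combined with the elementary fact that a $\mu$-average of a function bounded below dominates its infimum; the Fubini interchange and the uniform lower bound on $\meanlr{\psi}{\KKe(\delta_z)}{\psi}$ are correctly justified by the Cauchy--Schwarz bound $|(\widecheck{z\lambda^*})(\xv)|\leq(2\pi)^{-d/2}\|\lambda\|_2\|z\|_2$, the inequality $\|z\|_2\leq\|\sqrt{\omega}z\|_2$ from \eqref{eq:omega2}, and Proposition \ref{pro:bd kke}. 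What your approach buys is simplicity and a slightly stronger conclusion (the infimum over $\M_\omega$ already equals the infimum over pure Dirac masses), at the cost of being tied to the explicit affine structure; what the paper's approach buys is robustness, and indeed the Wasserstein/continuity machinery is reused almost verbatim for the polaron case (Lemma \ref{lemma:8 polaron}), where the operators are only defined as quadratic forms and the estimates are more delicate, whereas your convexity argument would also go through there with form-domain care. One small polish: you should say a word on why $z\mapsto\meanlr{\psi}{\KKe(\delta_z)}{\psi}$ is $\mu$-measurable on $L^2_\omega(\R^d)$ — it is in fact continuous, being the sum of a constant, a bounded linear functional of $z$, and $\|\sqrt{\omega}z\|_2^2$ — which together with the Borel restriction built into the definition \eqref{eq:domain measure} makes the integral well-defined.
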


	\begin{proof}
  		Since for any $ z_0 \in  L^2_{\omega}(\mathbb{R}^d  )$, $ \delta(z - z_0) \in \M_{\omega} $, it immediately follows that
  		\beq
  			\label{eq:ub finite}
    			\inf_{\mu \in \M_{\omega}} \underline{\sigma}\lf( \KKe(\mu) \ri) \leq \inf_{\mu \in \Mfin} \underline{\sigma}\lf( \KKe(\mu) \ri).
  		\eeq
  
  		Now let $ \M_{\omega} $ be endowed with the $2$-Wasserstein distance. Since $ L^2_{\omega} $ is separable and complete, then $ \M_{\omega} \simeq \M_2(L^2(\R^d, \omega \: \diff \kv)) $, the space of measures on $ L^2(\R^d, \omega \: \diff \kv) $ with finite 2-moments, is separable and complete. In addition, $ \Mfin $ is dense in $ \M_2(L^2(\R^d, \omega \: \diff \kv)) $ (see, {\it e.g.}, \cite[Theorem 6.18]{MR2459454}). By the isomorphism $ \M_{\omega} \simeq \M_2(L^2(\R^d, \omega \: \diff \kv)) $, it follows that the atomic measures are dense in $ \Mfin $ w.r.t. the topology induced by $\M_2(L^2(\R^d, \omega \: \diff \kv))$.

  		Now, let us assume that for any $ \mu \in \M_{\omega} $ and any sequence $ \lf\{ \mu_n \ri\}_{n \in \N} $ converging to $ \mu $ in $ \M_{2} $, one has
  		\beq
  		\label{eq:8}
    			\exists \lf\{ \mu_{n_k} \ri\}_{k\in \mathbb{N}} \text{  s.t. } \lf| \underline{\sigma}\lf( \KKe(\mu) \ri) - \underline{\sigma}\lf( \KKe\lf(\mu_{n_k}\ri) \ri) \ri| \xrightarrow[k\to \infty]{} 0.
  		\eeq
  		By definition of infimum, there exists some $ \nu \in \M_{\omega} $, such that, for any $ \delta > 0 $,
 		\bdm
    			\underline{\sigma} \lf( \KKe(\nu) \ri) < \inf_{\mu \in \M_{\omega}} \underline{\sigma}\lf( \KKe(\mu) \ri) + \tx\frac{1}{2}\delta.
		\edm
  		Thanks to the density of $ \Mfin $ in $ \M_{\omega} $, there must exist a sequence $ \lf\{ \nu_n \ri\}_{n \in \N} \in \Mfin $, such that 
  		\bdm
  			\nu_n \xrightarrow[n \to \infty]{\M_2} \nu.
		\edm
		Then \eqref{eq:8} implies that there exists at least one subsequence $ \lf\{ \nu_{n_k} \ri\}_{k\in \mathbb{N}}$ and a $\bar{k}\in \mathbb{N}$ such that for all $k\geq \bar{k}$:
  		\bdm
    			\lf| \underline{\sigma}\lf( \KKe(\nu) \ri) - \underline{\sigma}\lf( \KKe\lf(\nu_{n_k}\ri) \ri) \ri|< \tx\frac{1}{2}\delta,
 		\edm
 		which implies	
  		\bdm
  			\inf_{\mu\in \M_{\omega}} \underline{\sigma} \lf( \KKe(\mu) \ri) > \underline{\sigma}\lf( \KKe(\nu)\ri) - \tx\frac{1}{2}\delta
    > \disp\underline{\sigma}\lf( \KKe(\nu_{n_k})\ri) - \delta > \disp\inf_{\mu \in \Mfin} \underline{\sigma}\lf( \KKe(\mu)\ri) - \delta.
  		\edm
  		Since $\delta>0$ is arbitrary, it follows that
  		\beq
  			\label{eq:lb finite}
    			\inf_{\mu \in \M_{\omega}} \underline{\sigma}\lf( \KKe(\mu) \ri) \geq \inf_{\mu \in \Mfin} \underline{\sigma}\lf( \KKe(\mu) \ri),
  		\eeq
  		which together with \eqref{eq:ub finite} yields the result.

  		The only thing that has yet to be proved is the statement \eqref{eq:8}. If $\mu_n \longrightarrow \mu$ in $ \M_2 $, then $ \lf\{ \mu_n \ri\}_{n\in \mathbb{N}}$ is Cauchy and therefore tight (and precompact by Prokhorov's Theorem) in $ \M(L^2_{\omega}) $ (see \cite[Lemma 6.14]{MR2459454}). Therefore there exists a subsequence $ \lf\{ \mu_{n_k} \ri\}_{k\in \mathbb{N}}$, such that $\mu_{n_k} \longrightarrow \mu$ as $ k \to \infty $ in both $ \M $ and $ \M_2 $ topologies. Then the following convergence holds:
 		\bmln{	
      		\lf\| \KKe(\mu) - \KKe(\mu_{n_k}) \ri\| \leq N \sup_{\xv \in \R^d} \lf| V_\mu(\xv) - V_{\mu_{n_k}}(\xv) \ri| + \lf| c(\mu) - c(\mu_{n_k}) \ri| \\
      		\leq N \lf\| \omega^{-1/2} \lambda  \ri\|_{L^2}^2 \bigg| \int_{L^2_{\omega}(\R^d)}^{}  \mathrm{d}(\mu-\mu_{n_k})(z) \bigg| \\
      		+(N+1) \bigg| \int_{L^2_{\omega}(\R^d)}^{} \mathrm{d}(\mu-\mu_{n_k})(z)  \: \lf\| \sqrt{\omega} z  \ri\|_{L^2}^2 \bigg| \xrightarrow[k \to \infty]{} 0.
    		}	
  		Finally, by \cite[Theorem 4.10]{Kabo}, the distance between the spectra of $ \KKe(\mu) $ and $ \KKe(\mu_{n_k}) $ converges to zero and the same is true for the ground state energy and thus \eqref{eq:8} is proven.
	\end{proof}

We can now complete the proof of our main result on the ground state energy convergence:

	\begin{proof}[Proof of Theorem \ref{teo:gse}]
		First of all, we observe that assumptions \eqref{eq:state} and Proposition \ref{pro:classical limit} guarantee that the classical measure $ \mu  $ is actually supported on $ L^2_{\omega}(\R^d) $ and $ c(\mu) < + \infty $ (see Corollary \ref{cor:field energy}). Moreover since it is a probability measure, it must be $ \mu(L^2_{\omega}(\R^d)) = 1 $.  Finally, the fact that each measure obtained in the quasi-classical limit is Borel when restricted to $ L^2_{\omega}(\R^d) $ can be proven by adapting \cite[ Proposition 3.11]{2011arXiv1111.5918A}. We have therefore justified the restriction to $ \M_{\omega} $ in the minimization of the energy form. 
		
		Next we recall that $ C_0^{\infty}\lf(\mathbb{R}^{dN}\ri) \otimes \dom \lf(\mathrm{d}\Gamma(\omega)\ri)$ is a core for $H$, and $ C_0^{\infty}\lf(\mathbb{R}^{dN}\ri)$ is a core for  $ \HHe(\mu)$, for any $\mu \in \M(L^2(\R^d)) $, which is a consequence of the fact that $ \HHe(\mu) $ is self-adjoint on $ \dom(\HH_0) $ (Theorem \ref{teo:conv2}) and $ \HH_0 $ is essentially self-adjoint on $ C_0^{\infty}\lf(\R^{dN}\ri) $. 
		
		In addition, $\underline{\sigma}(H)$ is uniformly bounded from below w.r.t. $\varepsilon $ (see Proposition \ref{pro:sa1} but also Proposition \ref{pro:kato-rellich}). Now let $\varphi\in C_0^{\infty}\lf(\mathbb{R}^{dN}\ri)$, by Lemma \ref{lemma:finite measures} it suffices to compute the energy for measures $ \mu \in  \Mfin $: let then $ I $ be a finite subset of $\N $, $ \lf\{ \alpha_i \ri\}_{i\in I} \in \R^+ $ such that $ \sum_{i \in I} \alpha_i = 1 $, $ \lf\{ z_i \ri\}_{i \in I} \in L^2_{\omega} $,	 so that we can express $ \mu $ as a convex combination of Dirac masses, {\it i.e.},
		\bdm
			\mu = \sum_{i \in I} \alpha_i \delta(z - z_i).
		\edm
		Then, by linearity (recall the definition of coherent states \eqref{eq:coherent} and Proposition \ref{pro:coherent states}),
 		\bmln{
      		\underline{\sigma}(H) \leq \inf_{\mu \in \Mfin} \inf_{\varphi \in C^{\infty}_0(\R^{dN})} \sum_{i \in I} \alpha_i \meanlr{\varphi \otimes \Xi(z_j)}{H}{\varphi \otimes \Xi(z_j)}_{L^2\otimes \Gamma_{\mathrm{sym}}}	\\
      		= \inf_{\mu \in \Mfin} \inf_{\varphi \in C^{\infty}_0(\R^{dN})} \meanlr{\varphi}{\mathcal{K}_{\mathrm{eff}}(\mu)}{\varphi}_{L^2} =  \inf_{\mu \in \Mfin} \underline{\sigma}\lf(\mathcal{K}_{\mathrm{eff}}(\mu)\ri) = \inf_{\mu \in \M_{\omega}} \underline{\sigma}\lf(\mathcal{K}_{\mathrm{eff}}(\mu)\ri),
    		}
  		where the last equality holds by Lemma \ref{lemma:finite measures}.
  		
 		It remains to show that
  		\begin{equation*}
    			\liminf_{\varepsilon\to 0}\underline{\sigma}(H) \geq  \inf_{\mu \in \M_{\omega}} \underline{\sigma}\bigl(\mathcal{K}_{\mathrm{eff}}(\mu)\bigr)\; .
  		\end{equation*}
  		Let $\Pi_{\varepsilon,\delta}\in C_0^{\infty}(\mathbb{R}^{dN})\otimes \mathscr{D}(\mathrm{d}\Gamma(\omega))$, $\delta>0$, be a vector that satisfies
  		\begin{equation*}
    			\langle \Pi_{\varepsilon,\delta}\lvert H \rvert \Pi_{\varepsilon,\delta} \rangle_{}< \underline{\sigma}(H)+\delta\; .
  		\end{equation*}
		The simple operator bound
  		\begin{multline*}
   			\meanlr{\Psi}{H}{\Psi}_{} \geq \tx\frac{1}{2} \meanlr{\Psi}{\diff \Gamma(\omega)}{\Psi}_{} - \lf( 2 \lf\| \omega^{-1/2} \lambda \ri\|_{L^2}^2 + M \ri) \lf\| \Psi \ri\|^2_{}	\\
    \geq \tx\frac{1}{2} \meanlr{\Psi}{\diff \Gamma(\omega)}{\Psi}_{} - C \lf\| \Psi \ri\|^2_{},
  		\end{multline*}
  		for some $ M, C < +\infty $, which follows from boundedness from below of $ \HH_0 $, and (3.49) (with $ \delta = \frac{1}{2} $) together with (3.55), imply the
  that the expectation value of $ \diff \Gamma(\omega) $ on $\Pi_{\varepsilon,\delta}$ is uniformly bounded in $\eps$. Since
  $\mathrm{d}\Gamma(1)\leq \mathrm{d}\Gamma(\omega)$ for a massive field, both conditions in (\textbf{A}4) are therefore fulfilled by
  $ \Pi_{\eps, \delta}$. 

  		Now, the vectors of the form $\psi\otimes \Xi(f)$, $\psi\in C_0^{\infty}(\mathbb{R}^{dN})$ and $f\in L^2_{\omega}(\mathbb{R}^d)$, are total in
  $L^2(\mathbb{R}^{dN})\otimes \Gamma_{\mathrm{sym}}(L^2(\mathbb{R}^d))$, and belong to
  $C_0^{\infty}(\mathbb{R}^{dN})\otimes \mathscr{D}(\mathrm{d}\Gamma(\omega))$. Let us recall that $\Xi(f)$ is the squeezed coherent state defined in
  (2.33). Hence it is possible to choose, for any $\delta>0$, the vector $\Pi_{\varepsilon,\delta}$ of the form
  		\begin{equation*}
   			\Pi_{\varepsilon,\delta}=\sum_{i=1}^{M(\delta)}\lambda_{i,\delta}(\varepsilon)\psi_{i,\delta}\otimes \Xi\bigl(z_{i,\delta}\bigr)\; ,
  		\end{equation*}
  		where $z_{i,\delta}\neq z_{k,\delta}$ for all $i\neq k$, each $\psi_{i,\delta},\psi_{k,\delta}$ is normalized, and the $\lambda_{i,\delta}(\varepsilon)$ satisfy
 		\begin{equation}
    			\label{eq:2}
    			\sum_{i=1}^{M(\delta)}\lvert \lambda_{i,\delta}(\varepsilon)  \rvert_{}^2+\sum_{i<k}^{}2\Re \; \bar{\lambda}_{i,\delta}(\varepsilon)\lambda_{k,\delta}(\varepsilon)\bar{\psi}_{i,\delta}\psi_{k,\delta} e^{-\frac{i}{\varepsilon}\Im\langle z_{i,\delta} \lvert z_{k,\delta} \rangle_2-\frac{1}{2\varepsilon}\lVert z_{i,\delta}-z_{k,\delta}  \rVert_2^{2}}=1\; .
  		\end{equation}
  		Now,
  		\begin{equation*}
    			\Bigl\lvert 2\Re \bar{\lambda}_{i,\delta}(\varepsilon)\lambda_{k,\delta}(\varepsilon)\bar{\psi}_{i,\delta}\psi_{k,\delta}\Bigr\rvert \leq \lvert \lambda_{i,\delta}  \rvert_{}^2+\lvert \lambda_{k,\delta}  \rvert_{}^2\; ,
  		\end{equation*}
  		so \eqref{eq:2} yields
  		\begin{equation*}
    			\sum_{i=1}^{M(\delta)}(1-C_{i,\delta}(\varepsilon))\lvert \lambda_{i,\delta}(\varepsilon)  \rvert_{}^2 \leq 1\; ,
  		\end{equation*}
  		where
  		$0\leq \lvert C_{i,\delta}(\varepsilon)\rvert <2M(\delta)\max_{i<k\in M(\delta)} e^{-\frac{1}{2\varepsilon}\lVert z_{i,\delta}-z_{k,\delta}
    \rVert_2^{2}}$, and $C_{i,\delta}(\varepsilon)\to 0$ as $\varepsilon\to 0$. Therefore it follows that each
  $(\lambda_{i,\delta}(\varepsilon))_{\varepsilon\in (0,1)}$ is uniformly bounded for $\varepsilon$ small enough, \emph{e.g.} such that $C_{i,\delta}(\varepsilon)\leq 1/2$ for any $i$.

 		Given $\Pi_{\varepsilon,\delta}$ of this form, the corresponding expectation of $H$ can be thus explicitly computed quite easily, and takes the form
		\bml{
			\label{eq:1}
      			\langle \Pi_{\varepsilon,\delta}\lvert H \rvert \Pi_{\varepsilon,\delta} \rangle= \sum_{i,k=1}^{M(\delta)} \bar{\lambda}_{i,\delta}(\varepsilon)\lambda_{k,\delta}(\varepsilon)\langle z_{i,\delta}  \lvert \omega \rvert z_{k,\delta} \rangle_2 e^{-\frac{i}{\varepsilon}\Im \langle z_{i,\delta} \lvert z_{k,\delta} \rangle_2-\frac{1}{2\varepsilon}\lVert z_{i,\delta}-z_{j,\delta}  \rVert_2^2}\\+\sum_{i,k=1}^{M(\delta)}\bar{\lambda}_{i,\delta}(\varepsilon)\lambda_{j,\delta}(\varepsilon)\Bigl(\langle \psi_{i,\delta}  \lvert\mathcal{H}_0\rvert \psi_{k,\delta}\rangle_2 +\Bigl\langle\psi_{i,\delta}\Bigl\lvert\langle g(\mathbf{x})  \lvert z_{k,\delta} \rangle_2\\+\langle z_{i,\delta}  \lvert g(\mathbf{x}) \rangle_2\Bigr\rvert \psi_{k,\delta}\Bigl\rangle_2\Bigr) e^{-\frac{i}{\varepsilon}\Im \langle z_{i,\delta} \lvert z_{k,\delta} \rangle_2-\frac{1}{2\varepsilon}\lVert z_{i,\delta}-z_{j,\delta}  \rVert_2^2}\; .
    		}
  		Now let $(\lambda_{i,\delta})_{i=1}^{M(\delta)}\subset \mathbb{C}$ be cluster points of each
  $(\lambda_{i,\delta}(\varepsilon))_{\varepsilon\in (0,1)}$ corresponding to a common subsequence, satisfying
  		\begin{equation*}
    			\sum_{i=1}^{M(\delta)}\lvert \lambda_{i,\delta}  \rvert_{}^2=1\; .
 	 	\end{equation*}
 	 	Then the corresponding cluster point of~\eqref{eq:1} has the form
  		\bmln{
    		  		\sum_{i=1}^{M(\delta)} \lvert \lambda_{i,\delta} \rvert_{}^2 \Bigl(\lVert \omega^{1/2} z_{i,\delta} \rVert_2^2+\lvert \lambda_{i,\delta}  \rvert_{}^2\langle \psi_{i,\delta}  \lvert\mathcal{H}_0\rvert \psi_{i,\delta}\rangle_2 +\Bigl\langle\psi_{i,\delta}\Bigl\lvert 2\Re\langle z_{i,\delta}  \lvert g(\mathbf{x}) \rangle_2\Bigr\rvert \psi_{i,\delta}\Bigl\rangle_2 \Bigr)\\=\sum_{i=1}^{M(\delta)}\lvert \lambda_{i,\delta}  \rvert_{}^2 \bigl\langle \psi_{i,\delta}\bigl\lvert \mathcal{K}_{\mathrm{eff}}\bigl(\delta(\cdot - z_{i,\delta})\bigr) \bigr\rvert \psi_{i,\delta}\bigr\rangle_2\; .
		}
	  	Therefore, defining
	  	\bmln{
	     			\mathscr{D}_{\mathrm{gs}}(\delta)=\biggl\{(\lambda_{i,\delta})_{i=1}^{M(\delta)}\subset \mathbb{C}\; ,\; \sum_{i=1}^{M(\delta)}\lvert \lambda_{i,\delta}  \rvert_{}^2=1\; ; \;\{z_{i,\delta}\}_{i=1}^{M(\delta)}\subset L^2_{\omega}(\mathbb{R}^d)\; ;\\ \{\psi_{i,\delta}\}_{i=1}^{M(\delta)}\subset \{\lVert \psi  \rVert_2^{}=1\}\cap L^2(\mathbb{R}^{dN})\biggr\}
	    	}
	  	the $\liminf_{\varepsilon\to 0}$ of $\langle \Pi_{\varepsilon,\delta}\lvert H \rvert \Pi_{\varepsilon,\delta} \rangle$ takes the form
	  	\bmln{
      			\liminf_{\varepsilon\to 0}\;\langle \Pi_{\varepsilon,\delta}\lvert H \rvert \Pi_{\varepsilon,\delta} \rangle=\inf_{\mathscr{D}_{\mathrm{gs}(\delta)}} \; \sum_{i=1}^{M(\delta)}\lvert \lambda_{i,\delta}  \rvert_{}^2 \bigl\langle \psi_{i,\delta}\bigl\lvert \mathcal{K}_{\mathrm{eff}}\bigl(\delta(\cdot - z_{i,\delta})\bigr) \bigr\rvert \psi_{i,\delta}\bigr\rangle_2\\\geq \inf_{\substack{\lVert \lambda_{i,\delta}  \rVert_{\ell^{2}}^{}=1,\\\{z_{i,\delta}\}\subset L^2_{\omega}(\mathbb{R}^d)}}\; \sum_{i=1}^M\lvert \lambda_{i,\delta}  \rvert_{}^2\underline{\sigma}\Bigl(\mathcal{K}_{\mathrm{eff}}\bigl(\delta(\cdot -z_{i,\delta})\bigr)\Bigr)\\\geq \inf_{\mu\in \M_{\mathrm{fin}}}\underline{\sigma}(\mathcal{K}_{\mathrm{eff}}(\mu))=\inf_{\mu\in \M_{\omega}}\underline{\sigma}(\mathcal{K}_{\mathrm{eff}}(\mu))\; .
    		}
 	 	It then follows that
  		\begin{equation*}
    			\inf_{\mu\in \M_{\omega}}\underline{\sigma}(\mathcal{K}_{\mathrm{eff}}(\mu))\leq \underline{\sigma}(H)+\delta\; ,
  		\end{equation*}
  		and that concludes the proof.
	\end{proof}

\subsection{Ground state energy: polaron}
\label{sec:gs polaron}

The proof of Theorem \ref{teo:gse} for the polaron model goes along the same lines as
for the massive Nelson model, so we will focus mostly on the points where the two proofs differ. 

As we are going to see, the technical differences are related to the fact that most quantities involved in the proof have to be expressed as quadratic forms, since they do not make sense as operators. We thus set
\beq
	\Q_{\mu}[\psi] : = \meanlr{\psi}{\HHe(\mu)}{\psi}_{L^2(\R^{dN})},		\qquad		\T_{\mu}[\psi] : = \Q_{\mu}[\psi] + c(\mu) \lf\| \psi \ri\|_{L^2(\R^{dN})}^2,
\eeq
where, as in the previous Sect.,
\bdm
	c(\mu) = \int_{L^2(\R^d)} \diff \mu(z) \: \lf\| z \ri\|_{L^2(\R^d)}^2.
\edm
Note that the minimization domain for the measure $ \mu $ becomes
\beq
  	\dom_1 = \M_2\lf(L^2(\R^d)\ri),
\eeq
which is a separable and complete metric space once endowed with the 2-Wasserstein distance.	

The analogue of Proposition \ref{pro:bd kke} is formulated in next Proposition, which is a direct consequence of the KLMN Theorem.

	\begin{proposition}[Boundedness from below of $ \mathcal{T}_{\mu} $]
 		\label{pro:bd tmu}
 		\mbox{}	\\
 		Let the assumptions \eqref{eq:V}, \eqref{eq:omega1} and \eqref{eq:g3} be satisfied, then
  		\beq
  			\label{eq:bd tmu}
  			\inf_{\mu \in \M_{\omega}} \inf_{\lf\| \psi \ri\|_2 = 1} \T_{\mu}[\psi] \geq - C.
		\eeq
	\end{proposition}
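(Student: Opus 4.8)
\textbf{[Proof sketch / plan.]}
The plan is to run a quadratic-form (KLMN-type) estimate for $\Q_\mu$ on $\dom(\sqrt{-\Delta+U_+})$, tracking carefully how the bound depends on the classical measure $\mu$ through the quantity $c(\mu)=\int_{L^2(\R^d)}\diff\mu(z)\,\lf\| z \ri\|_2^2$ (recall that $\omega\equiv1$ for the polaron), and then to absorb that dependence using the extra term $c(\mu)\lf\| \psi \ri\|_2^2$ built into $\T_\mu$. The first step is to establish the classical counterpart of the KLMN bound \eqref{eq:klmn bound} used in Lemma \ref{lemma:sa hheps}: for every $\theta>0$ there is a finite constant $C_\theta$, \emph{independent of} $\mu\in\M_\omega$, with
\beq
	\lf| \sum_{j=1}^N \meanlr{\psi}{V_{\mu}(\xv_j)}{\psi}_{L^2(\R^{dN})} \ri| \leq \tx\frac{1}{2} \meanlr{\psi}{-\Delta + U_+}{\psi}_{L^2(\R^{dN})} + \lf( \theta\, c(\mu) + C_\theta \ri) \lf\| \psi \ri\|_2^2 .
\eeq
This is the exact analogue of the estimate appearing in the proof of Lemma \ref{lemma:sa hheps}, with $\meanlr{\Psi_{\eps}}{a^{\dagger}(\kv)}{\Psi_{\eps}}$ replaced by $\int\diff\mu(z)\,z(\kv)$ and $\meanlr{\Psi_{\eps}}{\diff\Gamma(1)}{\Psi_{\eps}}$ by $c(\mu)$ (which equals its limit by Proposition \ref{pro:classical limit}).

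To prove the displayed estimate I would split $V_\mu=V_\mu^<+V_\mu^>$ at a scale $\varrho>0$, exactly as in \eqref{eq:Wz} and \eqref{eq:polaron trick}. For the low-frequency part, Cauchy--Schwarz in $\kv$ together with $\int\diff\mu(z)\,\lf\| z \ri\|_2\leq c(\mu)^{1/2}$ (valid since $\mu$ is a probability measure) gives $\sup_\xv\lf| V_\mu^<(\xv)\ri|\leq C\varrho^{1/2}c(\mu)^{1/2}$, which by Young's inequality is $\leq \tx\frac{\theta}{2N}\,c(\mu)+C_{\theta,\varrho,N}$. For the high-frequency part, the commutator identity \eqref{eq:polaron trick} and the bound \eqref{eq:20} (with the free parameter there chosen to be $\tx\frac{1}{2}$), after summation over $j$ and integration in $\mu$, give a contribution bounded by $\tx\frac{1}{2}\meanlr{\psi}{-\Delta}{\psi}+\tx\frac{CN}{\varrho}\,c(\mu)\lf\| \psi \ri\|_2^2$, the factor $\varrho^{-1}$ coming from $\int_{|\kv|\geq\varrho}\diff\kv\,|\kv|^{-d-1}$; taking $\varrho$ large enough that $\tx\frac{CN}{\varrho}\leq\tx\frac{\theta}{2}$, and using $-\Delta\leq-\Delta+U_+$, yields the claimed inequality.

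Finally I would combine the pieces. Since $U_+\geq0$ and $U_\ll$ is infinitesimally form-bounded with respect to $-\Delta$, hence with respect to $-\Delta+U_+$, one has $\HH_0\geq\tx\frac{1}{2}(-\Delta+U_+)-C_0$, so that for $\lf\| \psi \ri\|_2=1$,
\beq
	\T_\mu[\psi] = \meanlr{\psi}{\HH_0}{\psi}_{L^2} + \sum_{j=1}^N\meanlr{\psi}{V_{\mu}(\xv_j)}{\psi}_{L^2} + c(\mu) \geq (1-\theta)\,c(\mu) - (C_0+C_\theta) \geq -(C_0+C_\theta)
\eeq
as soon as $\theta\leq1$, and the right-hand side does not depend on $\mu$, which is \eqref{eq:bd tmu}. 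The step requiring care is the bookkeeping in the previous paragraph: one must make sure the coefficient of $c(\mu)$ produced by the two parts of the potential can genuinely be pushed strictly below $1$. This works because the parameter in Young's inequality for $V_\mu^<$ is free and the cutoff $\varrho$ in $V_\mu^>$ can be sent to infinity independently of the splitting of the kinetic energy, so that both coefficients of $c(\mu)$ can be made arbitrarily small. This is the quantitative form of the observation in Remark \ref{rem:boundedness} that the free field energy $c(\mu)$ is exactly what is needed to dominate the interaction from below; without the $+c(\mu)$ term the bound would deteriorate as $c(\mu)\to\infty$.
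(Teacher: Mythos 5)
Your proof is correct and follows essentially the same route as the paper: a KLMN-type lower bound for $\Q_\mu$ obtained from the low/high-frequency splitting \eqref{eq:Wz}--\eqref{eq:polaron trick} and the estimates \eqref{eq:19}, \eqref{eq:20}, with the parameters tuned so that the resulting coefficient of $c(\mu)$ is at most $1$, after which the $+c(\mu)\|\psi\|_2^2$ term in $\T_\mu$ absorbs it (the paper makes the coefficient exactly $1$ by choosing $\alpha=\tfrac{1}{4N}$ in \eqref{eq:20} and then tuning $\varrho$ and a Young parameter $\alpha_1$ for the low-frequency part). One small bookkeeping remark: once the factor $2\Re$ and the sum over $j$ are taken into account, the value $\alpha=\tfrac{1}{2}$ you quote for \eqref{eq:20} yields a coefficient of $-\Delta$ equal to $1$ rather than $\tfrac{1}{2}$; choosing $\alpha\leq\tfrac14$ (or, as the paper does, $\alpha=\tfrac{1}{4N}$) makes the absorption into $\HH_0\geq\tfrac12(-\Delta+U_+)-C_0$ work as you intend.
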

	
	\begin{proof}
  		% Boundedness from below of the form $ \Q_{\mu}$ follows from the application of the KLMN Theorem to the full form $ Q_H $ (see Proposition \ref{pro:klmn} in the Appendix). More precisely the lower bound is provided by \eqref{eq:klmn bound}, which yields as estimate of the bottom of the quadratic form as the constant in front of $ \lf\| \Psi \ri\|_2^2 $. 
          The result follows from KLMN Theorem (see the proof of Lemma~\ref{lemma:sa hhe} and its
          notation). If we set \bdm C_<(\varrho) := \int_{\lf| \kv \ri| \leq \varrho}^{} \diff \kv \: \lf| \kv
          \ri|^{1-d}, \qquad C_>(\varrho) := \int_{\lf| \kv \ri| \geq \varrho}^{} \diff \kv \: \lf| \kv \ri|^{1-d},
          \edm and take $\alpha = \frac{1}{4N}$ in \eqref{eq:20} and combine it with \eqref{eq:19}, we get the
          following bound for any $\varrho\geq 0$ and any $\mu\in \dom_1 $:
  		\begin{equation}
  			\label{eq:form bound below}
    			\frac{\Q_{\mu}[\psi]}{\lf\| \psi \ri\|_{L^2}^2} > - \lf(8N^2C_>(\varrho)+N\alpha_1 \ri)\int_{L^2}^{} \mathrm{d}\mu(z) \: \lf\| z \ri\|_{L^2}^2 - \frac{N}{\alpha_1}C_<(\varrho).
  		\end{equation}
  		Now choosing $ \varrho, \alpha_1 $ in such a way that $ 8N^2C_>(\varrho)+N\alpha_1 = 1$, we obtain
 		\begin{equation*}
    			\frac{\Q_{\mu}[\psi]}{\lf\| \psi \ri\|_{L^2}^2} > -\int_{L^2}^{} \mathrm{d}\mu(z) \: \lf\| z \ri\|_{L^2}^2 - C,
  		\end{equation*}
  		where the constant $ C < +\infty $ on the r.h.s. is independent of $ \mu $.	Using the explicit expression of $ c(\mu) $, we immediately get
  		\beq
  			\T_{\mu}[\psi] \geq - C \lf\| \psi \ri\|_2^2.
		\eeq
	\end{proof}

	The only other argument to be adapted is the proof of Lemma \ref{lemma:finite measures}. In fact most of the proof does not need to be changed at all, whereas \eqref{eq:8} requires a totally different approach. We thus state the result as a separate lemma.

	\begin{lemma}
  		\label{lemma:8 polaron}
  		\mbox{}	\\
  		Let $ \lf\{ \mu_n \ri\}_{n \in \N} $ be a sequence of measures in $ \M_2(L^2(\R^d)) $ such that 
  		\bdm
  			\mu_n \xrightarrow[n \to \infty]{\M_2} \mu.
		\edm
		Then there exists a subsequence $ \lf\{ \mu_{n_k} \ri\}_{k\in \N} $, such that
 		 \beq
 		 	\label{eq:8 polaron}
    			\lf| \inf_{\lf\| \psi \ri\|_2 = 1} \T_{\mu}[\psi] - \inf_{\lf\| \psi \ri\|_2 = 1} \T_{\mu_{n_k}}[\psi] \ri| \xrightarrow[k \to \infty]{} 0.
  		\eeq
	\end{lemma}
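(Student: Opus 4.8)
The plan is to turn the $W_2$-closeness of $\mu$ and $\mu_n$ into a relative bound between the quadratic forms $\T_{\mu}$ and $\T_{\mu_n}$, and then to transfer this bound to their infima via near-minimizers. First I would record that for \emph{any} $\nu\in\M_2(L^2(\R^d))$ the form $\Q_{\nu}$ is form-bounded w.r.t. $-\Delta+U_+$ with relative bound as small as one wishes and additive constant controlled by $\int\lVert z\rVert_{L^2}^2\,\diff\nu(z)$ — this is exactly the content of \eqref{eq:19}, \eqref{eq:20}, \eqref{eq:form bound below} in the proofs of Lemma~\ref{lemma:sa hhe} and Proposition~\ref{pro:bd tmu}, and it uses only finiteness of the second moment, not that $\nu$ arises as a quasi-classical limit. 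In particular $\T_{\mu_n}$ is bounded below for each $n$, and since $W_2$-convergence forces $\sup_n\int\lVert z\rVert_{L^2}^2\,\diff\mu_n<\infty$, Proposition~\ref{pro:bd tmu} gives $\underline{\sigma}(\T_{\mu_n})\geq -C$ \emph{uniformly} in $n$. I would then fix once and for all a cutoff $\varrho$ so large that the splitting \eqref{eq:Wz} simultaneously makes $\sum_j V^{<}_{\nu}(\xv_j)$ uniformly bounded and $\sum_j V^{>}_{\nu}(\xv_j)$ an arbitrarily small form perturbation of $-\Delta$, for all such $\nu$.

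Next I would use an optimal coupling $\gamma_n$ on $L^2(\R^d)\times L^2(\R^d)$ with marginals $\mu,\mu_n$ and $\int\lVert z-w\rVert_{L^2}^2\,\diff\gamma_n(z,w)=W_2(\mu,\mu_n)^2$ (optimal plans exist since $L^2(\R^d)$ is Polish, see \cite{MR2459454}) together with the linearity of $z\mapsto W_z$ to write
\bmln{
\T_{\mu_n}[\psi]-\T_{\mu}[\psi] = 2\Re\sum_{j=1}^N\int\diff\gamma_n(z,w)\,\meanlr{\psi}{W^{<}_{w-z}(\xv_j)+W^{>}_{w-z}(\xv_j)}{\psi}_{L^2} \\ +\lf(c(\mu_n)-c(\mu)\ri)\lVert\psi\rVert_{L^2}^2 .
}
For the low-frequency piece, Cauchy--Schwarz twice (as in \eqref{eq:bounded part}, applied to $w-z$) together with $\int\lVert z-w\rVert_{L^2}\,\diff\gamma_n\leq W_2(\mu,\mu_n)$ give a bound $C(\varrho)\,W_2(\mu,\mu_n)\lVert\psi\rVert_{L^2}^2$. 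For the high-frequency piece, \eqref{eq:20} applied to $w-z$, summed over $j$ and integrated against the probability measure $\gamma_n$, gives $2\alpha\meanlr{\psi}{-\Delta}{\psi}_{L^2}+\tfrac{C(\varrho)}{\alpha}W_2(\mu,\mu_n)^2\lVert\psi\rVert_{L^2}^2$ for every $\alpha>0$. Finally $\lvert c(\mu_n)-c(\mu)\rvert\leq W_2(\mu,\mu_n)\bigl(\int(\lVert z\rVert_{L^2}+\lVert w\rVert_{L^2})^2\,\diff\gamma_n\bigr)^{1/2}\to 0$ by uniform boundedness of the second moments (equivalently, this is part of the definition of $W_2$-convergence). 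Collecting the three estimates, for every $\alpha>0$ there is $\eta_n(\alpha)\to 0$ as $n\to\infty$ with
\beq
\lvert\T_{\mu_n}[\psi]-\T_{\mu}[\psi]\rvert\leq 2\alpha\,\meanlr{\psi}{-\Delta}{\psi}_{L^2}+\eta_n(\alpha)\,\lVert\psi\rVert_{L^2}^2 .
\eeq

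To conclude I would combine this with the form bound of the first paragraph in the shape $\meanlr{\psi}{-\Delta}{\psi}_{L^2}\leq C\bigl(\T_{\nu}[\psi]+\lVert\psi\rVert_{L^2}^2\bigr)$, valid for $\nu\in\{\mu,\mu_n\}$ with $C$ uniform in $n$. Applying the displayed inequality to a $\delta$-near-minimizer of $\T_{\mu}$ on the unit sphere (whose energy, hence $-\Delta$-energy, is then bounded by a constant depending only on $\delta$ and $\underline{\sigma}(\T_{\mu})$) yields $\underline{\sigma}(\T_{\mu_n})\leq\underline{\sigma}(\T_{\mu})+\delta+\alpha\,C_{\mu,\delta}+\eta_n(\alpha)$; letting $n\to\infty$, then $\alpha\to 0$, then $\delta\to 0$ gives $\limsup_n\underline{\sigma}(\T_{\mu_n})\leq\underline{\sigma}(\T_{\mu})$. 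The reverse inequality follows symmetrically from a $\delta$-near-minimizer $\psi_n$ of $\T_{\mu_n}$: its $-\Delta$-energy is bounded \emph{uniformly} in $n$, because $\underline{\sigma}(\T_{\mu_n})$ is now bounded above (by the $\limsup$ just obtained) and below (by Proposition~\ref{pro:bd tmu}), so the same estimate gives $\underline{\sigma}(\T_{\mu})\leq\underline{\sigma}(\T_{\mu_n})+\delta+\alpha\,C_{\delta}+\eta_n(\alpha)$ and hence $\liminf_n\underline{\sigma}(\T_{\mu_n})\geq\underline{\sigma}(\T_{\mu})$. Thus $\underline{\sigma}(\T_{\mu_n})\to\underline{\sigma}(\T_{\mu})$ along the whole sequence, which in particular proves \eqref{eq:8 polaron}. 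The one genuinely delicate point is the order of limits — one cannot send $\alpha\to 0$ before $n\to\infty$, since $\eta_n(\alpha)$ contains a term $W_2(\mu,\mu_n)^2/\alpha$ — but the nested limits above circumvent this; everything else is a direct rerun of the estimates already carried out in Lemma~\ref{lemma:sa hhe} and Proposition~\ref{pro:bd tmu}.
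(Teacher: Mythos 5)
Your proof is correct, and it takes a genuinely different — arguably cleaner — route than the paper's. Both arguments share the same skeleton: split the potential at a cutoff $\varrho$ into $W^<$ and $W^>$ pieces, invoke the coercivity bound $\T_\nu[\psi]\geq(1-\alpha)\|\psi\|_{H^1}^2-C_\alpha\|\psi\|_{L^2}^2$ (with $C_\alpha$ uniform in $n$ because $W_2$-convergence gives uniformly bounded second moments) to control the $H^1$-norms of near-minimizers, and run the argument symmetrically for the two one-sided inequalities. The paper, however, first extracts a subsequence converging both weakly and in $W_2$ (via tightness and Prokhorov), and then estimates $\Q_{\mu_n-\mu}[\psi]$ by an integral of $1+\|z\|_{L^2}^2$ against the \emph{total variation} measure $|\mu-\mu_{n_k}|$, whose smallness is asserted from the two modes of convergence. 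You instead use an optimal transport plan $\gamma_n$ for $(\mu,\mu_n)$ together with the $\C$-linearity of $z\mapsto W_z$, which rewrites $\T_{\mu_n}[\psi]-\T_\mu[\psi]$ as an integral of $W_{w-z}$ against $\gamma_n$ and hence — after one Cauchy--Schwarz in each regime and the $\alpha$-split of \eqref{eq:20} — yields bounds of order $W_2(\mu,\mu_n)$ and $W_2(\mu,\mu_n)^2/\alpha$. This avoids tightness and the subsequence extraction entirely, proves convergence of the infima along the \emph{whole} sequence, and the nested limits $n\to\infty$, then $\alpha\to 0$, then $\delta\to 0$ that you spell out are indeed the correct way to dispose of the $W_2^2/\alpha$ term, which prevents sending $\alpha\to 0$ first.
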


	\begin{proof}
  		Since $ \lf\{ \mu_n \ri\}_{n\in \mathbb{N}}$ is Cauchy and therefore tight in $ \M\lf( L^2 (\mathbb{R}^d ) \ri)$, there exists a subsequence
  $ \lf\{ \mu_{n_k} \ri\}_{k\in \mathbb{N}}$, such that
  		\bdm
  			\mu_{n_k} \xrightarrow[k \to \infty]{\M} \mu, 		\qquad \mu_{n_k} \xrightarrow[k \to \infty]{\M_2} \mu.
		\edm
  
  		Now let $\delta>0$ be fixed and let $ \lf\{ \psi_j \ri\}_{j \in \mathbb{N}} \subset H^1(\mathbb{R}^d)$ be a minimizing sequence
  for $ \T_{\mu}[\psi] $. The existence of such a minimizing sequence for $ \T_{\mu}[\psi] $, $ \mu \in \M_1 $, is guaranteed by Proposition \ref{pro:bd tmu}. Then
 		\bdm
 			 \inf_{\lf\| \psi \ri\|_2 = 1} \T_{\mu_n}[\psi] \leq \Q_{\mu_n}\lf[\psi_j\ri] + c({\mu_n}) = \Q_{\mu}\lf[\psi_j\ri] + c(\mu) + \Q_{\mu_n-\mu}\lf[\psi_j\ri] + c(\mu_n - \mu),	
		\edm
  		Therefore there exists some $\bar{j} \geq 0$, such that, for any $ j \geq \bar{j} $,
  		\begin{equation*}
    			\inf_{\lf\| \psi \ri\|_2 = 1} \T_{\mu_n}[\psi] < \inf_{\lf\| \psi \ri\|_2 = 1} \T_{\mu}[\psi] +\Q_{\mu_n-\mu}\lf[\psi_j\ri] + c(\mu_n - \mu) + \tx\frac{1}{4} \delta.
  		\end{equation*}
  		In addition the convergence of $ \mu_{n_k} $ in $ \M_2 $ guarantees that there exists some $ \bar{n}_1 \in \N$ such that, for any $ n_k \geq \bar{n}_1 $, $ \lf| c(\mu - \mu_{n_k}) \ri| \leq \frac{\delta}{2} $ and therefore
  		\begin{equation*}
    			\inf_{\lf\| \psi \ri\|_2 = 1} \T_{\mu_n}[\psi]  < \inf_{\lf\| \psi \ri\|_2 = 1} \T_{\mu}[\psi] + \Q_{\mu_n-\mu}\lf[\psi_j\ri] + \tx\frac{1}{2} \delta,
  		\end{equation*}
  		for any $ n_k \geq \bar{n}_1 $ and $ j \geq \bar{j} $.
  		On the other hand, following the same lines leading to \eqref{eq:19} and \eqref{eq:20}, one can bound
	 	\begin{equation}
    			\label{eq:21}
    			\lf| \Q_{\mu_n-\mu}\lf[\psi\ri] \ri| \leq C \lf\| \psi \ri\|_{H^1(\R^{dN})} \bigg[ \int_{L^2(\R^d)}^{}\diff\lf|\mu - \mu_{n_k}\ri| \: \lf( 1+ \lf\| z \ri\|_{L^2}^2 \ri) \bigg]^{1/2}\leq  \delta^{\prime} \lf\| \psi \ri\|_{H^1(\R^{dN})},
  		\end{equation}
  		for any $ \delta^{\prime} > 0 $ and big enough $n_k$, thanks to the convergence of $ \mu_{n_k} $ to
                $ \mu $ in $ \M $ and $ \M_2 $. Indeed, this guarantees the existence of a $ \bar{n}_2 \in \N $ such that \bdm
                \int_{L^2(\R^d)}^{}\diff\lf|\mu - \mu_{n_k}\ri| \: \lf( 1+ \lf\| z \ri\|_{L^2}^2 \ri) \leq
                {\delta^{\prime}}^2, \edm for any $ n_k \geq \bar{n}_2 $ and $ \delta^{\prime} $ arbitrary. Hence, if
                $n_k \geq \bar{n} : = \max\{\bar{n}_1, \bar{n}_{2} \}$, \beq
  			\label{eq:formub}
    			\inf_{\lf\| \psi \ri\|_2 = 1} \T_{\mu_n}[\psi]  < \inf_{\lf\| \psi \ri\|_2 = 1} \T_{\mu}[\psi] +  2 \delta^{\prime} \lf\| \psi_j \ri\|_{H^1(\R^{dN})} + \tx\frac{1}{2} \delta 
    			\leq \inf_{\lf\| \psi \ri\|_2 = 1} \T_{\mu}[\psi] +  \delta,
  		\eeq
  		where we have taken $ \delta^{\prime} = \frac{1}{4} \lf\| \psi_j \ri\|^{-1}_{H^1(\R^{dN})} \delta$.	Since $ j \geq \bar{j} $, which does depend on $ \delta $, in order to show that such a choice of $ \delta^{\prime} $ is possible, we have to ensure that $ \lf\| \psi_j \ri\|_{H^1(\R^{dN})} $ is uniformly bounded. This is, however, a direct consequence of the following inequality
  		\beq
  			\label{eq:formlb}
  			\T_{\mu}[\psi] \geq (1 - \alpha) \lf\| \psi \ri\|^2_{H^1(\R^{dN})} - C_{\alpha} \lf\| \psi \ri\|_{L^2(\R^{dN})}^2,
		\eeq
		for any $ \alpha > 0 $, where the coefficient $ C_{\alpha} $ is finite for any $ \alpha > 0 $. Such a bound can be obtained as in deriving \eqref{eq:form bound below} but keeping the positive kinetic energy. Hence applying it to $ \T_\mu[\psi_j] $, we get
  		\bdm
  			(1 - \alpha) \lf\| \psi_j \ri\|^2_{H^1(\R^{dN})} \leq  \inf_{\lf\| \psi \ri\|_2 = 1} \T_{\mu}[\psi] + \tx\frac{1}{4} \delta +  C_{\alpha} \leq C,
		\edm
		for any $ \alpha > 0 $ and $ \psi_j $ normalized in $L^2$, by the boundedness from above of $ 
  		\inf_{\lf\| \psi \ri\|_2 = 1} \T_{\mu}[\psi] $.
		 
		 Now to complete the proof one needs to show that there exists another $ \bar{m} \in \N $, such that for any $ n_k \geq \bar{m} $, the opposite inequality is also true, {\it i.e.}, for any $ \delta > 0 $ arbitrary
		\beq
		 	\inf_{\lf\| \psi \ri\|_2 = 1} \T_{\mu}[\psi] < \inf_{\lf\| \psi \ri\|_2 = 1} \T_{\mu_{n_k}}[\psi] + \delta,
		\eeq 
		for $ n_k \geq \bar{m} $. The argument is, however, perfectly symmetric: pick a minimizing sequence $ \big\{ \psi_{j}^{(n_k)} \big\}_{j \in \N} $ for $ \T_{\mu_{n_k}}[\psi] $, the estimates leading to \eqref{eq:formub} can be proven in the very same way. The only one requiring some comment is the uniform boundedness of the $ H^1 $ norm of $  \psi_{j}^{(n_k)} $, which is however implied by \eqref{eq:formlb}, as above.
	\end{proof}
	
	The rest of the proof is identical to the one provided in the previous Sect. \ref{sec:gs nelson}, apart from the fact that in certain estimates the operator inequalities have to be replaced by the corresponding ones in terms of quadratic forms. We omit the details for the sake of brevity.

\appendix
\section{Fock space estimates.}
\label{sec:appendix}

In this Appendix we collect some rather standard results and technical estimates relative to the well-posedness of the models considered in the paper. In particular we provide a full proof of Propositions \ref{pro:sa1} and \ref{pro:sa2}.

The starting point is obviously the definition of the models: even the rigorous meaning of the formal expressions mentioned in Sects. \ref{sec:introduction} and \ref{sec:main} deserves some brief discussion. Indeed, even if, for the Nelson model, the operator $ H $ can be given a meaning at least as symmetric operator with dense domain, the same is not true for the polaron. We recall that the formal expression we want to study has the form
\bdm
	H = \hamf + \sum_{j = 1}^N A(g(\xv_j)),	\qquad		\hamf = \HH_0 + \diff\Gamma(\omega),
\edm
\bdm
	\HH_0 = \sum_{j = 1}^N {\lf( - \Delta_j \ri) + U(\xv_1, \ldots, \xv_N)}.
\edm
The precise assumptions made on the quantities involved are specified in Sect. \ref{sec:model} for the various models under investigation. We simply recall that $ U $ decomposes as
\bdm
	U = U_+ + U_{\ll},		\qquad		U_+ \in L^2\big(\R^{dN};\R^+\big),	\quad	U_{\ll} \in K_{\ll}\big(\R^{dN}\big).
\edm
For the field part, the only assumptions on $ \omega = \omega(\kv) $ is that it is positive.

In order to study the operator $ H $, one is typically forced to  start with the quadratic form \eqref{eq:formQ} associated to it, {\it i.e.}, 
\bdm
	Q_H[\Psi] = \meanlr{\Psi}{H}{\Psi}_{ L^2\otimes \fock} = Q_{\hfree}[\Psi] + Q_A[\Psi].
\edm
Now, under the assumptions \eqref{eq:V}, \eqref{eq:omega}, \eqref{eq:g1} or \eqref{eq:g2}, or, more importantly, under the assumptions for the polaron given by \eqref{eq:omega1} and \eqref{eq:g3}, the above expression makes sense on a dense domain given by $ C_0^{\infty}\lf(\mathbb{R}^{dN}\ri) \cap \dom(\sqrt{\mathrm{d}\Gamma(1)})\cap
 \dom(\sqrt{\mathrm{d}\Gamma(\omega)}) $. Note that on the same domain one is also allowed to split the quadratic form into two terms which make sense individually. In spite of seeming natural, this is indeed impossible at the operator level, without a closer inspection of its properties.

There are various techniques to prove that the form $ Q_H $ is associated to a unique self-adjoint operator, which will be identified with the formal expression $ H $. Depending on
the regularity of the function $ g $ appearing in the interaction term, one can apply either the Kato-Rellich Theorem (Nelson model) or the KLMN Theorem (polaron). 

Let us start by discussing the first approach, which will lead to the proof of Proposition \ref{pro:sa1}: the result relies on a technical estimate that we state in a separate Lemma \ref{lemma:est nelson}, whose discussion is postponed.

	\begin{proposition}[Kato-Rellich]
		\label{pro:kato-rellich}
		\mbox{}	\\
  		For any function $ g \in L^{\infty} \lf(\mathbb{R}^{d};\mathfrak{H} \ri) $ such that
                $\omega^{-1/2}g\in L^{\infty} \lf(\mathbb{R}^{d};\mathfrak{H} \ri)$, the operator $H$ is self-adjoint on
                $ \dom(-\Delta+{U_+})\cap \dom\bigl(\mathrm{d}\Gamma(\omega) \bigr)$ and bounded from below.
	\end{proposition}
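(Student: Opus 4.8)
The plan is to realize $H$ as a Kato--Rellich perturbation of the manifestly self-adjoint operator $\hamf = \HH_0 + \mathrm{d}\Gamma(\omega)$. I first recall, from Section~\ref{sec:model}, that under \eqref{eq:V} the operator $\HH_0 = -\Delta+U$ is self-adjoint and bounded from below on $\dom(\HH_0) = \{\psi\in H^2(\R^{dN}) \mid U_+\psi\in L^2\}$, and that, $U_{\ll}$ being Kato-infinitesimally small w.r.t. $-\Delta$, this domain coincides with $\dom(-\Delta+U_+)$. Since $\HH_0$ acts on the particle factor while $\mathrm{d}\Gamma(\omega)\geq 0$ acts on the Fock factor, the two operators commute, so $\hamf$ is self-adjoint and bounded from below on $\dom(\HH_0)\cap\dom(\mathrm{d}\Gamma(\omega)) = \dom(-\Delta+U_+)\cap\dom(\mathrm{d}\Gamma(\omega))$. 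It then suffices to prove that $\sum_{j=1}^N A(g(\xv_j))$ is Kato-infinitesimally small w.r.t. $\hamf$.

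For this I use the standard Fock-space bound isolated in Lemma~\ref{lemma:est nelson}: for a.e. $\xv\in\R^d$ and every $\Psi$ in the common core $C_0^\infty(\R^{dN})\cap\dom(\sqrt{\mathrm{d}\Gamma(1)})\cap\dom(\sqrt{\mathrm{d}\Gamma(\omega)})$,
\begin{equation*}
  \lf\| \lf( a(g(\xv)) + a^\dagger(g(\xv)) \ri)\Psi \ri\|_{\fock} \leq 2 \lf\| \omega^{-1/2} g(\xv) \ri\|_{\hh} \lf\| \mathrm{d}\Gamma(\omega)^{1/2}\Psi \ri\|_{\fock} + \sqrt{\eps}\, \lf\| g(\xv) \ri\|_{\hh} \lf\| \Psi \ri\|_{\fock},
\end{equation*}
which follows from $\lf\|a(f)\Psi\ri\|^2 = \braket{\Psi}{a^\dagger(f)a(f)\Psi}$, the Cauchy--Schwarz estimate $\lf\|a(f)\Psi\ri\|^2 \leq \lf\|\omega^{-1/2}f\ri\|_{\hh}^2 \braket{\Psi}{\mathrm{d}\Gamma(\omega)\Psi}$ in the $\kv$-integration, and the identity $\lf\|a^\dagger(f)\Psi\ri\|^2 = \lf\|a(f)\Psi\ri\|^2 + \eps\lf\|f\ri\|_{\hh}^2\lf\|\Psi\ri\|^2$ coming from \eqref{eq:ccr l2}. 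Integrating in $\xv_1,\ldots,\xv_N$ over $L^2(\R^{dN})$ and summing gives
\begin{equation*}
  \Bigl\| \sum_{j=1}^N A(g(\xv_j))\Psi \Bigr\| \leq 2N \lf\| \omega^{-1/2} g \ri\|_{L^\infty(\R^d;\hh)} \lf\| \mathrm{d}\Gamma(\omega)^{1/2}\Psi \ri\| + \sqrt{\eps}\, N \lf\| g \ri\|_{L^\infty(\R^d;\hh)} \lf\| \Psi \ri\|,
\end{equation*}
and the hypothesis $\omega^{-1/2}g\in L^\infty(\R^d;\hh)$ is exactly what makes the first coefficient finite.

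To turn this into an infinitesimal bound relative to $\hamf$ I use the elementary interpolation $\lf\|\mathrm{d}\Gamma(\omega)^{1/2}\Psi\ri\| \leq (\lf\|\Psi\ri\|\,\lf\|\mathrm{d}\Gamma(\omega)\Psi\ri\|)^{1/2} \leq \tfrac{\delta}{2}\lf\|\mathrm{d}\Gamma(\omega)\Psi\ri\| + \tfrac{1}{2\delta}\lf\|\Psi\ri\|$ for arbitrary $\delta>0$, together with the observation that, $\HH_0$ being bounded below (say $\HH_0\geq -M$) and commuting with the positive operator $\mathrm{d}\Gamma(\omega)$, one has $\lf\|\mathrm{d}\Gamma(\omega)\Psi\ri\|\leq \lf\|(\hamf+M)\Psi\ri\|\leq \lf\|\hamf\Psi\ri\| + M\lf\|\Psi\ri\|$. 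Choosing $\delta$ so small that $2N\delta\lf\|\omega^{-1/2}g\ri\|_{L^\infty} < 1$ produces a bound of the form $\lf\|\sum_j A(g(\xv_j))\Psi\ri\| \leq a\lf\|\hamf\Psi\ri\| + b\lf\|\Psi\ri\|$ with $a<1$ (indeed arbitrarily small) and $b$ finite, uniformly for $\eps\leq 1$. The Kato--Rellich theorem then yields that $H = \hamf + \sum_j A(g(\xv_j))$ is self-adjoint on $\dom(\hamf) = \dom(-\Delta+U_+)\cap\dom(\mathrm{d}\Gamma(\omega))$ and bounded from below, as claimed.

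The only genuinely delicate point I expect is the rigorous justification of the Fock estimate on a set that is a core for $\hamf$, namely the $\xv$-measurability of $\xv\mapsto a(g(\xv))+a^\dagger(g(\xv))$ and the handling of the simultaneous domains of $-\Delta+U_+$, $\mathrm{d}\Gamma(\omega)$ and $\mathrm{d}\Gamma(1)$; this is precisely the content of Lemma~\ref{lemma:est nelson}, which I would simply invoke. All the remaining steps are routine Kato--Rellich bookkeeping.
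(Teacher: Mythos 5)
Your proof is correct and follows essentially the same Kato--Rellich strategy as the paper's: bound $\lf\|\sum_j A(g(\xv_j))\Psi\ri\|$ in terms of $\lf\|\sqrt{\mathrm{d}\Gamma(\omega)}\Psi\ri\|$ and $\lf\|\Psi\ri\|$ via Lemma~\ref{lemma:est nelson} and the $a^{\dagger}$--$a$ norm identity, upgrade this to an infinitesimal relative bound, and invoke Kato--Rellich. The one place where your bookkeeping genuinely diverges from the paper's is the choice of unperturbed operator and the way $\lf\|\sqrt{\mathrm{d}\Gamma(\omega)}\Psi\ri\|$ is controlled. The paper keeps $U_{\ll}$ out of the base operator, so the Kato--Rellich step is done in one shot against $-\Delta+U_{+}+\mathrm{d}\Gamma(\omega)$; it then uses the trivial operator inequality $\mathrm{d}\Gamma(\omega)\leq -\Delta+U_{+}+\mathrm{d}\Gamma(\omega)$ inside the expectation $\mean{\Psi}{\mathrm{d}\Gamma(\omega)}{\Psi}$, followed by Cauchy--Schwarz and Young, so that the whole argument stays at the level of expectations of positive operators and never needs a commuting-operators lemma. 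You instead absorb $U_{\ll}$ into $\HH_0$ first (which is fine, since $\dom(\HH_0)=\dom(-\Delta+U_+)$ by Kato--Rellich on the particle factor alone), and then control $\lf\|\mathrm{d}\Gamma(\omega)\Psi\ri\|$ via the inequality $\lf\|\mathrm{d}\Gamma(\omega)\Psi\ri\|\leq\lf\|(\hamf+M)\Psi\ri\|$, which relies on the positivity of the cross term $\langle(\HH_0+M)\Psi,\mathrm{d}\Gamma(\omega)\Psi\rangle$ for the commuting nonnegative operators $\HH_0+M$ and $\mathrm{d}\Gamma(\omega)$. Both routes are valid and give an arbitrarily small relative bound; the paper's is marginally leaner because it avoids invoking the commuting-operators fact, while yours factors the Kato--Rellich argument more transparently into a particle stage and a field stage. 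Your explicit $\sqrt{\eps}$ in front of the $\lf\|g\ri\|_{L^\infty}\lf\|\Psi\ri\|$ term is the right way to keep track of the $\eps$-scaled CCR \eqref{eq:ccr l2}; the paper suppresses this factor (harmlessly, since one only needs uniformity for $\eps\leq 1$).
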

	
	\begin{proof}
  		Using \eqref{eq:est nelson} and the identity
  		\begin{equation}
  			\label{eq:11}
  			\lf\| a^{\dagger}(g) \Psi \ri\|_{\fock(\mathfrak{H})}^2 = \lf\| a(g) \Psi \ri\|_{\fock(\mathfrak{H})}^2 + \lf\| g  \ri\|_{\mathfrak{H}}^2 \lf\| \Psi \ri\|_{\fock(\mathfrak{H})}^2,
		\end{equation} 
		we obtain for any $ \Psi \in \dom ( \sqrt{\mathrm{d}\Gamma(\omega)} )$
		\bmln{
      		\lf\| \sum A(g(\xv_j)) \Psi \ri\|_{L^2 \otimes \fock} \leq N \lf(2\lf\| \omega^{-1/2} g \ri\|_{L^{\infty} (\mathbb{R}^{d};\hh)}  \lf\| \sqrt{\mathrm{d}\Gamma(\omega)} \Psi  \ri\|_{L^2 \otimes \fock} \ri .\\+ \lf\| g \ri\|_{L^{\infty} (\mathbb{R}^{d};\hh)}\lf\| \Psi \ri\|_{L^2 \otimes \fock} \biggr).
    		}
  		Then by Cauchy-Schwarz one gets
  		\bmln{
  			2N \lf\| \omega^{-1/2} g \ri\|_{L^{\infty} (\mathbb{R}^{d};\hh)} \lf\| \sqrt{\mathrm{d}\Gamma(\omega)} \Psi  \ri\|_{L^2 \otimes \fock} \\
  			= 2N \lf\| \omega^{-1/2} g \ri\|_{L^{\infty} (\mathbb{R}^{d};\hh)} \lf( \meanlr{\Psi}{\mathrm{d}\Gamma(\omega)}{\Psi}_{L^2 \otimes \fock} \ri)^{1/2}	\\
  			\leq 2N \lf\| \omega^{-1/2} g \ri\|_{L^{\infty} (\mathbb{R}^{d};\hh)} \lf( \meanlr{\Psi}{-\Delta + U_+ + \diff\Gamma(\omega)}{\Psi}_{L^2 \otimes \fock} \ri)^{1/2} \\
  			\leq \alpha \lf\| \lf( -\Delta + U_+ + \diff\Gamma(\omega) \ri) \Psi \ri\|_{L^2 \otimes \fock} + N^2 \alpha^{-1} \lf\| \omega^{-1/2} g \ri\|_{L^{\infty} (\mathbb{R}^{d};\hh)}^2 \lf\| \Psi \ri\|_{L^2 \otimes \fock}	
   		}
 		 for any $\alpha>0$, and therefore
  		\beq
  			\lf\| \sum A(g(\xv_j)) \Psi \ri\|_{L^2 \otimes \fock} \leq \alpha \lf\| \lf( -\Delta + U_+ + \diff\Gamma(\omega) \ri) \Psi \ri\|_{L^2 \otimes \fock} + b(\alpha) \lf\| \Psi \ri\|_{L^2 \otimes \fock},
  		\eeq
  		with
  		\bdm
  			b(\alpha) =N^2 \alpha^{-1} \lf\| \omega^{-1/2} g \ri\|_{L^{\infty} (\mathbb{R}^{d};\hh)}^2 + N \lf\| g \ri\|_{L^{\infty} (\mathbb{R}^{d};\hh)}\; .
		\edm
		In addition, for any $\alpha'>0$, there exists $b'(\alpha')>0$ finite, such that
  		\begin{equation*}
    			\lf\| U_{\ll} \psi \ri\|_{L^2(\R^{dN})} \leq \alpha^{\prime} \lf\| \lf( - \Delta + U_+ \ri) \psi \ri\|_{L^2(\R^{dN})} + b'(\alpha') \lf\| \psi \ri\|_{L^2(\R^{dN})}
  		\end{equation*}
  		thanks to the hypothesis of Kato smallness of $ U_{\ll} $. The positivity of $ \diff \Gamma(\omega) $ allows to extract from the above inequality a bound on $ L^2 \otimes \fock $ where the operator on the r.h.s. is replaced with $ - \Delta + U_+ + \diff \Gamma(\omega) \geq - \Delta + U_+ $.
  		
  		Picking now $ \alpha $ and $ \alpha' $ both strictly smaller than $ \frac{1}{2} $, the result is proven by a direct application of Kato-Rellich Theorem. As a by-product we also obtain that the full operator is bounded from below by
  		\begin{equation*}
   			-M : =\sup_{1/4< \alpha,\alpha'<1/2} -\bigl(b(\alpha)+b'(\alpha')\bigr)\; > - \infty.
  		\end{equation*}
	\end{proof}
	
	The above proof does not cover the case of Fr\"{o}hlich's polaron, since in that case 
	\bdm
		g(\xv) = \frac{1}{|\kv|^{\frac{d-1}{2}}} e^{i \kv \cdot \xv},
	\edm
	which is not a function in $ L^{\infty}(\R^d;L^2(\R^d)) $. However, $ (|\kv|^2 + 1)^{-1/2} g(\xv) \in L^2(\R^d) $ for a.e. $ \xv \in \R^d $, or, in other words,
	\beq
		\label{eq:g polaron sobolev}
		\frac{1}{|\kv|^{\frac{d-1}{2}}} e^{i \kv \cdot \xv} \in W^{-1,\infty}(\R^d, L^2(\R^d)).
                \eeq
                Using the trick described in Remark \ref{rem:classical limit polaron}, \emph{i.e.}, rewriting  $ g(\xv) $ above as the commutator between $ - i \nabla_{\xv} $ and a (vector-valued) function in $ L^{\infty}(\R^d;\hh\otimes \R^d) $, it is possible to exploit the aforementioned regularity \eqref{eq:g polaron sobolev} and prove self-adjointness of $ H$ via the KLMN Theorem.
                
                Before stating the result, however, we remark that the function \eqref{eq:g polaron sobolev} not only is in $ W^{-1,\infty}(\R^d, \hh) $ but possesses a stronger regularity property: for any $ \delta > 0 $, there exists a $ \varrho > 0 $, such that one can decompose
                \beq
		\label{eq:g decomposition 1} 
		g(\xv) = g_{<, \varrho}(\xv) + g_{>,\varrho}(\xv), 	\qquad		\omega^{-1/2} g_{<,\varrho} \in 	L^{\infty}(\R^d,\hh),	\quad	\omega^{-1/2} g_{>,\varrho} \in W^{-1,\infty}(\R^d,\hh),
	\eeq	
	with
	\beq
		\label{eq:g decomposition 2}
		\lf\|\omega^{-1/2}\lf( - \Delta \ri)^{-1/2} g_{>,\varrho}(\xv) \ri\|_{L^{\infty}(\R^d,\hh)} \leq \delta,
	\eeq
	{\it i.e.}, the function can be decomposed into a part which is in $ \hh $ for a.e. $ \xv $ and a rest whose (homogeneous) $ W^{-1,\infty}(\R^d, \hh) $ norm can be assumed to be arbitrarily small. Concretely, in the case of the polaron, this can be easily realized by writing
	\bdm
		g(\xv) = \one_{[0,\varrho]}(|\kv|) g(\xv) + \one_{[\varrho, + \infty)}(|\kv|) g(\xv) = : g_{<, \varrho}(\xv) + g_{>,\varrho}(\xv),
	\edm
	and taking $ \varrho $ large enough. Indeed, since $\omega=1$ for the polaron, one has
	\bdm
		\lf\| \omega^{-1/2}\lf|\kv\ri|^{-1} g_{>,\varrho}(\xv) \ri\|_{L^{\infty}(\R^d,L^2(\R^d))}^2 = \int_{|\kv| \geq \varrho} \diff \kv \: \frac{1}{|\kv|^{d+1}} = C \varrho^{-1} \xrightarrow[\varrho \to \infty]{} 0.
	\edm

	\begin{proposition}[KLMN]
		\label{pro:klmn}
		\mbox{}	\\
  		Let $ g $ be a function such that \eqref{eq:g decomposition 1} and \eqref{eq:g decomposition 2} are satisfied. Then $ Q_H $ is closed on $ \dom(\sqrt{-\Delta+U_+})\cap \dom(\sqrt{\mathrm{d}\Gamma(\omega)})$ and bounded from below. Therefore it is associated to a unique operator $H$ self-adjoint on
  $ \dom(H) \subset \dom(\sqrt{-\Delta+U_+})\cap \dom(\sqrt{\mathrm{d}\Gamma(\omega)})$ that is also bounded from below.
	\end{proposition}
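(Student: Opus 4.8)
The plan is to realise $H$ via the KLMN theorem, taking as reference operator the strictly positive self-adjoint $-\Delta+U_++\diff\Gamma(\omega)$, whose quadratic form is closed on $\dom(\sqrt{-\Delta+U_+})\cap\dom(\sqrt{\diff\Gamma(\omega)})$, and treating $Q_{U_{\ll}}+Q_A$ as a symmetric perturbation. All the bounds below I would first establish on the form core $C_0^{\infty}(\R^{dN})\cap\dom(\sqrt{\diff\Gamma(1)})\cap\dom(\sqrt{\diff\Gamma(\omega)})$ — on which, as recalled at the beginning of this Appendix, $Q_H$ splits as $Q_{-\Delta+U_+}+Q_{\diff\Gamma(\omega)}+Q_{U_{\ll}}+Q_A$ with each summand finite — and then extend them by density; symmetry of the perturbation is immediate, since $U_{\ll}$ is a real multiplication operator and each $A(g(\xv_j))$ is a symmetric field operator.

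The first ingredient is the elementary Fock-space bound $\|a(f)\Psi\|_{\fock}\leq\|\omega^{-1/2}f\|_{\hh}\,\|\sqrt{\diff\Gamma(\omega)}\Psi\|_{\fock}$, valid whenever $\omega^{-1/2}f\in\hh$ and uniform in $\eps<1$, which gives
\beq
	\label{eq:klmn form a}
	\lf| \meanlr{\Psi}{A(f)}{\Psi}_{\fock} \ri| \leq 2\lf\| \omega^{-1/2}f \ri\|_{\hh}\lf\| \sqrt{\diff\Gamma(\omega)}\Psi \ri\|_{\fock}\lf\| \Psi \ri\|_{\fock}.
\eeq
I would then split $g=g_{<,\varrho}+g_{>,\varrho}$ as in \eqref{eq:g decomposition 1}. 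For the low-frequency part $g_{<,\varrho}\in L^{\infty}(\R^d;\hh)$ (for the polaron $\int_{|\kv|\leq\varrho}|\kv|^{1-d}\diff\kv<\infty$), so \eqref{eq:klmn form a} together with Young's inequality makes $\sum_jA(g_{<,\varrho}(\xv_j))$ infinitesimally form-bounded with respect to $\diff\Gamma(\omega)$, namely $\bigl|\meanlr{\Psi}{\sum_jA(g_{<,\varrho}(\xv_j))}{\Psi}\bigr|\leq N\alpha_1\|\omega^{-1/2}g_{<,\varrho}\|_{L^{\infty}}\meanlr{\Psi}{\diff\Gamma(\omega)}{\Psi}+N\alpha_1^{-1}\|\omega^{-1/2}g_{<,\varrho}\|_{L^{\infty}}\|\Psi\|^2$ for every $\alpha_1>0$; this is the form counterpart of \eqref{eq:19}.

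For the high-frequency part I would exploit the commutator representation of Remark \ref{rem:classical limit polaron}: by \eqref{eq:g decomposition 1}--\eqref{eq:g decomposition 2} there is $\tilde{\gv}_{\varrho}\in L^{\infty}(\R^d;\hh\otimes\R^d)$ with $g_{>,\varrho}(\xv)=[-i\nabla_{\xv},\tilde{\gv}_{\varrho}(\xv)]$ and $\|\omega^{-1/2}\tilde{\gv}_{\varrho}\|_{L^{\infty}(\R^d;\hh\otimes\R^d)}\leq\delta$; concretely, for the polaron \eqref{eq:omega1}--\eqref{eq:g3}, one takes $\tilde{\gv}_{\varrho}(\xv)(\kv)=\one_{[\varrho,\infty)}(|\kv|)\,\kv\,|\kv|^{-\frac{d+3}{2}}e^{i\kv\cdot\xv}$, exactly as in \eqref{eq:polaron trick}, with $\|\tilde{\gv}_{\varrho}\|_{L^{\infty}(\R^d;\R^d)}^2=\int_{|\kv|\geq\varrho}|\kv|^{-(d+1)}\diff\kv=C\varrho^{-1}$. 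Setting $B_j:=A(\tilde{\gv}_{\varrho}(\xv_j))$, integrating by parts in $\xv_j$, and applying \eqref{eq:klmn form a} componentwise (with $\omega\equiv1$), I get $\bigl|\meanlr{\Psi}{\sum_jA(g_{>,\varrho}(\xv_j))}{\Psi}\bigr|\leq\sum_j2\|\nabla_{\xv_j}\Psi\|\,\|B_j\Psi\|\leq N\alpha_2\meanlr{\Psi}{-\Delta}{\Psi}+NC\alpha_2^{-1}\varrho^{-1}\bigl(\meanlr{\Psi}{\diff\Gamma(\omega)}{\Psi}+\|\Psi\|^2\bigr)$. Combining this with the low-frequency bound and choosing the small parameters in the order $\alpha_2$, then $\varrho$ (the freedom granted by \eqref{eq:g decomposition 2}), then $\alpha_1$, I would arrive at the estimate used in the proofs of Lemma~\ref{lemma:sa hheps} and Proposition~\ref{pro:bd tmu}: for every $\beta>0$ there is a finite $C_{\beta}$ with
\beq
	\label{eq:klmn bound}
	\bigg| \meanlr{\Psi}{\disp\sum_{j=1}^N A(g(\xv_j))}{\Psi} \bigg| \leq \beta\Big( \meanlr{\Psi}{-\Delta}{\Psi} + \meanlr{\Psi}{\diff\Gamma(\omega)}{\Psi} \Big) + C_{\beta}\lf\| \Psi \ri\|^2.
\eeq

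Finally I would combine \eqref{eq:klmn bound} with the Kato-smallness of $U_{\ll}$ (for any $\alpha_{\ll}>0$, $|\meanlr{\Psi}{U_{\ll}}{\Psi}|\leq\alpha_{\ll}\meanlr{\Psi}{-\Delta}{\Psi}+C_{\ll}\|\Psi\|^2$), and use $U_+\geq0$, $\diff\Gamma(\omega)\geq0$ to dominate both $\meanlr{\Psi}{-\Delta}{\Psi}$ and $\meanlr{\Psi}{\diff\Gamma(\omega)}{\Psi}$ by $\meanlr{\Psi}{-\Delta+U_++\diff\Gamma(\omega)}{\Psi}$; choosing $\beta$ and $\alpha_{\ll}$ so small that $2\beta+\alpha_{\ll}<1$, the perturbation $Q_{U_{\ll}}+Q_A$ has relative form bound strictly less than $1$ with respect to the reference operator. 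The KLMN theorem (\cite[Theorem~X.17]{RS2}) then yields that $Q_H$ is closed and bounded from below on $\dom(\sqrt{-\Delta+U_+})\cap\dom(\sqrt{\diff\Gamma(\omega)})$ and is the form of a unique self-adjoint, bounded-below operator $H$ whose domain lies in that form domain, which is the assertion. The step I expect to be the main obstacle is \eqref{eq:klmn bound}: it is the splitting at scale $\varrho$ together with the commutator trick that produces the factor $C\varrho^{-1}$ vanishing as $\varrho\to\infty$, and this mechanism — tied to the decay of the Fröhlich form factor $|\kv|^{-(d-1)/2}$ under $(-\Delta_{\xv})^{-1/2}$ — is precisely what makes the relative bound strictly below one; the rest is bookkeeping of the small parameters.
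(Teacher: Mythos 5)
Your proposal is correct and follows essentially the same route as the paper's: split $g$ at frequency $\varrho$, bound the low-frequency part with the $\omega^{-1/2}$-weighted annihilation estimate (Lemma \ref{lemma:est nelson}), handle the high-frequency part via the commutator/integration-by-parts trick so that \eqref{eq:g decomposition 2} makes its relative bound as small as desired, and then invoke KLMN together with the Kato-smallness of $U_{\ll}$. The only differences are in the bookkeeping of small parameters — the paper fixes $\varrho$ first so that the high-frequency relative bound equals $\tfrac{1}{4}$ and then takes the low-frequency Young parameter equal to $\tfrac{1}{4}$, whereas you introduce an extra Young parameter $\alpha_{2}$ and pick $\alpha_{2}$, then $\varrho$, then $\alpha_{1}$ — which does not affect the outcome.
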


	\begin{proof}
		By linearity of $ A(g) $ we can split the quadratic form into three pieces: $ Q_{H,<} $, $ Q_{H,>} $, and $Q_{H, \ll}$ with obvious meaning of the notation. Let us consider $ Q_{H,>} $ first. An easy computation yields
  		\begin{equation*}
   			 a(g_{>, \varrho}(\xv)) = \lf[ a\lf( \nabla_\xv  \tilde{g}_{>, \varrho}(\xv)   \ri), - \nabla_\xv \ri],
  		\end{equation*}
		with $ \tilde{g}_{>,\varrho}(\xv) : = \lf( - \Delta_{\xv} \ri)^{-1} g_{>,\varrho}(\xv) $. Therefore it follows that
  		\bmln{
      		\lf| Q_{H,>}[\Psi] \ri| \leq 2 \sum_{j=1}^N \lf| \braketr{- i \nabla_{\xv_j} \Psi}{ a\lf( \nabla_{\xv_j}  \tilde{g}_{>, \varrho}(\xv_j)   \ri) \Psi}_{L^2 \otimes \fock} \ri| \\
      		\leq 4 N \lf\|\omega^{-1/2}\lf( - \Delta \ri)^{-1/2} g_{>,\varrho}(\xv) \ri\|_{L^{\infty}(\R^d,\hh)}\lf\| \sqrt{\mathrm{d}\Gamma(\omega)} \Psi  \ri\| \lf\| \sqrt{- \Delta + U_+} \Psi \ri\| \\
      	\leq 4N \lf\|\omega^{-1/2}\lf( - \Delta \ri)^{-1/2} g_{>,\varrho}(\xv) \ri\|_{L^{\infty}(\R^d,\hh)} Q_{- \Delta + U_+ + \diff \Gamma(\omega)}[\Psi],
    		}
  		for any $ \Psi \in \dom(\sqrt{-\Delta+V_1})\cap \dom(\sqrt{\mathrm{d}\Gamma(\omega)})$, where we have used Cauchy-Schwarz inequality and \eqref{eq:est nelson}. Now by \eqref{eq:g decomposition 2}, it is possible to choose $ \varrho > 0$ big enough such that
  		\begin{equation}
                  \label{eq: rho}
    			\alpha_1(\varrho) : = 4N \lf\| \omega^{-1/2}\lf( - \Delta \ri)^{-1/2} g_{>,\varrho}(\xv) \ri\|_{L^{\infty}(\R^d,\hh)}= \tx\frac{1}{4}.
  		\end{equation}
  		Now let us turn the attention to $ Q_{H,<} $, with $\varrho$ fixed by condition \eqref{eq: rho}. Using again the Cauchy-Schwarz inequality and \eqref{eq:est nelson}, we obtain
  		\bmln{
  			\lf| Q_{H,<}[\Psi] \ri| \leq \alpha
     \meanlr{\Psi}{\diff \Gamma(\omega)}{\Psi}_{L^2 \otimes \fock} + N^{2} \alpha^{-1}\lf\| \omega^{-1/2} g_{<,\varrho}(\xv) \ri\|_{L^{\infty}(\R^d,\hh)} \lf\| \Psi \ri\|_{L^2\otimes\fock}^2 \\
      		\leq \alpha\, Q_{- \Delta + U_+ + \diff \Gamma(\omega)}[\Psi]+ N^{2} \alpha^{-1}\lf\| \omega^{-1/2} g_{<,\varrho}(\xv) \ri\|_{L^{\infty}(\R^d,\hh)} \lf\| \Psi \ri\|_{L^2\otimes\fock}^2
   		}
  		for any $\psi\in \dom(\sqrt{-\Delta+U_+})\cap \dom(\sqrt{\mathrm{d}\Gamma(\omega)})$ and $\alpha>0$. Choosing $\alpha=  \frac{1}{4}$ we obtain
  		\bml{
    			\label{eq:klmn bound}
      		\lf| Q_{H,>}[\Psi]+ Q_{H,<}[\Psi] \ri|  \leq \tfrac{1}{2}\, Q_{-\Delta + U_+ + \diff \Gamma(\omega)}[\Psi] \\
      		+ 3N^{2} \lf\| \omega^{-1/2} g_{<,\varrho}(\xv) \ri\|_{L^{\infty}(\R^d,\hh)}  \lf\| \Psi \ri\|_{L^2\otimes\fock}^2,
    		}
    		for a suitable $ \varrho > 0 $ such that \eqref{eq: rho} is satisfied. Now since $ U_{\ll} $ is infinitesimally form-bounded w.r.t. $-\Delta$, it follows that there exists a constant $C_{\ll}>0$ such that for any $ \Psi \in \dom(\sqrt{-\Delta+V_1})\cap \dom(\sqrt{\mathrm{d}\Gamma(\omega)})$,
                \begin{equation*}
                  \lf| Q_{H, \ll}[\Psi]\ri|\leq \tfrac{1}{3}Q_{-\Delta + U_+ + \diff \Gamma(\omega)}[\Psi] + C_{\ll}\lf\| \Psi \ri\|_{L^2\otimes\fock}^2\; .
                \end{equation*}
                The result then follows from KLMN Theorem.
	\end{proof}
	
	We conclude the Appendix with a technical estimate used before in the proofs of both Propositions \ref{pro:kato-rellich} and \ref{pro:klmn}:

	\begin{lemma}
		\label{lemma:est nelson}
		\mbox{}	\\
		For any  function $g$ such that $\omega^{-1/2} g \in L^{\infty} \lf(\mathbb{R}^{d};\mathfrak{H} \ri) $, we have
		\begin{equation}
  			\label{eq:est nelson}
  			\lf\| a(g(\xv)) \Psi \ri\|_{L^2(\R^{dN}) \otimes \fock(\hh)} \leq \lf\| \omega^{-1/2} g \ri\|_{L^{\infty}(\R^d; \hh)} \lf\| \sqrt{\diff \Gamma(\omega)} \Psi \ri\|_{L^2(\R^{dN}) \otimes \fock(\hh)}.
		\end{equation}
	\end{lemma}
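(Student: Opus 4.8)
The plan is to reduce \eqref{eq:est nelson} to a ``fiberwise'' estimate in the Fock space alone and then integrate over the particle coordinates.

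\emph{Step 1: pointwise bound in $\fock(\hh)$.} First I would show that for every $f\in\hh$ with $\omega^{-1/2}f\in\hh$ and every $\Phi\in\dom\bigl(\sqrt{\diff\Gamma(\omega)}\bigr)$,
\[
  \lf\| a(f)\Phi \ri\|_{\fock(\hh)} \leq \lf\| \omega^{-1/2} f \ri\|_{\hh}\,\lf\| \sqrt{\diff\Gamma(\omega)}\,\Phi \ri\|_{\fock(\hh)}.
\]
Write $\Phi=(\Phi_n)_{n\ge 0}$ with $\Phi_n\in S_n\hh^{\otimes n}$, and let $\Phi_{n+1}(f)\in S_n\hh^{\otimes n}$ denote the contraction of $\Phi_{n+1}$ against $f$ in its first tensor factor. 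With the $\eps$-scaled CCR one has $\bigl(a(f)\Phi\bigr)_n=\sqrt{\eps(n+1)}\,\Phi_{n+1}(f)$, hence $\lf\| a(f)\Phi \ri\|_{\fock}^2=\eps\sum_{n\ge 0}(n+1)\,\lf\| \Phi_{n+1}(f) \ri\|_{\hh^{\otimes n}}^2$. Applying Cauchy--Schwarz in the first variable with weight $\omega$, namely $\langle f,\psi\rangle_{\hh}=\langle \omega^{-1/2}f,\omega^{1/2}\psi\rangle_{\hh}$, gives $\lf\| \Phi_{n+1}(f) \ri\|^2\le\lf\| \omega^{-1/2}f \ri\|_{\hh}^2\,\langle\Phi_{n+1},\omega^{(1)}\Phi_{n+1}\rangle$, where $\omega^{(1)}$ is multiplication by $\omega$ on the first factor. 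The permutation symmetry of $\Phi_{n+1}$ yields $\langle\Phi_{n+1},\omega^{(1)}\Phi_{n+1}\rangle=\frac{1}{\eps(n+1)}\langle\Phi_{n+1},\diff\Gamma(\omega)\Phi_{n+1}\rangle$, so summing over $n$ and bounding $\sum_{n\ge 0}\langle\Phi_{n+1},\diff\Gamma(\omega)\Phi_{n+1}\rangle\le\langle\Phi,\diff\Gamma(\omega)\Phi\rangle$ gives the claim; the two powers of $\eps$ cancel, so the final inequality is $\eps$-independent, as in the statement.

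\emph{Step 2: integration over the particle variables.} Treating $\xv$ as a parameter, identify $L^2(\R^{dN})\otimes\fock(\hh)$ with $L^2\bigl(\R^{dN};\fock(\hh)\bigr)$, so that $a(g(\xv))$ acts fiberwise, $\bigl(a(g(\xv))\Psi\bigr)(\xv)=a\bigl(g(\xv)\bigr)\bigl[\Psi(\xv)\bigr]$ for a.e.\ $\xv$. Since $\omega^{-1/2}g\in L^{\infty}(\R^{d};\hh)$, the vector $g(\xv)$ satisfies the hypotheses of Step~1 for a.e.\ $\xv$, with $\lf\| \omega^{-1/2}g(\xv) \ri\|_{\hh}\le\lf\| \omega^{-1/2}g \ri\|_{L^{\infty}(\R^{d};\hh)}$. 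Applying the pointwise bound fiberwise and integrating (Tonelli, the integrands being nonnegative),
\[
  \lf\| a(g(\xv))\Psi \ri\|^2=\int\lf\| a\bigl(g(\xv)\bigr)\Psi(\xv) \ri\|_{\fock}^2\,\diff\xv\le\lf\| \omega^{-1/2}g \ri\|_{L^{\infty}}^2\int\lf\| \sqrt{\diff\Gamma(\omega)}\,\Psi(\xv) \ri\|_{\fock}^2\,\diff\xv,
\]
and the last integral equals $\lf\| \sqrt{\diff\Gamma(\omega)}\,\Psi \ri\|_{L^2(\R^{dN})\otimes\fock(\hh)}^2$, which is precisely \eqref{eq:est nelson}.

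\emph{Main obstacle.} Nothing here is genuinely hard; the only steps requiring a little care are the symmetrization identity in Step~1 (where the bosonic symmetry of $\fock(\hh)$ enters) and the convention that $\omega^{-1/2}g\in L^{\infty}$ forces $g$ to vanish a.e.\ on $\{\omega=0\}$, so that the weighted Cauchy--Schwarz step is to be read on $\{\omega>0\}$; measurability of $\xv\mapsto\Psi(\xv)$ and $\xv\mapsto g(\xv)$ makes the fiberwise argument and the use of Tonelli's theorem legitimate.
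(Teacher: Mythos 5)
Your proposal is correct and follows the route the paper intends: the paper merely says the result is ``obtained via Cauchy inequality as in the derivation of~\eqref{eq:fock est form}'' and omits details, and your Step~1 is precisely that sector-by-sector Cauchy--Schwarz argument in $\fock(\hh)$, adapted with the $\omega^{-1/2}/\omega^{1/2}$ weight so that the symmetrization identity yields $\diff\Gamma(\omega)$ instead of $\diff\Gamma(1)$, while Step~2 is the routine fiberwise integration over $\xv$. Your observations about the cancellation of the $\eps$ factors and about $g$ vanishing a.e.\ on $\{\omega=0\}$ are correct and worth keeping.
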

	
	\begin{proof}
		The result is obtained via Cauchy inequality as in the derivation of \eqref{eq:fock est form}. We omit the details.	
	\end{proof}

\newcommand{\etalchar}[1]{$^{#1}$}
\providecommand{\bysame}{\leavevmode\hbox to3em{\hrulefill}\thinspace}
\providecommand{\MR}{\relax\ifhmode\unskip\space\fi MR }
% \MRhref is called by the amsart/book/proc definition of \MR.
\providecommand{\MRhref}[2]{%
  \href{http://www.ams.org/mathscinet-getitem?mr=#1}{#2}
}
\providecommand{\href}[2]{#2}

\end{document}